\documentclass[12pt,a4paper]{article}

\usepackage[T1]{fontenc}
\usepackage[utf8]{inputenc}
\usepackage{float}
\usepackage{graphicx}
\usepackage{subcaption} 
\usepackage[thinlines]{easytable}
\usepackage{hyperref}
\usepackage{wrapfig}
\usepackage{dingbat}
\usepackage{color}
\usepackage{refcount}
\usepackage[table]{xcolor}

\usepackage[toc,page]{appendix}
\usepackage{bbm}
\usepackage{enumitem}
\usepackage{amsmath,amssymb,amsthm}
\usepackage{mathtools}
\usepackage[english]{babel}
\usepackage{natbib} 
\usepackage{xcolor}
\usepackage{booktabs}
\usepackage{multirow}

\theoremstyle{definition}
\newtheorem{theo}{Theorem}[section]
\newtheorem{lem}[theo]{Lemma}

\newtheorem{rem}[theo]{Remark}
\newtheorem{ex}[theo]{Example}
\numberwithin{equation}{section}

\usepackage[plain,noend]{algorithm2e}

\newcommand{\R}{\mathbb{R}}

\newcommand{\E}{\mathbb{E}}
\newcommand{\N}{\mathbb{N}}
\newcommand{\Z}{\mathbb{Z}}

\newcommand{\C}{\mathbb{C}}
\newcommand{\PR}{\mathbb{P}}

\usepackage{bibunits}
\usepackage{algpseudocode}

\defaultbibliographystyle{apalike}
\defaultbibliography{refs}

\begin{document}

\begin{bibunit}

\title{Simultaneous Inference for Partially\\
Observed Functional Time Series}

\author{Patrick Bastian and Tim Kutta}

\maketitle

\begin{abstract}
Functional data analysis (FDA) provides statistical methods for analyzing samples of time-continuous stochastic processes. Measurements often arise in the form of sensor data for a key scientific variable. The practical problem of irregular sensor disruptions has fostered interest in analyzing partially observed random functions.  
Specifically, this paper is motivated by a time series of intermittently missing pollution data with dependence along pollution paths and missingness patterns. To allow statistical analysis, we develop the first inference methods for dependent, partially observed functional time series. Existing methods were not appropriate for this task, because they heavily rely on the independence of the data functions.  
Mathematically, we model data on the space of bounded functions equipped with the supremum norm. This allows simultaneous inference across the entire functional domain, including simultaneous confidence bands -- something existing Hilbert-space-based methods cannot provide. To study non-stationary trends along the time series, we extend state-of-the-art multiscale inference methods (originally developed for scalar data) to partially observed functions. The key application of the latter methods is testing for excessive pollution levels in inner cities.  
Our approach combines state-of-the-art Gaussian approximations with stochastic process theory. Interestingly, it also improves existing results for fully observed functional time series by avoiding a functional CLT.

\end{abstract}

\section{Introduction}

In this article, we develop new statistical methodology for Functional Data Analysis (FDA). Functional data typically arise when a scientific variable is densely observed over a temporal or spatial domain, and its realizations can be interpreted as a coherent curve. Inference for an entire sample of such curves, say $X_1(\cdot),\cdots,X_N(\cdot)$, is the subject of FDA, and we refer to the classical monographs of \cite{bosq:2000} and \cite{HKbook} for an exposition. Examples of functional data include air quality data ($X_i(\cdot)$ is a daily profile of pollutant concentration), medicine ($X_i(\cdot)$ refers to data from a patient's health tracking device) or movement data ($X_i(\cdot)$ refers to a GPS-tracked path). A common problem in the aforementioned examples, and for many more in FDA, is the missingness of values across parts of the domain.  This missingness is often caused by downtime in sensors, due to technical defects or human errors. If missingness is rare and short-lived, imputation may be adequate. However, if there is regular missingness across the data, this needs to be accounted for explicitly to ensure valid inference. 

Various works have developed methodology for partially observed functional data.  An important early work by \cite{kraus:2015} focuses on estimation of principal components and functional parameters when data are missing.
Subsequently, \cite{Kraus:2019} develops tests for the equality of many mean functions across different populations, similar to a functional ANOVA. Missingness in function-on-function regression is studied by \cite{Park:Han:Simpson:2023}, who develop tests for linear hypotheses. Robust analyses are extended to the missing data regime by \cite{Park:Chen:Simpson:2022}. 

Goodness-of-fit tests for the model assumption of missingness at random are developed by \cite{Ofner:Hoermann:Kraus:Liebl:2025} and methods that are robust against systematic missingness are studied by \cite{Liebl:Rameseder:2019}. 
Recently,  \cite{Hudecova:Kirch:2025} have developed stationarity tests for the mean function against different change point alternatives. In their model, changes may occur gradually or abruptly and tests are based on permutation statistics. Specific, more restrictive  missingness schemes have been explored under the label of "functional snippets" (\cite{lin:wang:zhong:2021,lin:wang:2022}). This term refers to functional data, observed on a random interval inside the domain. A challenge in snippet models is covariance estimation, because certain lags in the functional domain are very rarely observed.  

The above literature models the functional observations and the missingness schemes as independent. However, in most time series of sensor data, it is obvious that there exists dependence in both respects.
For example, missingness in one curve $X_i(\cdot)$ may extend into the next curve  $X_{i+1}(\cdot)$, because the sensor outage on day $i$ extends into day $(i+1)$. Conversely, if $X_i(\cdot)$ and $X_{i+1}(\cdot)$ happen to be fully observed, sensor measurements during the $i$th night are continuous with those in the $(i+1)$th morning, again implying dependence.
Such rather natural cases are not accounted for in the literature.  Therefore, our first main contribution is that, to the best of our knowledge, \textit{we develop the first inference methods for partially observed functional data with serial dependence. Dependence may exist both in the functional observations and in the missingness scheme.} As an example, we later study sensor data for air pollutants, where dependence along functions and missingness scheme is self-evident.  Mathematically, dealing with dependence is non-trivial, because many of our key objects are non-measurable in the presence of missingness.

A second aspect of this work is that we treat missingness as a pointwise phenomenon; at each point $t$ in the functional domain data either are or are not available. This may sound banal, yet existing work typically embeds partially observed data into the Hilbert space of square integrable functions $L^2$. Here, pointwise evaluations are  not well-defined, which makes this embedding from our perspective somewhat unnatural.
As an alternative, in this work, we model partially observed data on the space of bounded functions $\ell^\infty$. This approach is not only mathematically more natural, because pointwise evaluations exist; it also allows for our second major contribution, namely \textit{the development of the first simultaneous inference methods for partially observed functional data. In particular, we present simultaneous confidence bands (SCBs) for the mean parameter.} Simultaneous inference is, again, tricky because of measurability issues related to those known in stochastic process theory. To address this problem, our approach combines the classical idea of confidence interval interpolation  with recent advances in high-dimensional statistics (à la \cite{chernozhukov:chetverikov:kato:2015}). The resulting inference methods are a substantial improvement on SCBs even for fully observed functional time series, as they avoid the use of function-valued CLTs, while still leading to a  band diameter of $\mathcal{O}_{\mathbb{P}}(N^{-1/2})$. After completing this work, we became aware of the independently developed and parallel work of \cite{Sun:Cai:Hu:2026}, who also construct confidence bands for partially observed functional data. Compared to the present work their theoretical development requires stricter assumptions. They do not allow for dependence along the time series of missingness indicators and also implicitly require the data to be Lipschitz in $L^2$. Additionally they also provide no characterization of when constructing confidence bands with diameter  $\mathcal{O}_{\mathbb{P}}(N^{-1/2})$ is possible.

\textit{Our third contribution is the development of inference for non-stationary time series of partially observed data. Our probabilistic tools have use for stationarity testing, multiscale change point localization and detection of threshold exceedance.}
Our main focus is the last point, which is directly motivated by the monitoring of pollution data, and the risk of reaching dangerous pollutant concentrations. Mathematically, our approach relies on a generalization of multiscale results, recently advanced for scalar data by \cite{koehne:mies:2025}.

The remainder of this paper is structured as follows: In Section \ref{sec:math}, we present our statistical methodology and our main theoretical results. In Section \ref{sec:finite:sample}, we investigate finite sample properties of our new methods by virtue of simulation studies, comparison to other methods and an application to a time series of air pollution data. Mathematical proofs, technical details and further data analyses are collected in the Supplement. \\

\subsection{Notation} \label{sec:not}
Throughout this work, we denote by $\ell^{\infty}[0,1]$ the space of bounded functions on the unit interval, which is equipped with the supremum norm $|f|_\infty = \sup_{t \in [0,1]}|f(t)|$. The subspace of continuous functions is denoted by $\mathcal{C}[0,1]$. The space $L^p[0,1]$ refers to the measurable functions on $[0,1]$, equipped with the canonical $L^p$-norm denoted by $|f|_p$; notice that on this space functions that differ on a null set are identified with each other. 
We distinguish between $L^p$-norms (for functions) and $p$-norms (for random variables) to make our results clearer. For a real-valued random variable $Z$, we denote the $p$-norm by $\|Z\|_p$.

\section{Mathematical Theory} \label{sec:math}

In this section, we present our new statistical methodology and our main theoretical results. First, we consider the construction of simultaneous confidence bands, in Section \ref{sec:SCB}. This problem is of interest in many applications, and it also helps to familiarize the reader with our new inference approach for partially observed functional data. Second, in Section \ref{sec:loc}, we substantially extend our results to encompass local inference for non-stationary time series. Specifically, we validate the convergence of strongly weighted multiscale statistics that have use in stationarity testing, multiple change point detection and testing for the exceedance of thresholds. As we discuss along the way, our results extend current knowledge even for fully observed functional data.

\subsection{Simultaneous confidence bands for partially observed functional data}\label{sec:SCB}

Consider a sample of data functions $X_1(\cdot), \ldots, X_N(\cdot) \in \mathcal{C}[0,1]$, where each function $X_i(\cdot)$ is only observed on a random part of its domain denoted by $\mathcal{O}_i \subset [0,1]$. For mathematical convenience, we describe the observation scheme using indicators 
\begin{align*}
    O_i(\cdot):=\mathbb{I}\{\cdot \in \mathcal{O}_i\} \,\, \textnormal{and denote the $i$th observed function by} \,\, Y_i(\cdot):= O_i(\cdot)X_i(\cdot).
\end{align*}
Under stationarity and imposing certain moment conditions on the $X_i$ specified below, we can decompose each $X_i(\cdot)$ into a mean and a noise term as follows
\begin{align} \label{e:mod:1}
    X_i(t) = \mu(t) + \varepsilon_i(t), \quad 1 \le i \le N.
\end{align}
Here $\mu(\cdot):= \mathbb{E}X_i(\cdot)$ is the mean function, which is assumed to be  identical for all $i$ in this section, and  $\varepsilon_i(\cdot)= X_i(\cdot)-\mu(\cdot)$ is a centered noise. We also define $\pi(\cdot) =\mathbb{E}O_i(\cdot)$, which describes the probability of sampling an observation at location $t$, and is also independent of $i$ by our assumptions below.
The aim of statistical inference in this section is to construct a simultaneous confidence band (SCB) for the entire mean function $\mu(\cdot)$, at some asymptotically controlled coverage level $(1-\alpha)$. The approach should account for serial dependence along the functions $X_i(\cdot)$ and the observational pattern $O_i(\cdot)$. 

The approach described below is quite natural. It rests on the construction of a large number of simultaneous confidence intervals for the mean $\mu(t_j)$ over grid points $t_1,\ldots, t_p$. Subsequently the edges of these confidence intervals are interpolated, yielding a uniform confidence band across all $t$. Simple as this method may appear, the rigorous mathematical formulation and derivation turn out to be  quite challenging. Challenges include issues of measurability, as well as the formal proof that the confidence band is of rate-optimal width $\mathcal{O}_{\mathbb{P}}(N^{-1/2})$, implying that there are no  extra $\log$-factors resulting from the construction of the $p$ individual confidence intervals. 

As a first step to formalize our problem, we specify our data generating mechanism. We therefore invoke \textit{physical dependence}, a concept introduced by \cite{wu:2005}. In the following, we assume that there exists a representation
	\begin{equation}
    \label{eq:filter:formalism}
		(\varepsilon_{i} (t), O_i(t))  =   (G_1(t,\mathcal{F}_i),G_2(t,\mathcal{F}_i) )~,~~i=1, \ldots , N
	\end{equation}
	where $\mathcal{F}_i:=( \ldots , \eta_{i-1},\eta_i)$, $(\eta_i)_{i\in \Z} $ is a sequence of independent identically distributed
	random variables in some measurable space $\mathcal S$ 
	and $G: [0,1] \times {\mathcal S}^{\Z } \to \R^2$ denotes a measurable filter such that $G=(G_1(\cdot,\mathcal F_i), G_2(\cdot,\mathcal F_i))$ forms a bivariate sequence whose coordinates are independent.
Next,  denoting by $(\eta'_i)_{i\in \mathbb N } $  an independent copy of $(\eta_i)_{i\in \mathbb Z}$, we define $\mathcal{F}'_i:=( \ldots , \eta_{-1},\eta_0', \eta_{1}, \ldots ,  \eta_i)$ and therewith the version
	\begin{equation*}
		(\varepsilon_{i}' (t), O_i'(t))  =   (G_1(t,\mathcal{F}_i'),G_2(t,\mathcal{F}_i') )~,~~i\ge 1.
	\end{equation*}
We may now specify the mathematical assumptions for our subsequent analysis:
    \begin{enumerate}
        \item[(A)] (Moments) It holds that $\E[|\varepsilon_1(\cdot)|_\infty^4]< \infty$.
       \item[(B)] (Continuity) For some constant $\vartheta \in (0,1]$ the functions $\mu(\cdot)$ and $\pi(\cdot)$ are $\vartheta $-Hölder continuous. Moreover $\pi(t)>0$ for all $t \in [0,1]$. 
        \item[(C)] (Weak dependence of functions) Define the dependence measures
		\begin{align} \label{2.7}
			\delta_\varepsilon(i):=\sup_{t\in [0,1]}\|\varepsilon_i(t)-\varepsilon_i'(t)\|_{3}, \qquad \delta_O( i):=\sup_{t\in [0,1]}\|O_i(t)-O_i'(t)\|_{3}.
		\end{align}
		There exists a constant  $\chi\in(0,1)$ such that 
			$\delta_{\varepsilon}(i), \delta_{O}(i)=\mathcal{O}(\chi^i)~$ for $i\geq 1$ .
		   \item[(D)] (Weak dependence of increments) Define $H_1(i,t,s)=\varepsilon_i(t)-\varepsilon_i(s)$, $H_1'(i,t,s)=\varepsilon_i'(t)-\varepsilon_i'(s)$ as well as $H_2(i,t,s)=O_i(t)-O_i(s)$, $H_2'(i,t,s)=O_i'(t)-O_i'(s)$ and the dependence measure
        \begin{align*}
            \tau(i):=\sup_{s,t\in [0,1], j \in \{1,2\}}\frac{\|H_j(i,t,s)-H_j'(i,t,s)\|_2}{|t-s|^\vartheta }.
        \end{align*}
        It holds that $  \tau(i)=\mathcal{O}(\chi^i)$ for $i\geq 1$.
		\item[(E)]  (Long-run variances) We define the following long-run covariance 
		\begin{align} \label{e:lrv}
			\sigma_\varepsilon(t,s):=\sum_{k=-\infty}^\infty \C[\varepsilon_0(t),\varepsilon_k(s)]
		\end{align}
     and assume that $\sigma_\varepsilon^2(t):=\sigma_\varepsilon(t,t)$ is bounded away from $0$, uniformly over $t$. 
	\end{enumerate}

\begin{rem} (Discussion of Assumptions)
The construction of our data in \eqref{eq:filter:formalism} implies, in some appropriate sense, strict stationarity. The formulation is handy, because we do not require measurability of the entire process $\{O_i(t): t \in [0,1]\}$ in $\ell^\infty[0,1]$, which it typically fails to satisfy. This is a clear advantage over other concepts to describe weakly dependent time series, such as strong mixing.
Assumption (A) requires uniform moments of a small order for noise processes. (B) implies, in a quite weak sense, uniform continuity of the mean function of $X_i$ and the missingness scheme $O_i$. It is possible to weaken this condition further still, using more general Hölder spaces, but we do not deem this generalization worth pursuing  here; in the literature, parameter functions are often assumed to be quite smooth. (C) and (D) imply weak dependence of individual  evaluations and of increments, and such assumptions can be validated in  many time series models (see e.g. \cite{zhang:cheng:2018}, \cite{Dette:Wu:2026}). 
 Finally, 
assumption (E) requires  a uniformly positive long-run variance (notice that convergence of the sum follows from our weak dependence conditions). The uniform positivity is an artifact from our use of high-dimensional Gaussian approximations. There, assuming a positive variance, bounded away from $0$ across all components is standard, and it thus spills over into our theory. We point out that in high-dimensional theory it is well known that this uniform positivity assumption may be dropped at the expense of more laborious computations. Thus, we expect that it can also be dropped in our regime. However, we have not focused further on this possible extension. 
 
\end{rem}

\begin{ex} \label{rem:missing}(Missingness schemes) 
We briefly discuss possible missingness structures $(O_i)_{i \in \N}$ that satisfy the above assumptions. Existing literature focuses on independent missingness schemes, i.e. the case
where $O_i$ only depends on the current innovation $\eta_i$. A general setting of this type is described by \cite{Park:Chen:Simpson:2022}, and the examples discussed there  (see condition M2 in that paper) are covered by our theory. 

As an illustration, consider the following mathematical toy model, where points  inside an interval $[U^{(i)}_{(1)},U^{(i)}_{(2)}]$ are missing. Here, $U^{(i)}_{(1)},U^{(i)}_{(2)}$ are the order statistics of two independent uniform random variables $U^{(i)}_{1},U^{(i)}_{2}$ on $[0,1]$. In this case it is easy to show that the expected number of observations $\pi(t)$ is everywhere positive and Lipschitz in $t$. We also notice that, even in this simplistic model, the functions $O_i(\cdot)$ are not measurable in $\ell^\infty[0,1]$, underlining the need to evade such requirements in a general theory. 

While the above example gives a clean mathematical illustration, it seems too simple to capture any useful scenario, even in the abstract. To obtain a model that is still highly stylized but at least plausibly approximates reality, we turn to dependent time series. We note that the example we present is, to the best of our knowledge, not covered by existing theory and in particular not by  \cite{Sun:Cai:Hu:2026}, who allow the functional data itself to be dependent but still require $O_i$ to be an iid sequence.

When considering dependent missingness schemes, the simplest meaningful example is perhaps a double waiting time model. More concretely, suppose that data are observed from a sensor on a time interval $[0,N]$ where the data from the interval $[(i-1),i]$ correspond to the function $X_i$ (for an illustration see Figure \ref{Fig:sens}). For example, there might be $N$ days, with each interval of unit length representing data from a single day. It is natural to use an exponential waiting time model for determining the time until a sensor failure/repair. The waiting times are assumed to be homogeneous, and if sensors are operational at time $t \in [0,N]$, the distribution of the time until failure is given by a shifted exponential $t+Exp(\lambda_1)$. Conversely, if at time $t$ there is a sensor outage, the time until sensor repair is exponential with distribution $t+Exp(\lambda_2)$. In Figure \ref{Fig:sens}, failures are indicated by red shaded regions. 
   \begin{figure}[h] 
  \centering
  \includegraphics[width=1\textwidth]{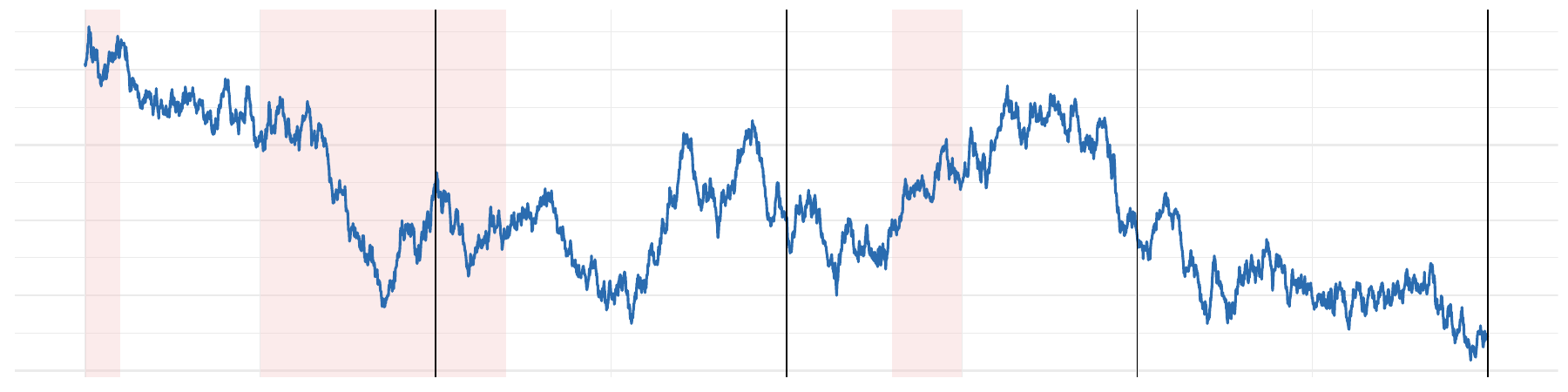}
  \caption{\label{Fig:sens} Stochastic process $X$ that is partitioned into $X_1,X_2,X_3,X_4$. Missing data are indicated by a red shaded region. The missingness scheme is generated by alternating exponential distributions that determine the duration of a detector in the operational and failure state. 
  }
  \label{fig:missingness}
\end{figure}
Carving up data along intervals gives us data functions $X_i(\cdot)$ and missingness variables $O_i(\cdot)$. We now explain how this data structure can be fitted into model \eqref{eq:filter:formalism}, generalizing it quite substantially in the process.
 Let $\eta_i=(E_{i,j})_{j}\in\R^\infty$ where $E_{i,j}\sim \text{Exp}(\lambda_{j})
 $ are independent exponentially distributed with parameters $\lambda_{j}$ bounded away from 0 and infinity. We define  sums of waiting times as
    \begin{align}
        S_{i,j}=\sum_{k=1}^jE_{i,2k}\quad, S_{i,0}=0, \qquad 
        T_{i,j}=\sum_{k=1}^jE_{i,2k+1}\quad, T_{i,0}=0
    \end{align}
    where $S_{i,j}$ describes the sum of the first $j$ waiting times on day $i$ when the previous day ended on $O_{i-1}(1)=1$ and $T_{i,j}$ describes them when it did not. Then, we may describe the missingness scheme by
    \begin{align}
    \label{eq:defin:waiting:missing}
        O_i(t)=\begin{cases}
            1 \quad  S_{i,2j}\leq t \leq S_{i,2j+1},& O_{i-1}(1)=1\\
            0 \quad  S_{i,2j+1}\leq t \leq S_{i,2j+2},& O_{i-1}(1)=1\\
            0 \quad  T_{i,2j}\leq t \leq T_{i,2j+1},& O_{i-1}(1)=0\\
            1 \quad  T_{i,2j+1}\leq t \leq T_{i,2j+2},& O_{i-1}(1)=0~.
        \end{cases}
    \end{align}
    The general model facilitates accounting for different waiting times, depending on how many failures occurred on each day and on how the previous day ended. To recover the simpler double-waiting-time-model we may choose, for any $i$ and $k$, $E_{i,2k}\sim \textnormal{Exp}(\lambda_1)$ when $2k$ is not divisible by $4$. Otherwise it should follow $\textnormal{Exp}(\lambda_2)$. Similarly $E_{i,2k+1} \sim \textnormal{Exp}(\lambda_2)$ when $2k$ is not divisible by $4$ and it should follow $\textnormal{Exp}(\lambda_1)$ otherwise.  This approach makes use of the fact that exponential distributions are memoryless.  The proof that the construction \eqref{eq:defin:waiting:missing} satisfies our main assumptions is non-trivial and given in the Supplement. 
\end{ex}

We now come to the development of our statistical approach to construct an SCB for the function $\mu(\cdot)$. As a first step, we define the empirical mean function
\begin{align} \label{e:def:pi}
   \hat \mu(t):=\frac{1}{\hat N(t)}\sum_{i=1}^N X_i(t)O_i(t), \quad \textnormal{where} \quad   \hat N(t):=\sum_{i=1}^NO_i(t).
\end{align}
$\hat \mu(t)$ is simply the pointwise empirical mean, using all data available at argument $t$; if no data are available $\hat \mu(t)$ is interpreted as $0$. Thus defined, $\hat \mu(\cdot)$ is a (not necessarily measurable) random element of $\ell^\infty[0,1]$. As an intermediate step, we aim at constructing confidence intervals for $\mu(t_j)$, for a large number of grid points $t_j, 1 \le j \le p$. In the following, we will assume that the grid points are placed such that $t_j < t_{j+1}$ for all $j$ and that
\[
\max_{1 \le j \le p-1}|t_{j+1} -t_j|\le C/p
\]
for some constant $C \ge 1$. This ensures that the mesh width is uniformly decreasing in $p$, and the standard choice is obviously the evenly spaced grid.  
Now, we define the maximum deviation
\begin{align} \label{e:hat:T}
\widehat{T}_N := \sqrt{N} \max_{1 \le j \le p} |\hat \mu(t_j)-\mu(t_j)|.
\end{align}
In our proofs, we demonstrate that $\widehat{T}_N $ can be approximated by the maximum of $p$ Gaussians as long as $p$ is polynomially growing in $N$. We notice that by a standard technical argument it is also possible to divide by the long-run variance for each $t_j$ and thus get variance-adapted confidence regions in our subsequent results. We will discuss these adaptations in our experiments, but omit their further discussion here for the sake of clarity.

In itself, the proof is non-trivial, using the high-dimensional approximations by \cite{zhang:cheng:2018}. The really interesting result is, however, that as $N, p \to \infty$ the statistic $\widehat{T}_N$ converges to the maximum of a tight Gaussian process $\{W(t): t \in [0,1]\}$. In particular, the limit is independent of the specific choice of the grid points $t_j$ or their number $p$, as long as $p$ is not growing too fast relative to $N$. This is a pleasant result, because it entails that all of the individual $p$ confidence intervals $ \hat C^\pm(t_j)$, constructed below, are of width $\mathcal{O}(N^{-1/2})$.  
It is also somewhat surprising, because we can easily construct examples where (under the conditions of this paper) we have divergence of the uniform distance without grid
\[
\widehat{T}_N^\infty:=\sqrt{N} \sup_{t \in [0,1]} |\hat \mu(t)-\mu(t)| \overset{d}{\to} \infty
\]
and in particular weak converge $\widehat{T}_N^\infty \overset{d}{\to}|W(\cdot)|_\infty$ does not hold. In this sense, our below approach of using confidence intervals over grids and interpolating them is genuinely weaker than trying to show convergence of the uniform distance $\widehat{T}_N^\infty$. A general discussion of this fact is given in Remark \ref{rem:SCBs}. 

The limiting process $W(\cdot)$ is a centered Gaussian and characterized by the covariance structure 
       \begin{align} \label{e:def:cov:W}
        \C[W(t),W(s)]=(\pi(t)\pi(s))^{-1}\sigma_{\textbf{Y}}(t,s), \qquad \sigma_{\textbf{Y}}(t,s):= \sum_{k=-\infty}^\infty \C[Y_0(t),Y_k(s)].          
       \end{align}
As a consequence of the Gaussian approximation $\widehat{T}_N \overset{d}{\to} |W(\cdot)|_\infty$, and denoting the $(1-\alpha)$-quantile of $|W(\cdot)|_\infty$ by $q_{1-\alpha}$, we obtain the confidence intervals 
    \begin{align} \label{e:int:sing}
        \hat C^\pm(t_j)&:= \hat \mu(t_j)\pm q_{1-\alpha}N^{-1/2}.
    \end{align}
    By construction, these intervals asymptotically hold a level of $(1-\alpha)$ simultaneously for the parameters $\mu(t_1),...,\mu(t_p)$. 
    Interpolating the edges of these confidence  intervals then yields a  confidence interval for any $t \in [0,1]$ as
   \begin{align} \label{e:int:SCB}
      \hat C^\pm(t):=\frac{t_{j+1}-t}{t_{j+1}-t_j}\hat C^\pm(t_j)+\frac{t-t_j}{t_{j+1}-t_j}\hat C^{\pm}(t_{j+1}), \quad t_{j}\leq t \leq t_{j+1}.
      \end{align}
      Here, if $0\leq t<t_1$ or $t_p<t\leq 1$ we set $\hat C^\pm(t)=\hat C^\pm(t_j),$ for $ j=1$ and $j=p$, respectively.
The idea behind interpolation is straightforward: If $\mu(t_j) \in [\hat C^-(t_j), \hat C^+(t_j)]$ and $\mu(t_{j+1}) \in [\hat C^-(t_{j+1}), \hat C^+(t_{j+1})]$ it should be the case that $\mu(t)$ lies within the interpolating corridor, supposing that $\mu(\cdot)$ is sufficiently smooth.
\begin{figure}[h]
  \centering

  \begin{subfigure}{0.43\textwidth}
    \centering
    \includegraphics[width=\linewidth]{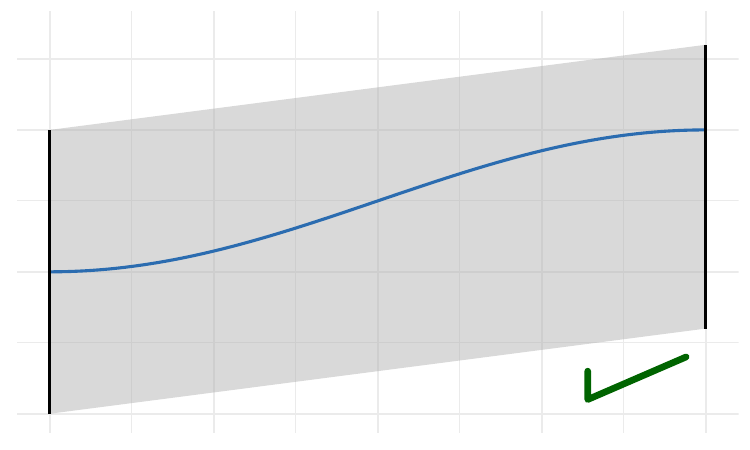}
  \end{subfigure}
  \hfill
  \begin{subfigure}{0.43\textwidth}
    \centering
    \includegraphics[width=\linewidth]{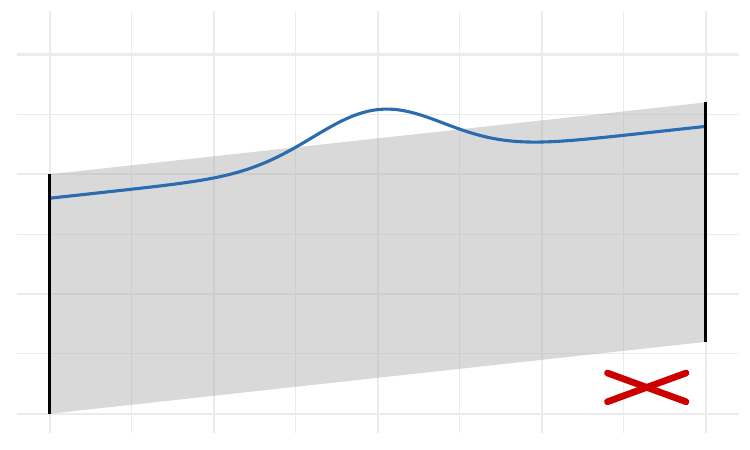}
  \end{subfigure}

  \caption{ \label{fig:escape} Two confidence intervals (vertical black), capturing the mean function (blue) pointwise. The interpolated confidence corridor (shaded gray) captures the mean function entirely (left) or fails to capture it (right).}
  \label{fig:ci_corridors}
\end{figure}
An additional argument is however needed to ensure that the mean function does not "escape" from the interpolating corridors but stays within them (see Figure \ref{fig:escape} for a visualization). The reason that such escapes (asymptotically) do not happen is, roughly speaking, that a mean function in $t_j$ is only with very low probability hitting the edge of a confidence interval $\hat C^\pm(t_j)$, but typically lies rather close to the middle. Hence, it manages to escape the confidence corridor between intervals only with negligible probability. The mathematical argument for this behavior relies on an anti-concentration result for $p$-dimensional Gaussian distributions. We can now formulate our first main result that guarantees SCBs with asymptotic coverage $(1-\alpha)$.

    \begin{theo}
    \label{theo:confidence:bands}
       Suppose that Assumptions (A)-(E) hold and that $p \sim  N^{C^*} $ for some $C^*>\frac{1}{2\vartheta}$. Let $W=\{W(t)\}_{t \in [0,1]}$ be the centered Gaussian process from eq. \eqref{e:def:cov:W}.
       Then almost surely $W\in \mathcal{C}[0,1]$. Moreover, denoting as before the $(1-\alpha)$-quantile of $|W(\cdot)|_\infty$ by $q_{1-\alpha}$, we have that 
        $\{\hat C^\pm(t)\}_{t \in [0,1]}$ (defined in \eqref{e:int:sing} and \eqref{e:int:SCB}) is an asymptotic $(1-\alpha)$ SCB  for $\mu(\cdot)$.
        
    \end{theo}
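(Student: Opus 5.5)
The proof proceeds in three steps: (i) continuity of $W$, (ii) a Gaussian approximation for the gridded statistic $\widehat T_N$, and (iii) an interpolation argument that turns simultaneous coverage on the grid into coverage of the whole band.

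\emph{Step (i): $W\in\mathcal C[0,1]$.} We verify a Kolmogorov--Chentsov type increment bound:
\[
\E|W(t)-W(s)|^2 \lesssim |t-s|^{\vartheta}.
\]
Write $\pi(t)W(t)$ as the Gaussian with covariance $\sigma_{\mathbf Y}$. Its increment variance equals the long-run variance of $Y_i(t)-Y_i(s)$. Decompose
\[
Y_i(t)-Y_i(s)=O_i(t)\,(X_i(t)-X_i(s))+(O_i(t)-O_i(s))\,X_i(s).
\]
Assumptions (C)--(D), together with the moment bound (A) and Hölder continuity of $\mu$ from (B), give geometrically decaying dependence coefficients of order $|t-s|^{\vartheta}$ for this increment in $\|\cdot\|_2$. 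The increments of $O$ are bounded, so the product terms are controlled in $L^2$ via Hölder's inequality with the $L^4$ moments. Summing over lags bounds the long-run variance by $C|t-s|^{\vartheta}$. Dividing by $\pi$, which is Hölder continuous and bounded below, preserves this. For a Gaussian process, this $L^2$ bound upgrades to all moments, so Kolmogorov's criterion (or Dudley's entropy bound) yields a continuous version.

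\emph{Step (ii): Gaussian approximation on the grid.} Linearize
\[
\hat\mu(t)-\mu(t)=\frac{\sum_i O_i(t)\,\varepsilon_i(t)}{\hat N(t)}.
\]
We have $\hat N(t)/N=\pi(t)+O_P(N^{-1/2}\sqrt{\log p})$ uniformly over the grid, by a maximal inequality for physically dependent sums. Hence
\[
\widehat T_N=\max_j \Bigl|N^{-1/2}\sum_i O_i(t_j)\,\varepsilon_i(t_j)/\pi(t_j)\Bigr| + o_P\bigl((\log p)^{-1/2}\bigr).
\]
By independence of the two coordinates of $G$, $O_i\varepsilon_i$ is centered with long-run covariance $\sigma_{\mathbf Y}$ (the mean term centers out). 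The high-dimensional CLT of \cite{zhang:cheng:2018} under geometric physical dependence, with $p$ polynomial in $N$ and (E) ensuring variances bounded below, then gives
\[
\sup_x\Bigl|\PR(\widehat T_N\le x)-\PR\bigl(\max_j|W(t_j)|\le x\bigr)\Bigr|\to0.
\]
Gaussian anti-concentration handles the linearization error. Since $W$ is continuous and the mesh tends to $0$, $\max_j|W(t_j)|\to|W|_\infty$ almost surely. The distribution of $|W|_\infty$ is continuous at $q_{1-\alpha}$ (Tsirelson's theorem / anti-concentration for suprema of Gaussians). Hence $\PR(\widehat T_N\le q_{1-\alpha})\to1-\alpha$.

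\emph{Step (iii): interpolation, the main obstacle.} Coverage of $\mu(t)$ for $t\in[t_j,t_{j+1}]$ holds whenever, with
\[
r_j:=\Bigl|\mu(t)-\tfrac{t_{j+1}-t}{t_{j+1}-t_j}\mu(t_j)-\tfrac{t-t_j}{t_{j+1}-t_j}\mu(t_{j+1})\Bigr|\le C p^{-\vartheta},
\]
the grid deviations satisfy $\sqrt N|\hat\mu(t_k)-\mu(t_k)|\le q_{1-\alpha}-\sqrt N\,C p^{-\vartheta}$ for $k=j,j+1$. Because $p\sim N^{C^*}$ with $C^*\vartheta>1/2$, the slack $\epsilon_N:=\sqrt N\,Cp^{-\vartheta}\to0$ polynomially. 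Hence
\[
\PR(\text{band covers})\ \ge\ \PR(\widehat T_N\le q_{1-\alpha}-\epsilon_N)\ \ge\ \PR\bigl(\max_j|W(t_j)|\le q_{1-\alpha}-\epsilon_N\bigr)-o(1).
\]
The remaining difficulty is anti-concentration of $\max_j|W(t_j)|$ in a window of width $\epsilon_N$, uniformly in $p$. The Nazarov bound gives $\epsilon_N\sqrt{\log p}\to0$, which suffices since $\epsilon_N$ decays polynomially. This lower bound tends to $1-\alpha$. The matching upper bound follows because failure of coverage at some grid point already means the band misses, so $\PR(\text{band covers})\le\PR(\widehat T_N\le q_{1-\alpha})\to1-\alpha$. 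A subtlety is measurability of the coverage event. We circumvent it by noting that the event is exactly sandwiched between the two measurable grid events above, so we work with inner and outer probabilities.
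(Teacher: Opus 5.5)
The gap is in Step (ii), in the claim that $O_i\varepsilon_i$ has long-run covariance $\sigma_{\mathbf Y}$ because ``the mean term centers out''. It does not center out. Write $Y_i(t)-\E Y_i(t)=O_i(t)\varepsilon_i(t)+\mu(t)\bigl(O_i(t)-\pi(t)\bigr)$. The cross covariances vanish by independence of the two sequences, and therefore
\[
\sigma_{\mathbf Y}(t,s)=\sum_{k\in\Z}\C\bigl[O_0(t)\varepsilon_0(t),O_k(s)\varepsilon_k(s)\bigr]+\mu(t)\mu(s)\,\sigma_{\mathbf O}(t,s).
\]
Your linearization itself is right: $\hat\mu-\mu=\sum_iO_i\varepsilon_i/\hat N$ exactly, because the ratio estimator cancels the fluctuation $\mu(O_i-\pi)$. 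So what Steps (ii)--(iii) actually show is that coverage tends to $\PR(|W^\ast|_\infty\le q_{1-\alpha})$. Here $W^\ast$ is the centered Gaussian process with covariance $\bigl(\sigma_{\mathbf Y}(t,s)-\mu(t)\mu(s)\sigma_{\mathbf O}(t,s)\bigr)/(\pi(t)\pi(s))$. This limit equals $1-\alpha$ when $\mu(t)\mu(s)\sigma_{\mathbf O}(t,s)\equiv0$ (e.g.\ $\mu\equiv0$, as in all of the paper's simulations), but not in general. The removed kernel is positive semidefinite, so $W\overset{d}{=}W^\ast+V$ with $V$ an independent centered Gaussian process. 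Anderson's inequality then gives $\PR(|W^\ast|_\infty\le q_{1-\alpha})\ge\PR(|W|_\infty\le q_{1-\alpha})=1-\alpha$. Your lower bound therefore survives (the band is conservative), but your matching upper bound $\PR(\widehat T_N\le q_{1-\alpha})\to1-\alpha$ fails. Relatedly, Step (i) proves continuity of the $\sigma_{\mathbf Y}$-process, whereas the limit that actually appears is $W^\ast$; the same argument with $\varepsilon$ in place of $X$ handles it.

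The paper's proof has the same defect, located elsewhere. It derives the Gaussian approximation for $T_N$, which uses the deterministic normalization $N\pi(t)$ and for which $\sigma_{\mathbf Y}$ is indeed the correct covariance. It then transfers this to $\widehat T_N$ via the lemma asserting $\sqrt N|\hat S_N-S_N|_{\infty,\mathcal T}=O_{\PR}(N^{-1/2+\delta})$. That lemma treats $|N^{-1/2}\sum_iX_iO_i|_{\infty,\mathcal T}$ as $O_{\PR}(1)$, which requires $\mu=0$. In fact $\sqrt N(\hat S_N(t)-S_N(t))=-\mu(t)\pi(t)^{-1}N^{-1/2}\sum_i(O_i(t)-\pi(t))+o_{\PR}(1)$, which is not negligible.

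Apart from this shared issue, your route coincides with the paper's:
\begin{itemize}
\item Kolmogorov--Chentsov for continuity. You bound the long-run variance of the increments $Y_i(t)-Y_i(s)$ directly, whereas the paper first proves Hölder continuity of $\sigma_{\mathbf Y}$ through its covariance-decomposition lemma.
\item The Zhang--Cheng high-dimensional CLT with Gaussian anti-concentration on the grid.
\item The same shrink-by-$Cp^{-\vartheta}$ interpolation argument.
\end{itemize}
The paper only argues the one-sided bound $\limsup_N\PR\bigl(\mu(\cdot)\notin[\hat C^-(\cdot),\hat C^+(\cdot)]\bigr)\le\alpha$, which is exactly the direction that remains true.

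To repair your proof, you have two options. Either define $W$ through the long-run covariance of $O_i\varepsilon_i$, in which case your two-sided sandwich gives exact asymptotic level $1-\alpha$. Or keep the stated $W$ and conclude only $\liminf_N\PR(\text{coverage})\ge1-\alpha$ via Anderson's inequality.
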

In the above theorem, the condition $C^*>\frac{1}{2\vartheta}$ means that more grid points are required if $\mu(\cdot)$ is less smooth (smaller $\vartheta$). This is a natural requirement, since a rougher $\mu(\cdot)$ can more easily escape interpolating confidence corridors if they are too long. We notice that in FDA parameter functions are often assumed to be fairly smooth. 
The quantile $q_{1-\alpha}$ is practically unknown, but can be bootstrapped, using a Gaussian process $\widehat W(\cdot)$ with  $\widehat W\sim \mathcal{N}(0,\widehat \C)$. More precisely, we may simply simulate the $(1-\alpha)$ -quantile from $|\widehat W(\cdot)|_\infty$, under the constraint that
        \[
            \widehat W\sim \mathcal{N}(0,\widehat \C), \quad  \sup_{\substack{s,t}}|\widehat \C(s,t)-\C[W(t),W(s)]|=O(N^{-\gamma}), \quad \gamma>0,
        \]   
        which yields a consistent estimate for $q_{1-\alpha}$. Suitable covariance estimators $\widehat \C$ are available in the literature (e.g. \cite{Li:Chen:Wang:Wu:2024}). We will provide an explicit construction in the finite sample section.

\begin{rem}  \label{rem:SCBs}(Characterizing when SCBs are possible)

Our approach is presented for partially observed functional data, but even for the fully observed case it holds important benefits. Current SCBs for fully observed functional time series are typically justified by showing that the process $\sqrt{N}(\hat \mu(\cdot)-\mu(\cdot))$ converges weakly, then applying a continuous mapping theorem to conclude convergence of the functional $\widehat{T}_N^\infty:=\sqrt{N}|\hat \mu(\cdot)-\mu(\cdot)|_\infty$. In contrast, our approach only requires convergence of the  discretizations $\widehat{T}_N=\sqrt{N}\max_{1 \leq j \leq p}|\hat \mu(t_j)-\mu(t_j)|$ for a grid of polynomial size. It is important to point out that convergence of the discretized maximum $\widehat{T}_N$ is strictly weaker than convergence of the full supremum $\widehat{T}_N^\infty$, and this in turn is strictly weaker than assuming process convergence. In the Supplement, we construct an example where $\widehat{T}_N$ converges while $\widehat{T}_N^\infty$ does not.
We also derive both necessary and sufficient criteria for the validity of SCBs based on discretized maximum statistics (with interpolating corridors). To the best of our knowledge, this provides the first characterization of this type, making a significant contribution to the literature. The gains are particularly obvious in the dependent setting. Here, previous approaches invariably involve a trade-off between smoothness and moment conditions to establish process-level convergence. Remarkably, such a trade-off is entirely absent from our Theorem \ref{theo:confidence:bands}. The only thing that is needed is a (kind of) smoothness of the limiting Gaussian process, which is very weak.

Finally, and returning to the subject of partially observed functions, we highlight that under our Assumptions (A) to (E), it is possible to construct counterexamples to convergence of $\widehat{T}_N^\infty$. This means that discretization is not just an artifact of our proof technique, but rather that it is essential to obtain valid SCBs in the partially observed case.
\end{rem}

\begin{rem}(Related Approaches)

    The idea of interpolating uniform confidence intervals to SCBs is a classical approach, even though the combination with high-dimensional approximations seems to be new in the FDA literature (and in particular to partially observed FDA). 
    High-dimensional Gaussian approximations have, of course, been used in the context of fully observed FDA, with a recent instance being the work of \cite{Dette:Wu:2026}. Here, a maximum deviation is approximated for subsequent simultaneous inference. More broadly, high-dimensional Gaussian approximations have been used in the context of non-parametric estimation problems, as in the classical work by \cite{Chernozhukov:Chetverikov:Kato:2014} for probability densities. These authors construct uniform confidence bands  based on VC class properties. This task is however different from our problem for partially observed FDA, where data functions $Y_i$ are technically members of a class with infinite VC dimension.
    
\end{rem}

\subsection{Local inference via multiscale statistics}\label{sec:loc}

In the previous section, we have focused on inference for the fixed mean function $\mu(\cdot)$ of a stationary and partially observed functional time series; see model \eqref{e:mod:1}. We now develop inference tools for the mean function in a non-stationary setting where
\begin{align} \label{e:mod:2}
    X_i(t) = \mu_i(t) + \varepsilon_i(t), \quad t \in [0,1], \,\,\,1 \le i \le N.
\end{align}
As before, $\varepsilon_i(\cdot)$ denotes a centered noise process and data are only partially observed on a domain characterized by $O_i(\cdot)$ - both generated by the mechanism \eqref{eq:filter:formalism}. This time, however, the mean function $\mu_i(\cdot)$ may depend on $i$, making the time series potentially non-stationary. In this setting a variety of inference tasks exist, depending partly on underlying assumptions placed on the evolution of the mean $\mu_i(\cdot)$ over time. We list three inference tasks that are relevant in the functional model \eqref{e:mod:2} and for which we develop methodology. \\
 \hspace*{1.5em}\textit{I) Stationarity testing:}  The most basic task in model \eqref{e:mod:2} is testing for mean stationarity, i.e. testing the hypothesis pair
\begin{align} \label{e:H0:stat}
H_0^{stat}: \mu_i(\cdot) =\mu_j(\cdot),\,\, \forall i,j \qquad \textnormal{vs.} \qquad    H_1^{stat}: \mu_i(\cdot) \neq\mu_j(\cdot) \,\,\textnormal{for some} \,\,i,j.  
\end{align}
   \hspace*{1.5em}\textit{II) Change point inference:} A popular way to model mean evolution is using a piecewise constant model. Here, there exist change points $c_0:=0<c_1< c_2< \ldots <c_K:=N$, with constant mean between changes, i.e.
    \[
    \mu_i(\cdot) := \sum_{k =1}^K\mu^{(k)}(\cdot) \, \mathbb{I}\{c_{k-1} < i \le c_k\}
    \]
     $\mu^{(k)}(\cdot)$ refers to a mean function for  $1 \le k \le K$.
    The task is then to estimate the total number of changes and create (simultaneous) discrete confidence intervals for all changes $c_k$.\\
    \hspace*{1.5em}
 \textit{III) Threshold exceedance:} Rather than using a number of abrupt changes in the mean parameter, one can model a continuous evolution over time.
For this purpose let  $\nu: [0,1] \times [0,1] \to \mathbb{R}$ be a continuous function with
    \[
    \mu_i(t) := \nu(t,i/N).
    \]
    A practically relevant task is then testing for threshold exceedance of $\nu$. Here, the user determines a (scientifically meaningful) threshold $\Delta>0$ and wants to test the hypothesis pair
    \begin{align} \label{e:threshold}
          H_0^\Delta: \sup_{(t,x) \in [0,1]} \nu(t,x) \le \Delta \qquad \textnormal{vs} \qquad     H_1^\Delta: \sup_{(t,x) \in [0,1]} \nu(t,x) > \Delta.  
    \end{align}

\medskip
All the above inference tasks can be addressed in  essentially the same way: By using multiscale statistics. Below, we will introduce two related statistics, one for tasks $I)$ and $II)$, and a related one for task $III)$.
The last one, for threshold exceedance, is particularly interesting for our empirical analysis of pollution  data.  There, we  test for air pollution levels $\nu$ crossing a threshold $\Delta$ that entails negative health outcomes. 

We now give details on the multiscale method. Multiscale statistics stem from the study of temporal and spatial processes, to analyze phenomena that may occur at different scales or resolutions. The fundamental idea is to test a model hypothesis, such as \eqref{e:threshold}, using many subsamples of the data - some subsamples that are small and localized, to detect short and strong violations of $H_0$, and some subsamples large and spread out, to detect global and more subtle  violations of $H_0$. For time series, we consider all subsamples $\{X_i, X_{i+1},\cdots, X_{j}\}$, $1 \le i \le j\le N$ of consecutive observations. 
The combination of the test statistics for all samples under consideration then involves a weighting scheme to assign relative importance to the different scales. Typically smaller scales are slightly discounted to avoid them dominating the statistic, and in our case this is expressed by a logarithmic discounting factor. Now, to mathematically define our statistic, we first define the empirical probability of making an observation at $t$, $\hat \pi_N(t):=\hat N(t)/N$ and second, the partial sums 
\[
    U_i^j(t):=\sum_{m=i+1}^jX_m(t)O_m(t)~.
\]
Then, we define the two multiscale statistics as functions of an argument $t$
\begin{align}  \label{e:hatM}
\widehat M_N(t):=& \max_{1 \le l \leq k \leq N-l} \big\{ m(t,k,l) \big\}, \quad m(t,k,l):= \frac{U_{k-l}^k(t)-U_k^{k+l}(t)}{\hat \pi_N(t)\sqrt{l}\log^{2}(eN/l)}\\
       \widehat R_N(t) := & \max_{1 \leq l< k\leq N} \big\{ r(t,k,l) \big\}, \quad\quad \,\,r(t,k,l):=
       \frac{U_l^k(t)-(k-l)\Delta}{\hat \pi_N(t)\sqrt{k-l}\log^{2}(eN/(k-l))} \label{e:hatR}
\end{align}
As before, we interpret these statistics as $0$ if $\hat \pi_N(t)=0$ (no points sampled in $t$).
Both $\widehat M_N(t)$ and $  \widehat R_N(t)$ are essentially a combination of many test statistics, across different locations and sample sizes. Thus, they automatically adapt to signals of different lengths and intensity, with a small illustration (without missing data)  given in Figure \ref{fig:multi}.

\begin{figure}[h]
  \centering

  \begin{subfigure}{0.46\textwidth}
    \centering
    \includegraphics[width=\linewidth]{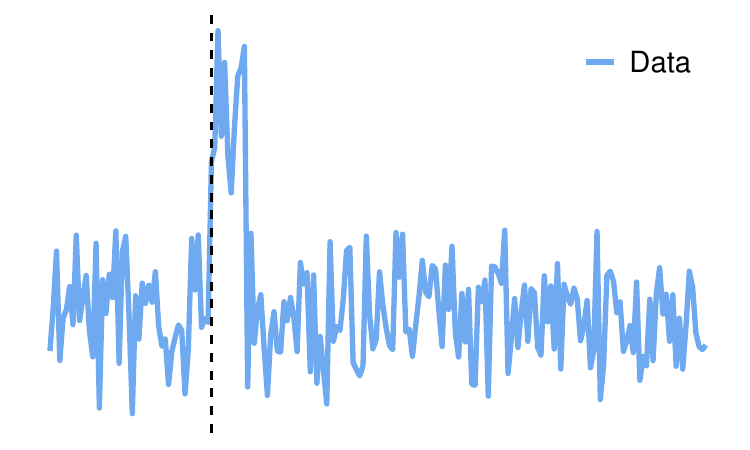}
  \end{subfigure}
  \hfill
  \begin{subfigure}{0.46\textwidth}
    \centering
    \includegraphics[width=\linewidth]{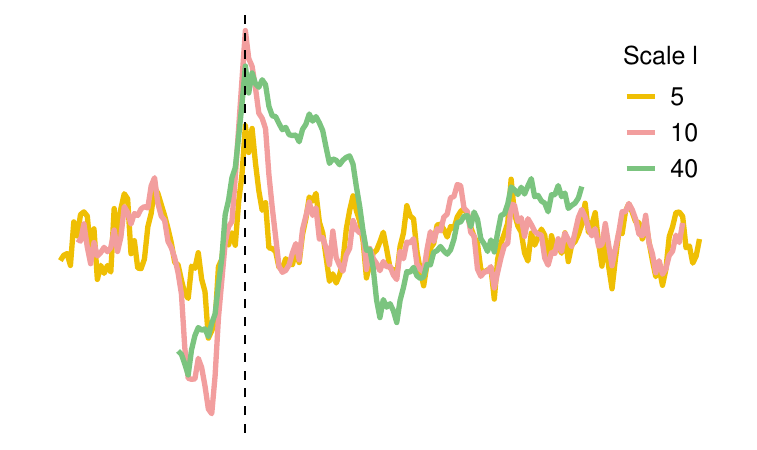}
  \end{subfigure}

  \caption{ \label{fig:multi} Left: Synthetic data $X_i(t)$ at $t=1/2$ (fully observed case) for $N=200$, with a short, spiked signal of length $10$ beginning at time $50$. Right: Calculation of  $m(t,k,l)$ for $k=50, t=1/2$ and different values of the scale parameter $l=5,10,40$. We see that the choice $l=10$ (true signal length) yields the maximum value for the statistic.}
\end{figure}

The statistic $|\widehat M_N(\cdot)|_\infty$ has been previously used for fully observed functional time series in \cite{kutta:dette:wang:2025}, and we refer to that paper for algorithms on multiscale change point detection.  Analogously to our discussion in Remark \ref{rem:SCBs}, we notice that the approach presented here relies on weaker assumptions than standard FDA, such as used in \cite{kutta:dette:wang:2025}, who require invariance principles on a function space, rather than our more parsimonious Gaussian approximation for $|\widehat M_N(\cdot)|_\infty$. Therefore, again, our theory represents an improvement even for the case of fully observed FDA.

We now discuss the main assumptions of this section. In the last section, we imposed (A)-(E). Here, we strengthen the first three assumptions, because our multiscale approach requires more regularity. Recall also the filtering scheme \eqref{eq:filter:formalism} used to generate noise and missingness processes.
    	\begin{enumerate}
	    \item[(A')] (Moments) It holds that
        \[
            \E[\exp(|\varepsilon_1(\cdot)|_\infty^2)]< \infty.
        \]
        \item[(B')] (Continuity)
        For some constant $\vartheta \in (0,1]$ the functions $\mu_i(\cdot), 1 \le i \le N$ and $\pi(\cdot)$ are uniformly $\vartheta $-Hölder continuous. Moreover $\pi(t)>0$ for all $t \in [0,1]$. 
        \item[(C')] (Weak dependence of functions) Define the  dependence measures
		\begin{align} \label{3.7}
			\delta_{q}( i)&=\max\Big\{\sup_{t\in [0,1]}\|\varepsilon_i(t)-\varepsilon_i'(t)\|_{q},\sup_{t\in [0,1]}\|O_i(t)-O_i'(t)\|_{q}\Big\}, 
		\end{align}
        and define $
            \Theta_q=\sum_{i=1}^\infty \delta_q(i)$.
        Now, for some $\chi \in (0,1)$ it holds that
		\begin{align}\label{3.8}
			\sup_{q \geq 2}q^{-1/2}\Theta_q<\infty, \quad \delta_3(i)=\mathcal{O}(\chi^i)~.
		\end{align}
	\end{enumerate}
Assumptions (A') and (C') are strengthened versions of (A) and (C), required such that exponential concentration results hold. (B') is a uniform version of (B), enforcing Hölder continuity across all (possibly different) means $\mu_i(\cdot)$.

For the statement of our main result, one more mathematical prerequisite is required: The Brownian motion corresponding to a Gaussian random function $W(\cdot)$. Recall that in the last section we discussed the Gaussian random function $W(\cdot)$, which takes values in $\mathcal{C}[0,1]$. The corresponding Brownian motion $\{B_W(x): 0 \le x \le 1\}$ is a centered Gaussian process that is a.s. continuous, and with its increments independent and satisfying $B_W(y)-B_W(x) \overset{d}{=} \sqrt{y-x}\cdot  W$. Notice that, for any $x \in [0,1]$, $B_W(x)$ is still a random function in $\mathcal{C}[0,1]$ and thus we may also interpret it as a real-valued function with two arguments $B_W(x,t)$.
Abstractly, $B_W$ is a Brownian motion on a separable Banach space (of continuous functions) and it can be shown to behave in most regards like its finite-dimensional counterpart; in particular it satisfies a Lévy modulus of continuity result, which implies strong uniform smoothness properties.  

The multiscale statistics of this section then behave (in the absence of a signal) like some modulus of continuity of $B_W$. 
We now specify these moduli of continuity, which are needed for our limiting results: Therefore, let  $H:[0,1]^2 \to \mathbb{R}$ be a function and define
\begin{align*}
    \Phi(H):= & \max_{\substack{0 \le x \le 1\\
                0 < y \le \min(x,1-x)}}
\frac{|H(x+y, \cdot)-2H(x,\cdot)+H(x-y,\cdot)|_\infty}{\sqrt{y}\log^{2}(e/y)},\\
    \Psi(H):= & \max_{0\leq x<y\leq 1}\max_{0 \leq t \leq 1}  \frac{H(x, t)-H(y, t)}{\sqrt{y-x}\log^2(e/(y-x))}~.
\end{align*}
For the following theorem, denote by $q_{1-\alpha}^\Phi$, $q_{1-\alpha}^\Psi$ the $(1-\alpha)$ quantiles of the distribution $ \Phi(B_W)$ and $\Psi(B_W)$ respectively.

\begin{theo}
    \label{theo:local:inference}
    Assume that (A')-(C') as well as (D) and (E) hold. Further suppose that $p \sim N^{C^*}$ for some $C^*>\frac{1}{2\vartheta}$. Then, there exists an $\alpha^* \in (0,1)$ depending only on the generating mechanism \eqref{eq:filter:formalism}, such that for any $\alpha\le \alpha^*$ it holds that:
    \begin{itemize}
        \item[i)] Under the null hypothesis of mean stationarity (eq. \eqref{e:H0:stat}) it follows that
        \[
       \lim_{N \to \infty} \mathbb{P}\big(\max_{1 \le j \le p}|\widehat M_N(t_j)|>q_{1-\alpha}^\Phi\big)=\alpha.
        \]
        \item[ii)] Under the null hypothesis of no threshold exceedance (eq. \eqref{e:threshold}) it follows that
        \[
       \limsup_{N \to \infty} \mathbb{P}\big(\max_{1 \le j \le p}\widehat R_N(t_j)>q_{1-\alpha}^\Psi\big) \leq \alpha.
        \]
        In the above formula we may replace "$\limsup_{N \to \infty}$" by "$\lim_{N \to \infty}$" and "$\le$" by "$=$", if $\nu(\cdot, \cdot) \equiv \Delta$.
        
    \end{itemize}
\end{theo}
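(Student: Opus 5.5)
Under the null hypotheses the statistic at each grid point $t_j$ is driven entirely by the noise. We therefore reduce $\max_j \widehat M_N(t_j)$ and $\max_j \widehat R_N(t_j)$ to a maximum over $p$ coordinates of a high-dimensional multiscale functional of partial sums. We then approximate this by the same functional of a $p$-dimensional Brownian motion, and pass to the continuum functionals $\Phi(B_W)$ and $\Psi(B_W)$.

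\textbf{Step 1: reduction to centered partial sums.} Under $H_0^{stat}$ with common mean $\mu$, write $X_mO_m=\mu O_m+\varepsilon_mO_m$. The numerator of $m(t,k,l)$ is then the difference of two partial sums of $Y_m(t)-\mu(t)\pi(t) = \varepsilon_m(t)O_m(t)+\mu(t)(O_m(t)-\pi(t))$, which is centered and stationary. For $\widehat R_N$ under $H_0^\Delta$, the numerator is bounded above by the analogous centered partial sum plus the term $\sum_m(\nu(t,m/N)O_m(t)-\Delta)$. Its centered part is again a partial sum, and its mean part is $\le(\pi(t)-1)\Delta(k-l)$. Here I expect a subtlety: the mean part is nonpositive only in the appropriate normalization. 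I would likely need to reinterpret it via $\pi$, i.e. compare $\Delta$ against $\nu\pi$ scaled by $\hat\pi_N$. The inequality direction is what produces $\limsup\le\alpha$, with equality in the boundary case $\nu\equiv\Delta$. Next, replace $\hat\pi_N(t_j)$ by $\pi(t_j)$. Under (C') a uniform exponential concentration bound gives $\max_j|\hat\pi_N(t_j)-\pi(t_j)|=O_{\mathbb P}(\sqrt{\log N/N})$, and this error is negligible multiplicatively because the limits are tight.

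\textbf{Step 2: Gaussian approximation of the multiscale vector.} Consider the $p$-dimensional stationary sequence $Z_m=(Y_m(t_j)-\mathbb E Y_m(t_j))_{j\le p}/\pi(t_j)$. I would invoke a high-dimensional strong (coupling) approximation for physically dependent sequences with sub-Gaussian tails and polynomial $p$, in the spirit of \cite{koehne:mies:2025} combined with \cite{zhang:cheng:2018}. This couples the partial sum process $\sum_{m\le \lfloor Nx\rfloor}Z_m$ to $\sqrt N B(x)$, where $B$ is a $p$-dimensional Brownian motion with covariance $(\pi(t_j)\pi(t_k))^{-1}\sigma_{\mathbf Y}(t_j,t_k)$. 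The coupling error is $o(\sqrt N/\log^2 N)$ uniformly in the max-norm. Assumptions (A') and (C') give exactly the exponential moments and the $\sqrt q$ growth of $\Theta_q$ that these results require.

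The $\log^2(eN/l)$ weighting controls the small scales, in two ways. First, for $l\le N^{\epsilon}$ the normalized increments are $O(\sqrt{\log(Np)})$ by sub-Gaussian maximal inequalities, which is beaten by the $\log^2 N$ factor. Second, the coupling error divided by $\sqrt l\log^2(eN/l)$ is negligible for $l\ge N^{\epsilon}$. Since the maxima of $\widehat M_N$ over $k,l$ then equal $\Phi$ and $\Psi$ of the discretized $B$ restricted to the grid, up to $o_{\mathbb P}(1)$, the statistic becomes $\Phi_p(B)$ up to $o_{\mathbb P}(1)$, where $\Phi_p$ is $\Phi$ with the $|\cdot|_\infty$ replaced by a maximum over $t_j$ and $x,y$ over the grid $\{i/N\}$.

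\textbf{Step 3: from the grid to $B_W$ and anti-concentration.} Embed $B$ as $B_W(\cdot,t_j)$. Assumption (D) makes $W$ Hölder in $L^2$, so $B_W$ has a continuous version, and the Lévy modulus for Banach-valued Brownian motion shows that $\Phi$ and $\Psi$ are a.s. finite. The differences $\Phi(B_W)-\Phi_p(B_W)$ and $\Psi(B_W)-\Psi_p(B_W)$ tend to $0$ in probability, using continuity of $B_W$ in both arguments and the rate $C^*>1/(2\vartheta)$. The remaining step, which I expect to be the \textbf{main obstacle}, is to convert $o_{\mathbb P}(1)$ perturbations into convergence of probabilities at the quantiles $q^\Phi_{1-\alpha}$ and $q^\Psi_{1-\alpha}$. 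This requires continuity of the laws of $\Phi(B_W)$ and $\Psi(B_W)$ at those points.

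The restriction $\alpha\le\alpha^*$ suggests how to get it. Both functionals are suprema of a Gaussian family, namely convex functionals of $B_W$. By Tsirelson's theorem, the distribution of such a supremum is absolutely continuous on $(r_0,\infty)$, where $r_0$ is the left endpoint of its support. Hence continuity holds at all quantiles above $r_0$, that is, for $\alpha$ small enough. I would verify that $r_0$ is strictly below some quantile using (E), which ensures nondegeneracy. Combined with Steps 1–2, this yields i), and for ii) the one-sided bound with equality when $\nu\equiv\Delta$.
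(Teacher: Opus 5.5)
There is a genuine gap in Step~2. You need a pathwise max-norm coupling of the $p$-dimensional partial-sum process with a Brownian motion, with $p\sim N^{C^*}$ and $C^*>1/(2\vartheta)\ge 1/2$. Neither cited source supplies this. Zhang--Cheng bound the Kolmogorov distance for the \emph{maximum} of a high-dimensional sum; they do not couple the whole process. Koehne--Mies treat scalar data.

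Even granting your rate, the bookkeeping fails. A raw coupling error $E_N=o(\sqrt N/\log^2 N)$ enters the statistic at scale $l$ as $E_N/(\sqrt l\log^2(eN/l))$. At $l=N^{\epsilon}$ this is only $o(N^{(1-\epsilon)/2}\log^{-4}N)$, which diverges. Negligibility is guaranteed only for $l\gtrsim N/\mathrm{polylog}(N)$. Your union-bound maximal inequality, meanwhile, needs $\log^2(eN/l)$ to beat a maximum of order at least $\sqrt{\log N}$. It therefore stops well before scales with $\log(eN/l)\asymp\log\log N$. With the two tools as you state them, an intermediate band of scales is covered by neither, whatever split point you choose.

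The missing idea is that no process-level coupling is needed. For $l>N^{0.9}$ the statistic is exactly the maximum of the polynomial-dimensional vector $\bigl(N^{-1/2}\sum_i\mathcal Y_{i,l,k}(t_j)\bigr)_{(j,l,k)\in A}$. So a high-dimensional CLT for maxima applies directly to the triples $(j,k,l)$. The price is twofold: the summands carry weights up to $N^{1/20}$, and the coordinate variances are only of order $\log^{-4}(eN/l)$, so they are not bounded below. This is why the paper proves the extension Theorem~\ref{thm:zhang:cheng:extension} with a vanishing variance floor $\tilde c_N$. It then uses Lemma~\ref{lem:gauss:comparison} to switch to the Brownian covariances $\mathcal L(k_1,k_2,l_1,l_2)$, and reaches $\Phi(B_W)$ by density of the grid.

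Your account of $\alpha^*$ is also off, and it points to the same place. Tsirelson's theorem cannot produce a restriction on $\alpha$. $\Phi(B_W)$ and $\Psi(B_W)$ are suprema of centered, a.s.\ bounded Gaussian families whose laws charge every neighbourhood of $0$. They are therefore continuous at every positive level; the limit is not where the difficulty lies.

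In the paper, $\alpha^*$ comes from the discrete small scales, and there your sub-Gaussian claim is not justified. (C') does not give sub-Gaussian partial sums. For the product $Y=XO$ it yields only $\Theta_q(Y)\lesssim q$, because the cross term $X_i'(O_i-O_i')$ costs an extra factor $\sqrt q$. This gives only a sub-Weibull tail for the normalized increments. Via Wu--Wu and the chaining bound of Koehne--Mies, the paper shows merely that the part with $l\le N^{0.9}$ stays below a constant $C$ with high probability (Lemma~\ref{lem:small:scale:negligible}). Hence Theorem~\ref{theo:gauss:approx:multiscale} holds only for thresholds $x\ge C$, and $\alpha^*$ is chosen so that the relevant quantile at level $1-\alpha^*$ exceeds $C$.

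Finally, for ii) the inequality needs no reinterpretation. Since $\nu\pi\le\Delta$, $\widehat R_N$ is dominated by its centered version, which is how the paper reduces ii) to the argument for i). Your doubt about the equality case when $\pi<1$ is, however, legitimate, and that reduction does not settle it.
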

    \begin{rem}(Details on Theorem \ref{theo:local:inference}) \label{rem:details}\\
    $ {}\quad {}$ \indent \textit{I) Proof:} The proof of Theorem \ref{theo:local:inference} is quite challenging, much more so than our previous results on confidence bands. 
    
    The Gaussian approximation for a functional multiscale statistic
    is highly non-standard, proceeding over many arguments $t_j$, $k$ and scales ($k-l$ and $l$ respectively).     
    Asymptotically, not all scales actually behave like Gaussians, and as a consequence we cannot show the standard weak convergence 
    \[
    \max_j|\widehat M_N(t_j)| \overset{d}{\to} \Phi(B_W).
    \]
    Indeed, this weak convergence may fail to hold, when the small scales' contribution to the statistic does not behave like the small scales of the limiting Gaussian (see \cite{koehne:mies:2025}). However, what we can show is that the maximum over very small scales yields an (asymptotically) bounded random variable. Accordingly, large enough quantiles ($(1-\alpha)$ with $ \alpha \le \alpha^*$), are determined by large scale contributions, and these have asymptotically Gaussian fluctuations. Gaussian approximation for larger scales is still non-trivial, because it requires us to extend results by \cite{zhang:cheng:2018}, making certain bounds more quantitative and allowing for asymptotically vanishing variances. \\ 
    $ {}\quad {}$ \indent \textit{II) Threshold exceedance:} The test for threshold exceedance approximates exact level $\alpha$ in the (practically somewhat unrealistic) scenario of $\nu(\cdot, \cdot) \equiv \Delta$; otherwise it is conservative. However, exact level approximation is possible even when $\nu(t,x)=\Delta$ just on a very small subset of indices. In this case, quantiles have to be adjusted using an estimator of the set $\{(t,x): \nu(t,x)=\Delta\}$, and we discuss such an estimator in Section \ref{sec:app:adaption} of the Supplement. \\
    $ {}\quad {}$ \indent \textit{III) Small scales:} The statistics $\widehat M_N(t)$ and $\widehat R_N(t)$ are divided by a log-factor of the form $\log^2(eN/\textnormal{scale})$, which suppresses small scales in favor of large scales. There exists, however, some flexibility regarding the normalizing factor, and different applications may benefit from modifications. For instance,
    if it is suspected that signals will occur mostly at small scales, one may adjust the weights to put additional emphasis on small scales, using $\log^2(C_0 + N/\textnormal{scale})$ for some sufficiently large constant $C_0$. Larger values of $C_0$ increase power at small scales and reduce power at larger scales. 
    All results from this section remain true ceteris paribus for this weight and any $C_0>1$, and we recommend the use of $C_0=20$ in applications.
    
    \end{rem}
Theorem \ref{theo:local:inference} is central to controlling the behavior of the multiscale statistics under a null hypothesis. We now want to briefly discuss the behavior, and specific strengths, of the multiscale approach when detecting a signal. To save space, we do not discuss the problem of change point inference, which involves some mathematical intricacies in a rigorous formulation. We refer the interested reader to \cite{kutta:dette:wang:2025} for details. Here, we consider local alternatives to the hypothesis of stationarity \eqref{e:H0:stat} and no threshold exceedance \eqref{e:threshold}. In the following, the functions $\mu, \mu^*$ are assumed to be Hölder continuous with some index $\vartheta \in (0,1]$.  For stationarity testing, we consider local alternatives of an epidemic change point
\begin{align} \label{e:alt:1}
\mu_i(t) = \mu(t) + \mu^*(t) \,\mathbb{I}\{\lfloor Nx_0 \rfloor \le i \le \lfloor N(x_0+a_N) \rfloor  \}, \quad |\mu^*|_\infty\ge b_N,  
\end{align}
where $x_0 \in (0,1)$ locates the change, $N a_N$ (with $a_N>0$) describes the length of the segment where the mean is different and $b_N$ (with $b_N>0$) the size of the change.  
Having more power against shorter segments and smaller signal magnitudes is obviously preferable. 

For the null hypothesis of no threshold exceedance, we consider as an alternative a gradual deformation
\begin{align} \label{e:alt:2}
\mu_i(t) = \mu(t) \Big(1+ (b_N/\Delta)\cdot d(|i/N-x_0|/a_N) \Big),
\end{align}
where $x_0 \in (0,1)$ and $d: [0,\infty) \to \mathbb{R}$  is a monotonically decreasing deformation function that satisfies $d(0)=1$ and $d(1)=0$. We assume that $\sup_t \mu(t)=\Delta$ such that $b_N$ is again the size of the signal.  The size of $a_N N $ is again some proxy for the minimal length of the signal - however it needs to be considered together with the smoothness of $d$ in $x=0$. We quantify behavior close to the origin by positing that there exists some $\kappa>0$ such that
\[
\lim_{x \downarrow 0} \frac{d(0)-d(x)}{x^\kappa}=1.
\]
The next theorem investigates consistency against local alternatives. Details on the interpretation can be found in Remark \ref{rem:loc:cons} in the Supplement.
\begin{theo} \label{thm:loc:cons}
    Suppose that $p \sim N^{C^*}$ for some $C^*>\frac{1}{2\vartheta}$, and impose Assumptions (A')-(C') as well as (D)-(E). 
    \begin{itemize}
        \item[i)] Assume that $\sup_N a_N <1$. Then, if model \eqref{e:alt:1} holds true and if $a_N, b_N$ satisfy  $b_N\sqrt{a_NN}\log^{-2}(e/a_N)\to \infty$,
         we have $\max_{1 \le j \le p}|\widehat M_N(t_j)| \overset{P}{\to} \infty$.
        \item[ii)] Assume that $\sup_N a_Nb_N^{1/\kappa} <1$. Then, if model \eqref{e:alt:2} holds true and if $a_N, b_N$ satisfy      $b_N^{1+1/(2\kappa)}\sqrt{Na_N}\log^{-2}(e/(a_Nb_N^{1/\kappa})) \to \infty$,        we have 
        $\max_{1 \le j \le p}\widehat R_N(t_j)\overset{P}{\to} \infty$.
        
    \end{itemize}
    
\end{theo}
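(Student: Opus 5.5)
Both parts follow the same plan: pick one explicit grid point $t_{j^*}$ and one explicit scale/location pair $(k,l)$, and show the corresponding single term $m(t_{j^*},k,l)$ (resp. $r(t_{j^*},k,l)$) diverges. The maximum dominates any single term, so this suffices. For each such term we split the numerator into a deterministic signal part plus a centered noise part. The noise part is controlled by a crude tail bound that does not need the full Gaussian approximation. Writing $X_mO_m=\mu_m\pi+\mu_m(O_m-\pi)+\varepsilon_mO_m$, the centered sums $\sum_{m=i+1}^j\{\mu_m(t)(O_m(t)-\pi(t))+\varepsilon_m(t)O_m(t)\}$ are $\mathcal O_{\mathbb P}(\sqrt{j-i})$ for a fixed $t$. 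This follows from the physical dependence bounds (C'), (A') via a Burkholder/Wu-type moment inequality. We also use $\hat\pi_N(t)\to\pi(t)>0$ in probability, by the same inequality and (B').

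\emph{Part i).} By Hölder continuity of $\mu^*$ and $|\mu^*|_\infty\ge b_N$, there is $t^*$ with $|\mu^*(t^*)|=|\mu^*|_\infty$. The nearest grid point $t_{j^*}$ satisfies $|t_{j^*}-t^*|\le C/p$, so $|\mu^*(t_{j^*})|\ge b_N-Cp^{-\vartheta}$. Since $p^{-\vartheta}=N^{-C^*\vartheta}=o(N^{-1/2})$ and the condition forces $b_N\gg N^{-1/2}$, we get $|\mu^*(t_{j^*})|\ge b_N/2$ eventually. Take $l=\lfloor Na_N\rfloor\wedge\lfloor N\min(x_0,1-x_0)/2\rfloor$ and $k=\lfloor Nx_0\rfloor+l$, or the symmetric choice, so that one window lies inside the epidemic segment and the adjacent window lies outside it. If $a_N$ is bounded below, use a window of length of order $N$; this is allowed because $\sup a_N<1$.

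The signal part of $U^k_{k-l}-U^{k+l}_k$ is then $\pi(t_{j^*})\mu^*(t_{j^*})\,l$, up to boundary errors of size $O(1)$. Dividing by $\hat\pi_N\sqrt l\log^2(eN/l)$ gives a quantity of order $b_N\sqrt{Na_N}\log^{-2}(e/a_N)$, which diverges by assumption. The noise part is $\mathcal O_{\mathbb P}(1)$ after the same normalisation. Hence $|m(t_{j^*},k,l)|\to\infty$ in probability.

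\emph{Part ii).} Choose $t_{j^*}$ near an argmax $t^*$ of $\mu$, so that $\mu(t_{j^*})\ge\Delta-Cp^{-\vartheta}$. Let $h_N:=a_Nb_N^{1/\kappa}$, and take the window $i\in(Nx_0-Nh_N, Nx_0+Nh_N]$, which has length $2Nh_N$; this is admissible since $\sup h_N<1$. On this window, $|i/N-x_0|/a_N\le b_N^{1/\kappa}$. The local behaviour $d(0)-d(x)\sim x^\kappa$ then gives $d\ge 1-2b_N$ for small $b_N$; if $b_N$ does not vanish, $d\ge c>0$ on a slightly shrunk window. The signal of $U-(k-l)\Delta$ is therefore at least
\[
\sum_i\{\pi\mu_i-\pi\Delta\}+(\pi-1)\Delta\cdot\#,
\]
and here a subtlety appears: $U$ estimates $\pi\nu$, not $\nu$. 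I would interpret the threshold term as scaled by $\hat\pi_N$, consistent with the paper's intent, so that the effective signal is $\pi(t_{j^*})\sum_i(\mu_i-\Delta)\gtrsim b_N\cdot Nh_N-Np^{-\vartheta}h_N$. Normalising by $\sqrt{Nh_N}\log^2(e/h_N)$ yields $b_N\sqrt{N a_N}\,b_N^{1/(2\kappa)}\log^{-2}(e/(a_Nb_N^{1/\kappa}))\to\infty$, while the noise is $\mathcal O_{\mathbb P}(1)$. The Hölder discretisation error is negligible because $\sqrt{Nh_N}\,p^{-\vartheta}\to0$.

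The main obstacle I expect is bookkeeping rather than depth. First, $b_N$ must dominate the grid error $p^{-\vartheta}$; this needs $b_N\sqrt{N}\to\infty$, which follows from the rate conditions together with $a_N<1$. Second, in ii) the centering must be handled correctly with respect to $\hat\pi_N$. Both are handled by $C^*>1/(2\vartheta)$ and the consistency of $\hat\pi_N$.
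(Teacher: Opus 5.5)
Your proof is correct and uses the same construction as the paper's. In both, a grid point lies within $C/p$ of the extremum of $\mu^*$ (resp. $\mu$), and the discretisation error is removed by $C^*\vartheta>1/2$ together with $b_N\sqrt N\to\infty$. In i) there are two adjacent windows of length $\asymp Na_N$, one outside and one inside the epidemic segment. In ii) there is a window of length $\asymp Na_Nb_N^{1/\kappa}$ around $Nx_0$ on which $d\ge 1-2b_N$.

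The genuine difference is how you control the stochastic part. The paper lower-bounds the whole maximum over $(k,l)$ by the maximum of the deterministic parts plus an $\mathcal{O}_{\mathbb P}(1)$ remainder that is uniform over all locations and scales. It imports this remainder from the null theory via \eqref{eq:count:to:prop:multiscale} and Lemma \ref{lem:empirical:count:multiscale}, so ultimately from Lemma \ref{lem:small:scale:negligible} and the multiscale Gaussian approximation. You use instead that a maximum diverges as soon as one of its terms does. So you only need the noise of one fixed term $(t_{j^*},k,l)$ to be $\mathcal{O}_{\mathbb P}(1)$. For that, a second-moment bound on partial sums that is uniform in $t$ (summable autocovariances, as in Lemma \ref{lem:physic:dep}(iv)) plus Chebyshev suffices.

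Your route is more elementary and needs neither (D), (E) nor the exponential-moment parts of (A') and (C'). It also makes explicit two terms the paper's display passes over: the centred sum $\sum_m\mu_m(O_m-\pi)$, which does not vanish under the alternative, and the factor $\pi/\hat\pi_N$. The paper's route, in exchange, simply recycles machinery already in place and controls all terms at once, which is more than consistency requires.

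Your remark on \eqref{e:hatR} is right. $U_l^k$ estimates $\pi\sum_i\mu_i$, so the literal numerator has drift $\pi(t^\star)\sum_i\mu_i(t^\star)-(k-l)\Delta$. This is negative for $\pi(t^\star)<1$ and small $b_N$, so ii) fails as written. The paper's proof tacitly adopts your rescaled reading, since it works directly with $\sum_j\mu_j(t^\star)-(k-l)\Delta$.

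With that reading, however, consistency of $\hat\pi_N$ is not enough. The error $(k-l)\Delta(\hat\pi_N-\pi)$ competes with a signal of size $b_N(k-l)$, so you need $\hat\pi_N-\pi=\mathcal{O}_{\mathbb P}(N^{-1/2})=o_{\mathbb P}(b_N)$. Your moment bound does give this; state it explicitly.

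There are three smaller points to fix:
\begin{itemize}
\item In i), your primary choice $k=\lfloor Nx_0\rfloor+l$ loses the signal when $l$ is capped well below $Na_N$, because both windows then sit inside the segment. Use your ``symmetric'' choice instead: a window before the segment followed by one starting at it, which is the paper's $k=\lceil x_0N\rceil$.
\item In ii), $\sup_N h_N<1$ does not make $(Nx_0-Nh_N,Nx_0+Nh_N]$ fit into $[1,N]$. Shrink the window by a small constant, as the paper does.
\item \eqref{e:hatM} has no absolute value inside the maximum. Like the paper, you are effectively using $|m(t,k,l)|$. Under the literal definition you would have to order the two windows according to the sign of $\mu^*(t_{j^*})$.
\end{itemize}
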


\section{Finite Sample Performance}
\label{sec:finite:sample}

We explore the numerical performance of our new methods using simulations and a data analysis.

\subsection{Confidence Bands}

We first discuss the SCBs, developed in Section \ref{sec:SCB}. We begin with i.i.d. data and then move to dependent time series. 
For independent functions, we discuss two settings, namely
\begin{align}
\label{e:Xind} (I) \quad X_i(\cdot)&=B_i(\cdot), \quad \textnormal{and}\quad  (II) \quad   X_i(\cdot)=\sum_{k=1}^{10} f_k(\cdot) T_{ik}
\end{align}
where $(B_i(\cdot))_i$ are i.i.d. Brownian bridges on $[0,1]$, $(T_{ik})_{i,k}$ an array of i.i.d. $t$-distributions with 5 degrees of freedom and $(f_i(\cdot))_{i=1,...,10}$ is the B-spline basis as implemented in the \texttt{R} package "fda" \cite{fdaR}. 
In the dependent case we follow \cite{aue:rice:sonmez:2018} and consider first-order functional autoregressive processes
\begin{align}\label{e:Xdep}
   (I) \quad  X_i(\cdot)&=[\Psi X_{i-1}](\cdot)+\zeta_{i,l}(\cdot),\quad \textnormal{and}\quad  
 (II) \quad  X_i(\cdot)=[\Psi X_{i-1}](\cdot)+\zeta_{i,h}(\cdot)
\end{align}
where the innovations are given by
\begin{align} 
    \zeta_{i,l}(\cdot)=\sum_{k=1}^{21}N_{i,k}v_k(\cdot)k^{-1}, \quad \textnormal{and}\quad 
    \zeta_{i,h}(\cdot)=\sum_{k=1}^{21}T_{i,k}v_k(\cdot)k^{-1}.
\end{align}
In the innovations $N_{i,k}$ and $T_{i,k}$ are given by i.i.d. standard normal and $t$-distributions with 5 degrees of freedom, respectively. $v_1(\cdot),...,v_{21}(\cdot)$ are the Fourier basis functions on the interval $[0,1]$. The operator $\Psi$ is characterized by a random matrix $\Psi_0$ with $\Psi_0 \in \mathbb{R}^{21\times 21} $ whose entries are given by independent centered normals with variances $((2ij)^{-1})_{1 \leq i,j \leq 21}$. $\Psi$ then acts on the Fourier basis elements as follows $\Psi[v_i](\cdot) := \sum_{j=1}^{21} v_j(\cdot) (\Psi_0)_{j,i}$. 
Note that in both above settings (dependent and independent), the created functions are fairly rough and that in the second setting, dependence is relatively strong. In both settings, case (I) corresponds to light tailed and case (II) to heavier-tailed data. 

Next, we specify the missingness mechanisms. Again we consider an independent and a dependent scenario. For the independent case, we follow Example 1 from \cite{Park:Chen:Simpson:2022} with $\beta_{i,j}\overset{i.i.d.}{\sim}\mathrm{Beta}(0.2, 0.2)$, and set
\begin{align}
\label{eq:defin:interval:missing}
    O_i(t)=1-\mathbb{I}\{\min(\beta_{i,1},\beta_{i,2})\leq t \leq \max(\beta_{i,1},\beta_{i,2})\}~.
\end{align}
This describes missingness on a random interval. For a dependent missingness scheme, we use the double exponential model described in Remark \ref{rem:missing}, and more specifically in and after eq. \eqref{eq:defin:waiting:missing} with
 $\lambda_1=3$ and  $\lambda_2=1.5$ . We display the associated fraction of observed data at time $t$ (i.e. $\pi(t)$) in Figure \ref{fig:missingness2}. Notice that in both cases roughly half of all observations are on average missing at any location.

\begin{figure}[ht] 
  \centering
  \includegraphics[width=0.7\textwidth]{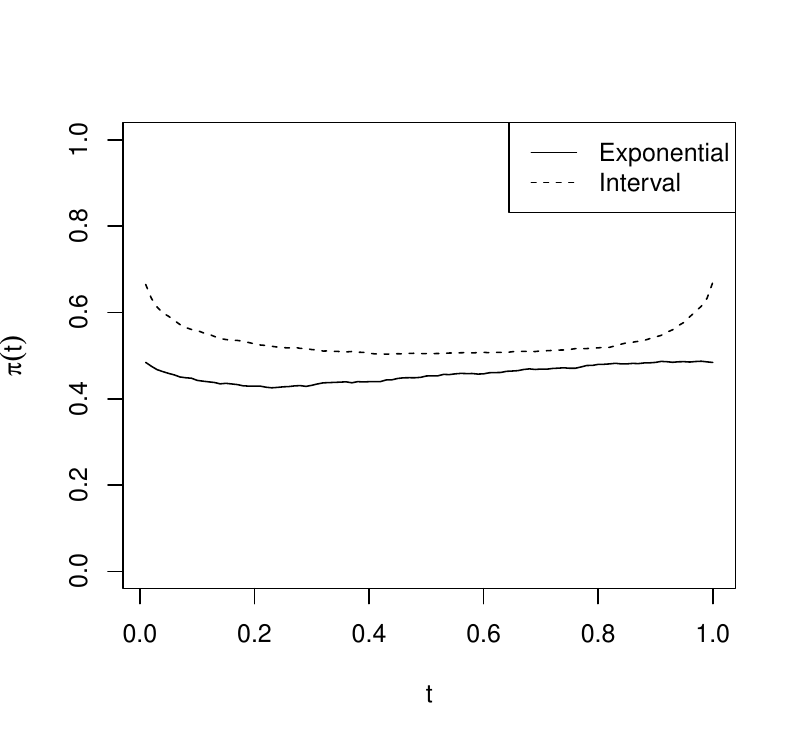}
  \vspace{-0.5cm}
  \caption{ Proportion of observed samples at location $t \in [0,1]$ for the missingness schemes \eqref{eq:defin:interval:missing} ("interval") and \eqref{eq:defin:waiting:missing} ("exponential"). Curves are approximated using $10^4$ runs. 
  }
  \label{fig:missingness2}\vspace{-0.5cm}
\end{figure}

As we have pointed out after eq. \eqref{e:hat:T} it is often useful to normalize the width of the confidence bands by the long-run variance, and our below results refer to such variance-adapted intervals. 
To that end we will use the long-run variance estimator $\hat \sigma_{\textbf{Y}}(s,t)$ from \cite{rice:shang:2017} with the Bartlett kernel, to approximate the pointwise long-run (co)variances of $Y_i(t)=X_i(t) O_i(t)$. In particular we use the therein proposed method to select the bandwidth parameter for estimation. We thus obtain the estimate 
\[
    \widehat \C(s,t)=\frac{N^2}{\hat N(t)\hat N(s)}\hat \sigma_{\textbf{Y}}(s,t)
\]
for the long-run covariance structure of the empirical mean function $\hat \mu(\cdot)$. We will now fix the number of confidence intervals in \eqref{e:int:sing} at $p=N$, with equidistant grid points and thus obtain the intervals 
\[
    \hat C^\pm(i/N)=\hat \mu(i/N)\pm\hat N(i/N)^{-1/2}\widehat{\C}^{1/2}(i/N,i/N)\hat q_{1-\alpha}.
\]
Confidence intervals are then interpolated to the SCBs as in \eqref{e:int:SCB}.
In the above formula, $\hat q_{1-\alpha}$ is the $(1-\alpha)$-quantile of the variable
\begin{align}
\label{eq:defin:hatW}
       |(\widehat \C(\cdot,\cdot))^{-1/2} \cdot\widehat W(\cdot)|_\infty, \quad \widehat W\sim\mathcal{N}(0,\widehat \C)~.
\end{align}
Quantiles as well as all empirical probabilities in our simulations are based on $1000$ simulation runs. 
In our below study, we consider the sample sizes $N=100, 200$ and fix the targeted confidence level for our SCBs at $(1-\alpha)=0.95$. Empirical coverage probabilities are reported in Table \ref{tab:confidence}.

\begin{table}[ht]
\centering
\begin{tabular}{l lcccc}
\toprule
 &  & \multicolumn{2}{c}{$N=100$} & \multicolumn{2}{c}{$N=200$} \\
\cmidrule(lr){3-4}\cmidrule(lr){5-6}
Tail 
  & $O_i\setminus X_i$
  & \eqref{e:Xind} & \eqref{e:Xdep}
  & \eqref{e:Xind} & \eqref{e:Xdep} \\
\midrule
(I)  & $\eqref{eq:defin:interval:missing}$ & 0.942 & 0.916 & 0.951 & 0.936 \\
(II) & $\eqref{eq:defin:interval:missing}$ & 0.911 & 0.911 & 0.920 & 0.921 \\
(I)  & $\eqref{eq:defin:waiting:missing}$  & 0.933 & 0.923 & 0.938 & 0.942 \\
(II) & $\eqref{eq:defin:waiting:missing}$  & 0.929 & 0.926 & 0.957 & 0.935 \\
\bottomrule
\end{tabular}
\caption{\label{tab:confidence} Empirical coverage probabilities of the SCB based on $1000$ runs and a targeted level of $(1-\alpha) =0.95$.}
\vspace{-0.5cm}
\end{table}

Results do not differ much between scenarios or between the dependent and independent case.
 For $N=200$, we observe coverage close to the nominal level in all cases. For the lower sample size $N=100$ we observe slight under-coverage. This is unsurprising, given that in the chosen setup each grid-point $t_j = j/N\in  [0,1]$ is (on average) observed roughly fifty percent of the time, leading to an effective sample size of 50  (see Figure \ref{fig:missingness2}). Such samples are of course relatively small for a Gaussian approximation.

\subsection{Multiscale Inference}

In this section, we investigate the performance of the multiscale statistics developed in Section \ref{sec:loc}. As an example, we will focus here on stationarity testing; see hypothesis $H_0^{stat}$ in \eqref{e:H0:stat} and the corresponding multiscale test statistic $\widehat{M}_N(\cdot)$ in
\eqref{e:hatM}. Threshold exceedance is considered in the next section. Stationarity testing is interesting because we can, at least in some broad sense, compare performance to an existing method, namely \cite{Hudecova:Kirch:2025} (henceforth "HK"). HK is a very difficult benchmark to reach for at least two reasons: first, HK only considers independent functional data and their method strongly exploits this assumption via permutation testing. This  boosts performance in small samples, and we point out that all sample sizes reported in HK are relatively small. \\
Second and more importantly, HK do not develop general tests against the alternative of a non-stationary mean. Rather, they develop tests that are optimized for a narrower class of non-stationarities - essentially generalized single change point models. To be precise, in their paper data are generated as follows
    \[
    X_i(t)= \mu(t)+\mu^*(t)\, g(i/N-\kappa)+ \varepsilon_i(t), \quad t\in[0,1],
    \]
where $\mu, \mu^*$ are mean functions, $g$ is a deformation over time and $\kappa$ denotes an unknown change location. Not only is the product structure of the change baked into the HK method, they even assume that the deformation $g$ is exactly known to the user. In our below setting, where we consider abrupt single changes, we have  $g(x) :=\mathbb{I}\{x > 0\}$ and 
the HK test will draw on the knowledge that $g$ is the Heaviside function. In contrast to this, no structural knowledge about the mean function is employed by our method.
We also mention that the HK method involves a choice of a tuning parameter $\gamma$, and we draw comparisons to their results for $\gamma=1/4$. This choice yields the best performance of HK in the largest number of cases. Our own test decision relies on the approximation of the quantiles under the null hypothesis, and hence, as in the previous section, on variance estimation (for a detailed description see Section \ref{var:mult} in the Supplement). 

Following the simulation setting in HK, we set $\mu\equiv 0$ and consider two scenarios for $\mu^*$
\[
    \mu^*_1(t)=0.7 \quad \textnormal{and} \quad \mu^*_2(t)=1.2e^{-2t}.
\]
The noise processes $\{\varepsilon_i\}_{i=1}^N$ are generated i.i.d. as
    \[
    \varepsilon_i(t) = \sum_{j=0}^{J} \sqrt{\lambda_j}\, \xi_{i,j} \cos(j \pi t), \quad i=1,\dots,N,
    \]
for $J=20$, $\lambda_i = 0.5 \cdot 3^{-i}$ and $\xi_{i,0},\dots,\xi_{i,J}$ i.i.d.\ standard normally distributed random variables. For the missingness mechanism we consider the three choices $(M1)-(M3)$ described in Appendix A of \cite{Hudecova:Kirch:2025}.

\begin{table}[h]
\centering
\begin{tabular}{lcccccc}
\toprule
 Method & \multicolumn{3}{c}{\eqref{e:hatM}} &\multicolumn{3}{c}{HK} \\
\toprule
Missingness  & $N=20$ & $N=50$ & $N=80$ & $N=20$ & $N=50$ & $N=80$\\
\midrule
M1 & 0.005& 0.028 & 0.059 & 0.043 &  0.030 & 0.060 \\
M2&  0.010&  0.032 & 0.061 & 0.040 & 0.043 & 0.037\\
M3 & 0.013  & 0.022  & 0.040  & 0.038 & 0.040 & 0.037\\
\bottomrule
\end{tabular}
\caption{\label{tab:confidence:M} Empirical sizes of the test \eqref{e:hatM} and of the test from \cite{Hudecova:Kirch:2025} with parameter choice $\gamma=1/4$. For both tests, the targeted significance level is $\alpha=0.05$.} 

\end{table}

\begin{table}[h]
\label{tab:spatially:uniform}
\centering
\begin{tabular}{c|c|ccc|ccc}
\toprule
 & Method & \multicolumn{3}{c}{\eqref{e:hatM}} &\multicolumn{3}{c}{HK} \\
\toprule
\textbf{$\kappa$} & Missingness& $N=20$ & $N=50$ & $N=80$ &  $N=20$ & $N=50$ & $N=80$\\
\midrule
\multirow{3}{*}{0.25} & M1 & 0.069 & 0.303 & 0.714 & 0.236 & 0.555 & 0.843 \\
                       & M2 & 0.036 & 0.209 & 0.559 & 0.225 & 0.592 & 0.849\\
                       & M3 & 0.059 & 0.299 & 0.696 & 0.211 & 0.598 & 0.870 \\
\midrule
\multirow{3}{*}{0.50} & M1 & 0.342 & 0.925 & 0.997 & 0.391 & 0.788 & 0.947\\
                       & M2 & 0.228 & 0.756 & 0.987 & 0.392 & 0.850 & 0.962 \\
                       & M3 & 0.294 & 0.878 & 0.989 & 0.419 & 0.848 & 0.960 \\
\midrule
\multirow{3}{*}{0.75} & M1 & 0.078 & 0.354 & 0.703 & 0.247 & 0.556 & 0.835\\
                       & M2 & 0.045 & 0.248 & 0.587  & 0.224 & 0.613 & 0.863\\
                       & M3 & 0.065 & 0.306 & 0.685 & 0.234 & 0.589 & 0.859 \\
\bottomrule
\end{tabular}
\caption{\label{tab:rejection:1} Rejection rates by method, missingness, and sample size.}
\end{table}

\begin{table}[h]
\label{tab:spatially:concentrated}
\centering

\begin{tabular}{c|c|ccc|ccc}
\toprule
 & Method & \multicolumn{3}{c}{\eqref{e:hatM}} &\multicolumn{3}{c}{HK} \\
\toprule
\textbf{$\kappa$} & Missingness& $N=20$ & $N=50$ & $N=80$ & $N=20$ & $N=50$ & $N=80$ \\
\midrule
\multirow{3}{*}{0.25} & M1 & 0.429 & 0.899 & 0.999 & 0.182 & 0.479 & 0.771\\
                       & M2 & 0.319 & 0.792 & 0.981 & 0.192 & 0.511 & 0.758\\
                       & M3 & 0.414 & 0.892 & 0.994 & 0.179 & 0.448 & 0.748\\
\midrule
\multirow{3}{*}{0.50} & M1 & 0.556 & 0.982 & 0.999 & 0.335 & 0.721 & 0.920\\
                       & M2 & 0.436 & 0.948 & 0.992 & 0.328 & 0.791 & 0.934\\
                       & M3 & 0.589 & 0.980 & 0.998 & 0.284 & 0.673 & 0.901\\
\midrule
\multirow{3}{*}{0.75} & M1 & 0.117 & 0.518 & 0.912& 0.185 & 0.485 & 0.750\\
                       & M2 & 0.092 & 0.391 & 0.790 & 0.186 & 0.482 & 0.776\\
                       & M3 & 0.180 & 0.575 & 0.911 & 0.166 & 0.425 & 0.703 \\
\bottomrule
\end{tabular}
\caption{\label{tab:rejection:2} Rejection rates by method, missingness, and sample size.}\vspace{-0.5cm}
\end{table}

For our evaluations, we fix the nominal level at $\alpha=0.05$ and compute (as before) all empirical probabilities using $1000$ simulation runs. Empirical rejection probabilities for HK are cited from that paper. 

Given that HK uses substantially more prior knowledge  about the data generating process and is optimized for independent samples, we are satisfied if our new, asymptotic method performs similarly. As it turns out, the new method actually often  has higher power than HK.  Rejection rates under $H_0$ are gathered in Table \ref{tab:confidence:M}, and under $H_1$ in Tables \ref{tab:rejection:1} and \ref{tab:rejection:2} respectively.  

As we can see from the tables, the nominal level is (as expected) well approximated for HK across all sample sizes. Our new method is somewhat conservative for smaller samples, giving a close approximation for $N=80$. Nevertheless, in no scenario does our new method become excessively liberal, not even for very small samples. 
Regarding the empirical power we observe that, for the alternative based on $\mu_2$,  the new method is competitive already at $N=20$ and substantially outperforms HK for the larger sample sizes for $\kappa \in \{0.25,0.5\}$. For the later change $\kappa=0.75$ and $N=50$ both methods perform similarly, but for $N=80$ our new method again dominates. The reason why our method does so well is likely because the "direction" of the change in $\mu_2^*(t)$ is largest in a small number of arguments near $t=0$. Detecting such a localized change is easy for a max-type statistic, as discussed in our paper, but difficult for an integrated statistic such as HK, which smooths out the change over $t$. 
In many applications of FDA, changes are strongly localized in $t$ making this a very important scenario. We will see a similar phenomenon in our data example, where high pollution levels are concentrated in only some hours of the day. The other alternative considered by HK is the completely uniform function $\mu_1^*\equiv 0.7$. Such perfectly uniform changes seem less realistic, but they can help to illuminate the merits of HK in this scenario. Here, HK outperforms our new method, with 
somewhat comparable performance for $N=80$, also depending on the precise change point location.

In summary, for small to moderate sample sizes ($N =20,50,  80$), our new method performs competitively with HK. We observe superior performance for the practically more relevant case of concentrated changes, while for perfectly uniform alternatives HK is superior. We have no data for HK when $N$ is larger, but our simulations strongly suggest that our new method would perform better in such a scenario.

\subsection{Threshold exceedance for pollution data} \label{sec:polution}

Monitoring air pollution is important to assess risks for vulnerable groups and as a guide to public health policy. 
Organizations such as the WHO \citep{WHO:2021} and the European Union \citep{EU:2012} have developed guidelines demarcating thresholds e.g. for the 1- and 24-hour average exposure to pollutants, enabling guideline-oriented quantitative analyses such as ours. Below, we focus on concentrations of particulate matter $PM_{10}$ (diameter of $10$ micrometers or less), which can cause both acute and chronic respiratory issues.\\
In recent years sudden spikes of $PM_{10}$ levels have been observed more frequently in some regions, with causes including wildfires, dust storms and industrial incidents  (see \cite{Deary:Griffiths:2021} for references and further exposition). Concerns over air quality have led to the publication of large, high-resolution data repositories. The statistical challenge is developing procedures that can exploit this wealth of information, while being able to handle the omnipresent problem of missing data. 
Our new methodology for testing threshold exceedance in partially observed functional time series is tailored to this very problem. 

We begin with a short description of our data. In our analysis, we focus on a time series of air quality measurements from the Bandra Kurla Complex in Mumbai, India (accessible, along with many further comparable time series at \cite{OpenAQ:2025}). 
$PM_{10}$ concentration was measured, in $\mu\text{g}/\text{m}^3$, from  19.02.2025 (dd.mm.yyyy) to  05.09.2025, in  15-minute intervals, resulting in $199$ functional observations. Sensors are offline between 23:45 and 00:30 local time, such that at most a total of $93$ observations are maximally available each day.
In practice, much fewer observations are made, because 
the sensors fail for $\approx 20\%$ of the time, with values most often missing at the end of a day. Since pollution data are continuous (also from one night to the next morning) and since missingness may overlap between days (the longest failure is about 3 days long), it is clear that both components involve some degree of temporal dependence. Therefore, tools for dependent time series, as proposed in this paper, are necessary.  More specifically, we will leverage our new methods for non-stationary data, because the time series of daily pollution profiles is certainly non-stationary over more than 6 months. Since the main interest is the detection of dangerously high pollutant levels, we apply our new test on threshold exceedance (see \eqref{e:threshold}).

Since we expect pollution levels to peak only during short periods of the observation window, we enhance our statistic's power against small-scale alternatives, following Remark \ref{rem:details} parts $II)$ and $III)$. Namely, we estimate extremal sets for better nominal level approximation and use the small scales weighting from part $III)$ with $C_0=20$. Quantiles are simulated using the long-run variance estimator in \eqref{eq_longrun_est}. The threshold $\Delta>0$ in the hypothesis \eqref{e:threshold} is user-determined and in our case demarcates a (scientifically guided) value of dangerous pollution levels. We consider the thresholds for medium, high and very high $PM_{10}$ exposure given by $\Delta=50$, $\Delta=90$ and $\Delta=180$, respectively. These values are motivated by European air--quality index categories for particulate matter $\text{PM}_{10}$. 
In commonly used European AQI/CAQI schemes, concentrations exceeding approximately $50~\mu\text{g}/\text{m}^3$ (medium), $90~\mu\text{g}/\text{m}^3$ (high), and $180~\mu\text{g}/\text{m}^3$ (very high) correspond to progressively poorer air--quality categories (\cite{EU:2012}).
These thresholds are widely used in public health communication and air--quality monitoring indices, even though they do not represent legally binding limit values under EU ambient air--quality legislation.
 We will also report the highest level $\Delta$ at which a significant threshold exceedance is detectable.  
 
Our results are as follows: The hypothesis of no threshold exceedance is rejected for $\Delta \in \{50,90,180\}$ at significance level of $0.05$. The largest $\Delta$ for which rejection took place was $651$, which is a staggering 13 times the medium threshold $\Delta=50$. To get a better sense of when air pollution is large (day of the year and time of the day), we have plotted a heatmap in Figure \ref{fig:exceedances}, in the Supplement.
  The heatmap reveals that pollution is high overnight and in the early morning hours. This pattern is often observed for particulate matter, where values are lower over the day due to vertical mixing (\cite{Girotti:2025}). We have high certainty of hazardous pollution levels in late winter and early spring. In the summer, pollution decreases somewhat and in the month of August, no significant threshold exceedances over $\Delta=90$ can be identified anymore (even though exceedances over $\Delta=50$ can be identified at almost all days and almost all times). This pattern is likely due to annual heating cycles (among other contributing factors), which lead to higher pollution in the late autumn and winter. \\[-4.5ex]

\putbib
\end{bibunit}

\newpage
\begin{bibunit}
    \appendix
\setcounter{page}{1}
\section{Supplementary Material}

In the following we define for any collection of points $\mathcal{T}\subset [0,1]$ and function $f:[0,1]\to \R$ the associated supremum norm by
\[
    |f(\cdot)|_{\infty,\mathcal{T}}=\sup_{t \in \mathcal{T}}|f(t)|.
\]
If the supremum is taken over $[0,1]$ we simply use $|\cdot|_\infty$, as before. Additionally, for two sequences $(a_N),(b_N)$ we write
\[
    a_N \lesssim b_N
\]
whenever $a_N\leq cb_n$ for some $c>0$ independent of $N$.

\subsection{Proofs: Gaussian Approximations}
For the proofs recall the definitions of $\hat \mu(\cdot)$ and $\hat N(\cdot)$ (both in eq. \eqref{e:def:pi}) and $\widehat{T}_N$ (see eq. \eqref{e:hat:T}). Additionally, we define
\begin{align}
    S_N(t):=&\frac{1}{N(t)}\sum_{i=1}^N \{X_i(t)O_i(t)\}-\mu(t), \quad N(t):=N\pi(t),\\
    \hat S_N(t):=&\frac{1}{\hat N(t)}\sum_{i=1}^N \{X_i(t)O_i(t)\}-\mu(t),\\
    \sigma_{\textbf{O}}(t,s):=&\sum_{k=-\infty}^\infty \C[ O_0(t),O_k(s)],\quad \sigma_{\textbf{O}}^2(t):=\sigma_{\textbf{O}}(t,t)
\end{align}
and, for a set of points  $\{t_1,...,t_p\}$, also the discretized maximum
\[
    T_N=\sqrt{N}\max_{1 \leq i \leq p}|S_N(t_i)|~.
\]
Notice that $T_N$ is essentially the same as $\widehat{T}_N$, with $S_N$ replaced by $\hat S_N$ in the latter. 

During the proofs, we will frequently say that we "apply a high-dimensional Gaussian approximation" (such as Corollary 2.2 from \cite{zhang:cheng:2018}) to a vector $(S_N(t_1),...,S_N(t_p))$ to obtain a result about the distribution of $\max_{1 \leq i \leq p}|S_N(t_i)|$. In this case we mean applying the result to the vector $(S_N(t_1),...,S_N(t_p), -S_N(t_1),...,-S_N(t_p))$ which always fulfills the approximation's assumptions, whenever the original vector $(S_N(t_1),...,S_N(t_p))$ does. The desired approximation then follows by noting that
\[
    \max_{1 \leq i \leq p}\max\{S_N(t_i),-S_N(t_i)\}=\max_{1 \leq i \leq p}|S_N(t_i)|~.
\]

\begin{lem}
    \label{lem:gauss:approx:missingness}
    Let $p=p(N)$ be a sequence satisfying $p \sim N^{C^*}$ for some $C^*>0$ and let $t_i=t_{i}(N) \in [0,1], 1 \le i \le p$ be an array of points. 
    Let
    \[
        T_N^O:=\max_{j=1,...,p}N^{-1/2}\sum_{i=1}^N\Big(O_i(t_j)-\pi(t_j)\Big).
    \]    
    Then there exists a sequence of mean-zero Gaussian vectors $W_N\in \R^p$ such that for any $x_0>0$ we have
    \begin{align}
        \sup_{t \geq x_0}|\PR(T_N^O\geq t)-\PR(\max_{1 \leq j \leq p}W_{N,j}\geq t)|&=o(1)~\\
         \C[W_{N,i},W_{N,j}]&=\sigma_{\textbf{O}}(t_i,t_j)~.
    \end{align}
\end{lem}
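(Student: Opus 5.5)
The plan is to deduce the statement from a high-dimensional Gaussian approximation for physically dependent sequences, namely Corollary 2.2 of \cite{zhang:cheng:2018}, applied to the bounded, centered vectors $Z_i:=(O_i(t_j)-\pi(t_j))_{1\le j\le p}$. Since $|O_i(t)|\le 1$, all moment conditions hold trivially. Assumption (C) gives $\sup_t\|O_i(t)-O_i'(t)\|_3=\mathcal{O}(\chi^i)$, and because $|O_i(t)-O_i'(t)|\le 1$ this transfers to every $q$-norm: for $q\ge 3$ we have $\|\cdot\|_q\le \|\cdot\|_3^{3/q}$, which still decays geometrically. The dependence-adjusted norms required by Zhang and Cheng are therefore uniformly bounded in $j$.

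The one condition of that corollary we cannot check directly is a variance lower bound. Assumption (E) only concerns $\sigma_\varepsilon^2$, so $\sigma_{\textbf{O}}^2(t_j)$ may be arbitrarily small, or even zero, for some grid points. This is exactly why the lemma is stated only for thresholds $t\ge x_0>0$, and handling this degeneracy is the main obstacle.

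To deal with it, fix a sequence $\eta_N\downarrow 0$ with $\eta_N\log p\to 0$ slowly, for instance $\eta_N=(\log N)^{-3}$. Split the index set into
\[
J_1:=\{j:\sigma_{\textbf{O}}^2(t_j)\ge \eta_N\}, \qquad J_2:=\{j:\sigma_{\textbf{O}}^2(t_j)<\eta_N\}.
\]

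For the small-variance block $J_2$, the goal is
\[
\PR\Big(\max_{j\in J_2}N^{-1/2}\sum_{i=1}^N (O_i(t_j)-\pi(t_j))\ge x_0/2\Big)\to 0 .
\]
I would obtain this from an exponential (Bernstein/Freedman-type) inequality for bounded, geometrically physically dependent sums, as in Wu's martingale-approximation bounds. Such an inequality gives tails of order $\exp(-cx^2/(\sigma_{\textbf{O}}^2(t_j)+o(1)))$ in the moderate range, up to a polynomially small remainder. A union bound over $p\sim N^{C^*}$ coordinates then yields a bound of order $p\exp(-c x_0^2/\eta_N)+o(1)\to 0$. Here the $o(1)$ accounts for the discrepancy between the finite-$N$ variance $N^{-1}\operatorname{Var}(\sum_i O_i(t_j))$ and $\sigma_{\textbf{O}}^2(t_j)$, which is $\mathcal{O}(1/N)$ by geometric decay. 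If the exponential bound is awkward in the small-variance regime, the fallback is to use the martingale decomposition directly and bound the conditional variance.

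The same step for the Gaussian vector is simpler: with $W_N\sim\mathcal N(0,(\sigma_{\textbf{O}}(t_i,t_j))_{i,j})$, we have $\E\max_{j\in J_2}W_{N,j}\lesssim\sqrt{\eta_N\log p}\to 0$, and Gaussian concentration finishes the argument. Note that this matrix is a valid covariance, being the limit of the normalized covariance matrices of the partial sums.

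For the large-variance block $J_1$, I would apply the Zhang–Cheng bound, tracking the explicit dependence of their rate on the variance lower bound $b=\eta_N$. Their error term is polynomial in $1/b$, $\log p$ and $N^{-c}$, so with $b$ only polylogarithmically small it is still $o(1)$. This gives
\[
\sup_x\Big|\PR\Big(\max_{j\in J_1}N^{-1/2}\sum_{i=1}^N (O_i(t_j)-\pi(t_j))\le x\Big)-\PR\Big(\max_{j\in J_1}\tilde W_{N,j}\le x\Big)\Big|=o(1),
\]
where $\tilde W_N$ has the finite-$N$ covariance. I then pass to $W_N$, which has covariance $\sigma_{\textbf{O}}$, via the Gaussian comparison inequality of Chernozhukov–Chetverikov–Kato. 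The entrywise covariance difference is $\mathcal{O}(1/N)$, and the comparison error is of order $(N^{-1}\eta_N^{-1}\log^2 p)^{1/3}=o(1)$.

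To combine the two blocks, note that for $t\ge x_0$ the event $\{T_N^O\ge t\}$ differs from $\{\max_{J_1}\ge t\}$ only on $\{\max_{J_2}\ge t\}\subset\{\max_{J_2}\ge x_0\}$, and the same holds for $W_N$. Hence
\[
\sup_{t\ge x_0}\big|\PR(T_N^O\ge t)-\PR(\max_{j}W_{N,j}\ge t)\big|
\]
is bounded by the Kolmogorov distance on $J_1$ plus the two $J_2$ tail probabilities, all of which are $o(1)$. If $J_1$ is empty, only the $J_2$ bounds are needed. Finally, no anti-concentration is required for the passage from $\ge$ to $>$, since the $J_1$ approximation holds uniformly in $x$.
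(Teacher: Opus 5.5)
Your proposal is correct and matches the paper's omitted argument: discard the coordinates with small long-run variance as negligible, handle the remaining coordinates with the Zhang--Cheng approximation under a polylogarithmic variance lower bound (this is exactly what the paper's Theorem~\ref{thm:zhang:cheng:extension} supplies), and then use a Gaussian comparison to pass to the covariance $\sigma_{\textbf{O}}$. The one step to write carefully is the union bound over $J_2$. There the exponential inequality needs variance proxy $\sigma^2_{\textbf{O}}(t_j)+o(1/\log N)$ and a remainder of order $o(N^{-C^*})$. Your martingale decomposition with truncation, Freedman's inequality and concentration of the conditional variance gives both.
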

\begin{proof}
    The proof follows by similar but easier arguments as those in the proof of Theorem \ref{theo:gauss:approx:multiscale} (coordinates with small variances are negligible, the remaining coordinates may be handled via Theorem \eqref{thm:zhang:cheng:extension}) and we therefore omit the details.

\end{proof}

\begin{lem}
\label{lem:gauss:approx:confidence}
    Let $p=p(N)$ be a sequence satisfying $p \sim N^{C^*}$ and let $t_i=t_{i}(N) \in [0,1], 1 \le i \le p$ be an array of points.
    Then there exists a sequence of mean-zero Gaussian vectors $W_N \in \R^p$ such that 
    \begin{align}
        \sup_{t \in \R}|\PR(\widehat{T}_N\geq t)-\PR(\max_{1 \leq i \leq p}|W_{N,i}|\geq t)|&=o(1)~\\
         \C[W_{N,i},W_{N,j}]&=(\pi(t_i)\pi(t_j))^{-1}\sigma_{\textbf{Y}}(t_i,t_j)~.
    \end{align}
\end{lem}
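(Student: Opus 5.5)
The plan is to reduce $\widehat{T}_N$ to a maximum of normalized partial sums of a stationary, physically dependent vector sequence, and then apply the high-dimensional Gaussian approximation of \cite{zhang:cheng:2018} (their Corollary 2.2) to that sequence. Since the claim is a Kolmogorov-type bound, the reduction must come with an anti-concentration bound that absorbs the remainder.

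\textbf{Step 1 (linearization).} Write
\[
\hat S_N(t)=\frac{1}{\hat N(t)}\sum_{i=1}^N O_i(t)\varepsilon_i(t)=\frac{N(t)}{\hat N(t)}S_N(t),
\qquad S_N(t)=\frac{1}{N\pi(t)}\sum_{i=1}^N O_i(t)\varepsilon_i(t),
\]
using $X_iO_i-\mu O_i=\varepsilon_iO_i$. Set $Z_i(t):=O_i(t)\varepsilon_i(t)/\pi(t)$. These are centered, because the coordinates of $G$ are independent. Their long-run covariance is $(\pi(t)\pi(s))^{-1}\sigma_{\textbf{Y}}(t,s)$, since $Y_i(t)-\mathbb E Y_i(t)=O_i(t)\varepsilon_i(t)+\mu(t)(O_i(t)-\pi(t))$.

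This exposes a discrepancy. The centering $-\mu(t)$ inside $\hat S_N$ produces exactly $\sum_i O_i\varepsilon_i/\hat N$, so no $\mu(O_i-\pi)$ term survives. The natural covariance of the limit is therefore $\sum_k \C[O_0\varepsilon_0(t),O_k\varepsilon_k(s)]/(\pi(t)\pi(s))$ rather than $\sigma_{\textbf{Y}}$. I will prove the statement with this long-run covariance of $(O_i\varepsilon_i)$ and read $\sigma_{\textbf{Y}}$ as referring to it, for example for centered $Y$. If the statement literally intends $\sigma_{\textbf{Y}}$, that is a point to check.

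\textbf{Step 2 (Gaussian approximation for $T_N=\sqrt N\max_j|S_N(t_j)|$).} Apply the $2p$-dimensional device described before Lemma \ref{lem:gauss:approx:missingness} to the vectors $(Z_i(t_1),\dots,Z_i(t_p))$, then check the hypotheses of \cite{zhang:cheng:2018}.
\begin{itemize}
\item \emph{Physical dependence.} Write $O_i\varepsilon_i-O_i'\varepsilon_i'=O_i(\varepsilon_i-\varepsilon_i')+\varepsilon_i'(O_i-O_i')$. Since $|O_i|\le1$ and $\varepsilon$ and $O$ are independent, the dependence measure is bounded by $\delta_\varepsilon(i)+\|\varepsilon_1(t)\|_3\,\delta_O(i)=\mathcal O(\chi^i)$ uniformly in $t$, using (A) and (C).
\item \emph{Moments.} (A) gives uniformly bounded fourth moments, and $\pi$ is bounded below by (B) and compactness.
\item \emph{Non-degeneracy.} The long-run variance of $O_i(t)\varepsilon_i(t)$ must be bounded away from $0$ uniformly in $t$, and this is the step I expect to be the main obstacle. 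Conditioning on the $O$-sequence and using independence, the lag-$k$ covariance is $\E[O_0(t)O_k(t)]\C[\varepsilon_0(t),\varepsilon_k(t)]$, so (E) does not transfer automatically. I would first try the spectral argument: the spectral density of the product is a convolution of the spectral densities of the $O$- and $\varepsilon$-sequences. Assumption (E) gives positive spectral mass at $0$ for $\varepsilon$, uniformly in $t$, and $\pi>0$ gives $O$ a positive atom at $0$ through its mean. Together these yield a uniform lower bound. If this fails, I would fall back on the small-variance trimming mentioned in the proof of Lemma \ref{lem:gauss:approx:missingness}, using Theorem \ref{thm:zhang:cheng:extension}.
\end{itemize}
With $p\sim N^{C^*}$ and geometric dependence, the rate from \cite{zhang:cheng:2018} is a power of $N^{-1}\log p$, hence $o(1)$.

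\textbf{Step 3 (ratio error).} Lemma \ref{lem:gauss:approx:missingness}, applied to $\pm(O_i-\pi)$, gives $\max_j|\hat N(t_j)/N-\pi(t_j)|=\mathcal O_{\PR}(\sqrt{\log p/N})$. The same Gaussian comparison, together with Borell–TIS or a union bound, gives $T_N=\mathcal O_{\PR}(\sqrt{\log p})$. Hence
\[
|\widehat T_N-T_N|\le \max_j\Big|\frac{N(t_j)}{\hat N(t_j)}-1\Big|\,T_N=\mathcal O_{\PR}\big(\log p/\sqrt N\big)=:r_N.
\]
As a side effect, $\hat N(t_j)>0$ for all $j$ with probability tending to one.

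\textbf{Step 4 (anti-concentration).} Finally combine Steps 2 and 3 with Nazarov's anti-concentration inequality for $\max_j|W_{N,j}|$. This requires the same uniform variance lower bound as Step 2. It yields $\sup_t\PR\big(|\max_j|W_{N,j}|-t|\le r_N\big)\lesssim r_N\sqrt{\log p}\to0$, which converts $\widehat T_N=T_N+\mathcal O_{\PR}(r_N)$ into the stated uniform Kolmogorov bound.
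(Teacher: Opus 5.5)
Your route is the paper's: Corollary 2.2 of Zhang and Cheng (2018) for a linearized maximum, a Gaussian comparison, then a ratio bound plus anti-concentration. The discrepancy you flag is genuine. With $\sigma_{\textbf Y}$ built from the uncentred $Y_i=X_iO_i$, the statement is false in general once $\mu\not\equiv0$.

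Indeed, $\sigma_{\textbf Y}(t,s)=\mu(t)\mu(s)\sigma_{\textbf O}(t,s)+\sum_k\C[O_0(t)\varepsilon_0(t),O_k(s)\varepsilon_k(s)]$; the cross terms vanish by independence. On the other hand, $\hat\mu-\mu=\sum_iO_i\varepsilon_i/\hat N$ is invariant under $X_i\mapsto X_i+c$, so its limit cannot involve the first term. Here is a concrete counterexample:
\begin{itemize}
\item take $\varepsilon_i(t)\equiv\xi_i$ i.i.d.\ centred with variance $\sigma^2$ and finite fourth moment, $O_i(t)\equiv B_i$ i.i.d.\ Bernoulli$(1/2)$, and $\mu\equiv c\neq0$;
\item then (A)--(E) hold and $\widehat T_N\overset{d}{\to}|\mathcal N(0,2\sigma^2)|$;
\item but the prescribed covariance is the constant $c^2+2\sigma^2$, so $\max_i|W_{N,i}|\sim|\mathcal N(0,c^2+2\sigma^2)|$.
\end{itemize}

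The paper's proof loses this term in Lemma~\ref{lem:approx:Nx}. Its $S_N$ is centred at $\E Y_i$, so $\sqrt N(\hat S_N-S_N)$ contains $-\mu(t)(\hat N(t)-N\pi(t))/(\pi(t)\sqrt N)$, which is $O_{\PR}(1)$ and non-degenerate. The assertion there that $|N^{-1/2}\sum_iX_iO_i|_{\infty,\mathcal T}=O_{\PR}(1)$ is wrong; that quantity is $\sqrt N\mu(t)\pi(t)+O_{\PR}(1)$.

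Your linearization $\hat S_N=(N(t)/\hat N(t))S_N$, with $S_N$ built from $O_i\varepsilon_i$, avoids this. So does your bound $|\widehat T_N-T_N|\le\max_j|N(t_j)/\hat N(t_j)-1|\,T_N$. Both are correct, and they give the lemma with covariance $\sum_k\C[O_0\varepsilon_0(t),O_k\varepsilon_k(s)]/(\pi(t)\pi(s))$, which reduces to the paper's when $\mu\equiv0$. Delete the sentence in Step 1 asserting that the $Z_i$ have long-run covariance $(\pi(t)\pi(s))^{-1}\sigma_{\textbf Y}(t,s)$; that is exactly what fails.

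\emph{Non-degeneracy.} This is not an obstacle. Your lag-$k$ identity gives long-run variance $\pi(t)^2\sigma_\varepsilon^2(t)+\sum_k\C[O_0(t),O_k(t)]\C[\varepsilon_0(t),\varepsilon_k(t)]$. The sum is the long-run variance of the product of the two independent centred sequences, hence nonnegative. So the variance is at least $\inf_t\pi(t)^2\cdot\inf_t\sigma^2_\varepsilon(t)>0$. This is Lemma~\ref{lem:lrv:formulas}(ii), and it is what your spectral argument expresses; no trimming is needed.

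\emph{Missing comparison step.} Zhang and Cheng's corollary delivers a Gaussian vector with the finite-$N$ covariance $N^{-1}\C[\sum_iZ_i(t_j),\sum_iZ_i(t_k)]$, not the long-run one. Before Step 4 you need one more step, as the paper does:
\begin{itemize}
\item by geometric decay the entrywise error is $O(N^{-1})$ (cf.\ Lemma~\ref{lem:lrc:partial:sums});
\item then apply Theorem 2 of Chernozhukov, Chetverikov and Kato (2015).
\end{itemize}
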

\begin{proof} We first show the desired result for $T_N$ and then transfer it to $\widehat{T}_N$ by an approximation argument. We want to apply Corollary 2.2 from \cite{zhang:cheng:2018} for which we need to check its assumptions. Adopting their notation we hence need to show that there exist constants $\mathfrak{D}_N,c_1,c_2,c_3,K_1,\delta>0$ such that 
\begin{align}
\label{ass:1}
    \max_{1 \leq i \leq N}\E[\max_{1 \leq j \leq p}Y_i(t_j)^4/\mathfrak{D}_N^4]&\leq K_1, \quad \mathfrak{D}_N \leq N^{3/32-\delta}\\ 
    \label{ass:2}
    c_1 \leq \min_{1 \leq i \leq p}\C[S_N(t_i),S_N(t_i)]&\leq \max_{1\leq i \leq p}\C[S_N(t_i),S_N(t_i)]\leq c_2\\ \label{ass:3}
    \sum_{j=1}^\infty \max_{1 \leq i \leq p}j\theta_{j,i,3}&<c_3
\end{align}
where $\theta_{j,i,3}$ is defined in equation (6) in \cite{zhang:cheng:2018}. This particular form of the assumptions is obtained from their more general result, by choosing $h(x)=x^4$ and using that $\log(p)\lesssim \log(N)$ in our case. We validate the assumptions in turn.

\begin{itemize}
    \item[Eq. ]\eqref{ass:1} : Choose $\mathfrak{D}_N=1$ and note that 
        \[
        \max_{1 \leq i \leq N} \E[\max_{1 \leq j \leq p}(X_i(t_j)O_i(t_j))^4]\leq \E[|X_1(\cdot)|_\infty^4]<\infty~,
        \]
        by Assumption (A).
    \item[Eq. ] \eqref{ass:2}: Using Lemma \ref{lem:physic:dep} (iv) it is easy to check that $\C[S_N(t_i),S_N(t_i)]$ converges (uniformly in $i$) to $\sigma_{\textbf{Y}}^2(t_i)$ so that it suffices to lower/upper bound this quantity by $c_1$ and $c_2$, respectively. The lower bound is an immediate consequence of Lemma \ref{lem:lrv:formulas} (ii) in combination with Assumption (B).  We may apply this lemma because $\min_{0 \leq t \leq 1}\pi(t)>0$ and Assumption (E) ensure that its requirements are met. The upper bound follows from using Lemma \ref{lem:physic:dep} (iv) to upper bound the expressions for $\sigma_{\bf Y}^2=\sigma^2_{\bf XO}$ in  Lemma \ref{lem:lrv:formulas} (i).
    \item[Eq. ] \eqref{ass:3}:  Follows immediately from Assumption (C).
\end{itemize}

We may therefore apply the above-mentioned Corollary 2.2 in \cite{zhang:cheng:2018} which implies that there exists a sequence of $p$-dimensional Gaussian vectors $(Z_{N,1},...,Z_{N,p})$ with 
    \[
        \C[Z_{N,i},Z_{N,j}]=(N\pi(t_i)\pi(t_j))^{-1}\C\Big[\sum_{l=1}^NY_l(t_i),\sum_{l=1}^NY_l(t_j)\Big]
    \]
    that satisfies
    \begin{align}
    \label{eq:TN:gauss:approx}
        \sup_{t \in \R}|\PR(T_N\geq t)-\PR(\max_{1 \leq i \leq p}|Z_{N,i}|\geq t)|=o(1).
    \end{align}       
    Using that 
    \[
        \C[Y_i(t_j),Y_k(t_m)]\lesssim \chi^{|i-k|}
    \]
    as well as Lemma \ref{lem:lrc:partial:sums} we conclude that $\C[Z_{N,i},Z_{N,j}]$ converges fast enough to apply Theorem 2 from \cite{chernozhukov:chetverikov:kato:2015} to find another Gaussian vector $(W_{N,1},...,W_{N,p})$ with
    \[
      \C[W_{N,i},W_{N,j}]=(\pi(t_i)\pi(t_j))^{-1}\sigma_{\textbf{Y}}(t_i,t_j),
    \]
    such that
    \[
        \sup_{t \in \R}|\PR(T_N\geq t)-\PR(\max_{1 \leq i \leq p}|W_{N,i}|\geq t)|=o(1)~.
    \]
    Next, we note that 
    \begin{align}
    \label{eq:empirical:count:confidence}
         \PR(\widehat{T}_N\geq t)&=\PR(T_N\geq t-(\widehat{T}_N- T_N))\\
         &\leq\PR(T_N\geq t-N^{-1/2+\delta})+o(1)\\
         &=\PR(\max_{1 \leq i \leq p}|W_{N,i}|\geq t-N^{-1/2+\delta})+o(1)\\
         &=\PR(\max_{1 \leq i \leq p}|W_{N,i}|\geq t)+o(1)
    \end{align}
       where the second line follows by Lemma \ref{lem:approx:Nx}, the third by the previous arguments, and the last by Theorem 3 from \cite{chernozhukov:chetverikov:kato:2015}. An analogous argument yields the reverse inequality and thus also the desired result.
\end{proof}

\begin{lem} 
\label{lem:gauss:tight}
    The Gaussian vectors $W_N$ in Lemmas \ref{lem:gauss:approx:missingness} and \ref{lem:gauss:approx:confidence} can be realized as finite-dimensional evaluations of continuous Gaussian processes $W_O,W_Y \in \mathcal{C}[0,1]$. In particular, irrespective of the choice of $t_1,...,t_p$, we have that, for $W_N$ in Lemma \ref{lem:gauss:approx:missingness}, that
    \[
        |W_N|\leq |W_O|_\infty=O_{\PR}(1),
    \]
    and, for $W_N$ in Lemma \ref{lem:gauss:approx:confidence}, that
    \[
       |W_N|\leq  |W_Y|_\infty=O_{\PR}(1)~.
    \]
\end{lem}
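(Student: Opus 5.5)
The plan is to define the limiting processes directly through their covariance kernels. We construct $W_O$ and $W_Y$ as centered Gaussian processes on $[0,1]$ with
\[
\C[W_O(t),W_O(s)]=\sigma_{\textbf{O}}(t,s), \qquad \C[W_Y(t),W_Y(s)]=(\pi(t)\pi(s))^{-1}\sigma_{\textbf{Y}}(t,s).
\]
We then show that both admit versions with continuous sample paths. Once this is done, the vector $W_N$ of Lemmas \ref{lem:gauss:approx:missingness} and \ref{lem:gauss:approx:confidence} can be realized as $(W_O(t_1),\ldots,W_O(t_p))$, respectively $(W_Y(t_1),\ldots,W_Y(t_p))$. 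This is legitimate because those lemmas only specify the law of $W_N$, which is the finite-dimensional distribution of the process at the grid points. The bound $\max_j|W_{N,j}|\le |W_\cdot|_\infty$ is then immediate and holds irrespective of the grid. Moreover, $|W_\cdot|_\infty<\infty$ almost surely for a continuous process on a compact set, so it is $O_{\PR}(1)$; its law does not depend on $N$.

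First we check that the kernels are well defined, i.e. the series converge absolutely and the resulting kernels are positive semidefinite. Convergence follows from Assumption (C) and the covariance bounds of Lemma \ref{lem:physic:dep} (e.g. $|\C[Y_0(t),Y_k(s)]|\lesssim\chi^{|k|}$ uniformly in $t,s$). Positive semidefiniteness follows because each kernel is the limit of $\C[N^{-1/2}\sum_{i\le N} Y_i(t),N^{-1/2}\sum_{i\le N} Y_i(s)]$ (Lemma \ref{lem:lrc:partial:sums}), and a limit of covariance kernels is a covariance kernel. Kolmogorov's extension theorem then yields the processes.

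The main step is a Hölder bound on the canonical metric. For $W_O$ we write
\[
\E[(W_O(t)-W_O(s))^2]=\sum_{k}\C[H_2(0,t,s),H_2(k,t,s)].
\]
We bound this by the long-run variance of the increment sequence $H_2(i,t,s)$. Using the standard physical-dependence bound $\sum_k|\C[Z_0,Z_k]|\lesssim(\sum_{i\ge0}\tau_Z(i))^2$, where $\tau_Z(0)\le\|Z_0\|_2$ and the remaining terms come from Assumption (D), we get $\lesssim (\|H_2(0,t,s)\|_2+|t-s|^{\vartheta})^2$. Since $O$ is an indicator, $\|H_2(0,t,s)\|_2^2=\E|O_0(t)-O_0(s)|$. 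This term is the obstacle: Assumption (B) only controls $\pi(t)-\pi(s)$, not $\E|O(t)-O(s)|$.

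I would try to avoid needing the $\|H_2(0,t,s)\|_2$ term by using (D) alone. Expand $H_2(k,t,s)=\sum_{l\ge0}P_{k-l}H_2(k,t,s)$ via projections. The physical-dependence bound gives $\|P_{k-l}H_2(k,t,s)\|_2\le\tau(l)$ for $l\ge1$. The $l=0$ term is $H_2(0,t,s)-\E[H_2(0,t,s)\mid\mathcal F_{-1}]$, which again is a problem. If this does not close, I would assume (as the authors likely intend) that $\tau$ is read with $i\ge 0$ so that $\tau(0)\lesssim 1$, i.e. $\|H_j(0,t,s)\|_2\lesssim|t-s|^\vartheta$. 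This yields
\[
\E[(W_O(t)-W_O(s))^2]\lesssim|t-s|^{2\vartheta}.
\]
The same argument for $Y=XO=(\mu+\varepsilon)O$ uses the decomposition $Y(t)-Y(s)=(\mu(t)-\mu(s))O(t)+\mu(s)H_2+\varepsilon(t)H_2+O(s)H_1$. Hölder's inequality with the fourth moments from (A) controls the products, giving a bound of order $|t-s|^{\vartheta'}$ for some $\vartheta'>0$. Dividing by $\pi(t)\pi(s)$ keeps Hölder continuity, since $\pi$ is Hölder continuous and bounded below by Assumption (B).

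Finally, a Gaussian process with $\E[(W(t)-W(s))^2]\lesssim|t-s|^{2\beta}$ has finite Dudley entropy integral, since $\int_0\sqrt{\log(1/\epsilon^{1/\beta})}\,d\epsilon<\infty$. Hence it has a version with continuous sample paths, which gives $W_O,W_Y\in\mathcal C[0,1]$ and completes the argument.
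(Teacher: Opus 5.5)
Your argument is correct, and its skeleton matches the paper's. Both proofs use Kolmogorov extension, then a H\"older bound on $\E[(W(t)-W(s))^2]$, then a continuity criterion. The paper uses Kolmogorov--Chentsov with Gaussian moment equivalence; you use Dudley's entropy integral, and here the two are interchangeable.

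The difference is in the key step. The paper shows H\"older continuity of the kernel $\sigma_{\textbf{Y}}$ itself. It expands the kernel with the product formula of Lemma \ref{lem:lrv:formulas} and bounds covariance differences such as $|\C[X_k(t),X_0(t)]-\C[X_k(s),X_0(s)]|\lesssim|t-s|^\vartheta\chi^k$ via Lemma \ref{lem:physic:dep} (v). It writes out only the diagonal $\sigma_{\textbf{Y}}(t,t)$ and defers the off-diagonal case to ``similar arguments''. You instead bound the canonical metric directly as the long-run variance of the increment sequence $Y_i(t)-Y_i(s)$, using the physical dependence of its four-term decomposition. This buys you three things:
\begin{itemize}
\item it handles the full kernel in one stroke, which is what $\sigma(t,t)-2\sigma(t,s)+\sigma(s,s)$ actually requires;
\item it bypasses Lemma \ref{lem:lrv:formulas};
\item for $W_O$ it gives exponent $2\vartheta$ instead of $\vartheta$.
\end{itemize}
The paper's route is shorter given the lemmas it already has.

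On the point you were unsure about: the $l=0$ projection term indeed cannot be removed. Reading (D) with $i\ge0$ (equivalently $\|H_j(0,t,s)\|_2\lesssim|t-s|^\vartheta$) is not an extra assumption relative to the paper, because the paper uses it tacitly.
\begin{itemize}
\item Lemma \ref{lem:physic:dep} (v) is proved by summing over all projection levels $i\ge0$.
\item The $k=0$ case of the bound above is $|\mathrm{Var}(X_0(t))-\mathrm{Var}(X_0(s))|\lesssim|t-s|^\vartheta$. This fails for i.i.d.\ $\varepsilon_i(t)=a(t)Z_i$ with $a$ bounded, bounded away from $0$ and discontinuous, even though (D) holds for $i\ge1$.
\end{itemize}
The lemma itself fails under the literal reading. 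Take i.i.d.\ $O_i(t)=\mathbb{I}\{Z_i(t)\le0\}$ with $\{Z_i(t)\}_{i,t}$ i.i.d.\ standard normal, and any well-behaved $\varepsilon$. Then $\pi\equiv1/2$ and $\delta_O(i)=\tau(i)=0$ for $i\ge1$. But $\sigma_{\textbf{O}}(t,s)=\tfrac14\mathbb{I}\{t=s\}$, so $W_O$ is white noise and its maximum over $p$ distinct grid points diverges.

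One small fix: (C) controls third moments, not fourth, so Hölder with fourth moments from (A) does not directly give decay in the product terms. Either pair $L^3$ with $L^6$ using $|H_2|\le1$, or factor the $L^2$ norms using independence of the $\varepsilon$- and $O$-filters.
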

\begin{proof}
    We only handle the more difficult case for the vectors in Lemma \ref{lem:gauss:approx:confidence}; the other follows analogously. Existence of a Gaussian process defined on $[0,1]$ with covariances
    \[
        \C[W_Y(t),W_Y(s)]=(\pi(t)\pi(s))^{-1}\sigma_{\textbf{Y}}(t,s)
    \]
    follows easily from the Kolmogorov extension theorem and we thus need only verify its continuity. Note that for any integer $q$ it holds that
    \[
        \E[|W_Y(t)-W_Y(s)|^q]\leq C_q\E[|W_Y(t)-W_Y(s)|^2]^{q/2}
    \]
    so that we obtain the sample path continuity by the Kolmogorov-Chentsov theorem if $\E[|W_Y(t)-W_Y(s)|^2]\leq C|t-s|^\eta$ for some $C,\eta>0$. This in turn follows easily if we can show that $\sigma_{\mathbf{Y}}(t,s)$ is $\eta$-Hölder continuous. As an example we show that
    \[
        |\sigma_{\mathbf{Y}}(t,t)-\sigma_{\mathbf{Y}}(s,s)|\lesssim |t-s|^\eta
    \]
    and note that the general case follows by similar but notationally more laborious arguments. Using Lemma \ref{lem:lrv:formulas} and the fact that $Y_i(t)=X_i(t)O_i(t)$ we need to show that the functions $\E[X_i(\cdot)], \E[O_i(\cdot)],\sigma_{X(\cdot)}^2,\sigma_{O(\cdot)}^2$ and
    \[
        \sum_{i \in \Z}\C[X_0(\cdot)X_i(\cdot)]\C[O_0(\cdot)O_i(\cdot)]
    \]
   are $\eta$-Hölder continuous. Continuity of the expectations follow by Assumption (B). The other three terms can be handled similarly and we only inspect the last, most complicated one.  To that end we first notice that, by  Lemma \ref{lem:physic:dep} (v) and Assumption (D)
   \begin{align*}
       |\C[X_k(t),X_0(t)]-\C[X_k(s),X_0(s)]|&\leq |\C[X_k(t)-X_k(s),X_0(t)]|+|\C[X_k(s),X_0(s)-X_0(t)]|\\
       &\lesssim |t-s|^{\vartheta}\chi^k
   \end{align*}
   and similarly also
   \[
    |\C[O_0(t),O_k(t)]-\C[O_0(s),O_k(s)]|\lesssim |t-s|^\vartheta\chi^k~,
   \]
   from which the desired continuity follows by a classical triangle inequality argument and the summability of $\chi^k$.
\end{proof}

\begin{lem}
\label{lem:approx:Nx}
    Let $\mathcal{T}=\mathcal{T}_N$ be any sequence of finite subsets of $[0,1]$ such that $|\mathcal{T}|\lesssim N^{C^*}$ for some $C^*>0$. Then, for any $\delta>0$, it holds that
    \begin{align}
        \sqrt{N}|\hat S_N(\cdot)-S_N(\cdot)|_{\infty,\mathcal{T}}=O_{\PR}(N^{-1/2+\delta})
    \end{align}
\end{lem}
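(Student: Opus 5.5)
The key identity is
\[
\hat S_N(t)-S_N(t)=\Big(\frac{1}{\hat N(t)}-\frac{1}{N(t)}\Big)\sum_{i=1}^N X_i(t)O_i(t)=\frac{N(t)-\hat N(t)}{\hat N(t)}\cdot\frac{1}{N(t)}\sum_{i=1}^N Y_i(t),
\]
and since $N(t)^{-1}\sum_i Y_i(t)=S_N(t)+\mu(t)$, this gives
\[
\sqrt{N}\big(\hat S_N(t)-S_N(t)\big)=\sqrt N\,\frac{\pi(t)-\hat\pi_N(t)}{\hat\pi_N(t)}\,\big(S_N(t)+\mu(t)\big).
\]
This representation alone only gives $O_{\PR}(1)$ behaviour, because $\sqrt N(\hat\pi_N-\pi)$ is of order one. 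We therefore rewrite differently, centering $X_i=\mu+\varepsilon_i$:
\[
\sum_i X_iO_i=\mu\hat N+\sum_i\varepsilon_iO_i .
\]
Hence $\hat S_N(t)=\hat N(t)^{-1}\sum_i\varepsilon_i(t)O_i(t)$ and $S_N(t)=\mu(t)(\hat\pi_N(t)/\pi(t)-1)+N(t)^{-1}\sum_i\varepsilon_i(t)O_i(t)$. So $S_N$ contains the extra term $\mu(\hat\pi_N-\pi)/\pi$, which is $O_{\PR}(N^{-1/2})$ per point and not negligible after $\sqrt N$ scaling. Thus either $S_N$ is meant as the $\varepsilon$-centered version, or the lemma is false as literally read. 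I will interpret $S_N$ consistently with its use in Lemma \ref{lem:gauss:approx:confidence}, where the covariance is $\sigma_{\textbf Y}$ of the centered $Y$ process. That is, I treat the $\mu$-part as accounted for by centering, and prove the bound for
\[
D(t):=\sqrt N\Big(\frac{1}{\hat N(t)}-\frac1{N(t)}\Big)\sum_i\varepsilon_i(t)O_i(t)=\frac{\pi(t)-\hat\pi_N(t)}{\hat\pi_N(t)\pi(t)}\cdot N^{-1/2}\sum_i\varepsilon_i(t)O_i(t).
\]

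The plan is to bound the three factors of $D$ uniformly over $\mathcal T$, exploiting that a maximum over polynomially many points costs only logarithmic factors, which are absorbed into $N^{\delta}$.

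First, the denominator. By Assumption (B), $\min_t\pi(t)\ge c>0$. Applying a Nagaev/Fuk–Nagaev-type inequality for physically dependent sequences (the moment bounds of Lemma \ref{lem:physic:dep}) to the bounded variables $O_i(t)-\pi(t)$, whose dependence decays geometrically by (C), gives a union bound over $|\mathcal T|\lesssim N^{C^*}$ points: $\max_{t\in\mathcal T}|\hat\pi_N(t)-\pi(t)|=O_{\PR}(N^{-1/2}\sqrt{\log N})$. Alternatively, Lemma \ref{lem:gauss:approx:missingness} together with Lemma \ref{lem:gauss:tight} gives $O_{\PR}(N^{-1/2})$ directly, applied to both $O$ and $-O$. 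Consequently $\min_{\mathcal T}\hat\pi_N\ge c/2$ with probability tending to one, so the convention $\hat\mu=0$ is irrelevant.

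Second, the cross term. We show $\max_{t\in\mathcal T}|N^{-1/2}\sum_i\varepsilon_i(t)O_i(t)|=O_{\PR}(N^{\delta})$. The summands are centered, because $\varepsilon$ and $O$ are independent by construction. They have uniform fourth moments by (A), and their physical dependence measure is geometric: by Hölder,
\[
\|\varepsilon_iO_i-\varepsilon_i'O_i'\|\le\|\varepsilon_i-\varepsilon_i'\|+\|\varepsilon_i'\|\,\|O_i-O_i'\|,
\]
and independence of the two coordinates lets the norms factor. A Burkholder-type bound then gives $\sup_t\|N^{-1/2}\sum_i\varepsilon_i(t)O_i(t)\|_3\lesssim1$, and Markov plus a union bound give $O_{\PR}(|\mathcal T|^{1/3})$, which is polynomial and too weak. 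This is the main obstacle. I would handle it by reusing the Gaussian approximation of Lemma \ref{lem:gauss:approx:confidence} (Zhang–Cheng) applied to the vector over $\mathcal T$, which yields $O_{\PR}(\sqrt{\log N})$ for the maximum. Failing that, I would truncate $\varepsilon$ at level $N^{\delta/2}$ and use an exponential (Freedman/Azuma on the $m$-dependent approximation) inequality for the truncated bounded part, controlling the truncation remainder by Markov using the fourth moment.

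Combining the three factors gives $\max_{\mathcal T}|D|\lesssim N^{-1/2}\sqrt{\log N}\cdot\sqrt{\log N}=O_{\PR}(N^{-1/2+\delta})$.
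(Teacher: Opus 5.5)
You are right that the lemma is false as stated whenever $\mu\not\equiv 0$, and this is exactly where the paper's own proof breaks down. The paper uses the same factorization as your first display,
\[
\sqrt N|\hat S_N-S_N|_{\infty,\mathcal T}\lesssim \Big|\frac{N(\cdot)-\hat N(\cdot)}{\hat N(\cdot)}\Big|_{\infty,\mathcal T}\,\Big|N^{-1/2}\sum_{i=1}^N X_i(\cdot)O_i(\cdot)\Big|_{\infty,\mathcal T},
\]
and asserts that the second factor is $O_{\PR}(1)$ by the Gaussian approximation \eqref{eq:TN:gauss:approx}. That approximation, however, controls the centred sums $N^{-1/2}\sum_i\{Y_i(t)-\mu(t)\pi(t)\}$. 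The uncentred sum is of order $\sqrt N\,\mu(t)\pi(t)$, so the paper's argument only works for $\mu\equiv 0$ on $\mathcal T$. A concrete counterexample within (A)--(E) is $X_i(t)\equiv 1+Z_i$, $O_i(t)\equiv B_i$, with $Z_i\sim\mathcal N(0,1)$ and $B_i\sim\mathrm{Bernoulli}(1/2)$ all independent. Then $\sqrt N(\hat S_N-S_N)(t)=-2\sqrt N(\hat\pi_N(t)-1/2)+O_{\PR}(N^{-1/2})\overset{d}{\to}\mathcal N(0,1)$.

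For your corrected quantity $D$, the proof is the paper's proof with the centring done properly. You use the same factorization, bound the denominator via Lemmas~\ref{lem:gauss:approx:missingness} and~\ref{lem:gauss:tight}, and bound the numerator via a Zhang--Cheng approximation for $\varepsilon_iO_i$. That approximation applies because the long-run variance of $\varepsilon_iO_i$ is at least $\pi(t)^2\sigma_\varepsilon^2(t)$, by Lemma~\ref{lem:lrv:formulas}. That part is sound.

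One correction to your framing: the $\varepsilon$-centred reading is \emph{not} consistent with the covariance $\sigma_{\textbf{Y}}$ in Lemma~\ref{lem:gauss:approx:confidence}. It is the literal $S_N$ (centred $Y$) whose normalized long-run covariance is $(\pi(t)\pi(s))^{-1}\sigma_{\textbf{Y}}(t,s)$. Your $N(t)^{-1}\sum_i\varepsilon_i(t)O_i(t)$ instead has long-run covariance governed by $\sigma_{\varepsilon O}(t,s):=\sum_k\E[\varepsilon_0(t)\varepsilon_k(s)]\E[O_0(t)O_k(s)]$. The computation of Lemma~\ref{lem:lrv:formulas}(i) gives $\sigma_{\textbf{Y}}(t,s)=\sigma_{\varepsilon O}(t,s)+\mu(t)\mu(s)\sigma_{\textbf{O}}(t,s)$.

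Since $\sqrt N(\hat\mu-\mu)=\sqrt N\hat S_N$ does not involve $\mu$ at all, the correct limit of $\widehat T_N$ is the maximum of a Gaussian with covariance $(\pi(t)\pi(s))^{-1}\sigma_{\varepsilon O}(t,s)$. Your repair therefore has to be carried through Lemma~\ref{lem:gauss:approx:confidence} and \eqref{e:def:cov:W}; it cannot simply be treated as a reinterpretation. With $\sigma_{\textbf{Y}}$ as written, the difference kernel $\mu(t)\mu(s)\sigma_{\textbf{O}}(t,s)$ is positive semidefinite. By Anderson's inequality, the resulting band is then conservative rather than of exact asymptotic level $1-\alpha$. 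You cannot keep both the lemma as stated and the $\sigma_{\textbf{Y}}$ limit.
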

\begin{proof}
    We begin by observing that
    \begin{align*}
        \sqrt{N}|\hat S_N(\cdot)-S_N(\cdot)|_{\infty,\mathcal{T}} \leq& \Bigg|\frac{N(\cdot)-\hat N(\cdot)}{\hat N(\cdot)}\Bigg|_{\infty,\mathcal{T}}\Bigg|N^{-1/2}\sum_{i=1}^{N}X_i(\cdot)O_i(\cdot)\Bigg|_{\infty,\mathcal{T}}
    \end{align*}
    Using that $\inf_{t \in [0,1]}\pi(t)>0$ as well as Lemma \ref{lem:gauss:approx:confidence} and eq. \eqref{eq:TN:gauss:approx}, we obtain that the second term on the right hand side is $O_{\PR}(1)$. The first factor can be handled similarly, using Lemma \ref{lem:gauss:approx:missingness} in place of Lemma \ref{lem:gauss:approx:confidence} to obtain that, with high probability
    \[
        \hat N(\cdot)/N\geq \inf_{t \in [0,1]}\pi(t)/2
    \]
    as well as 
    \[
    N^{-1}\Big|N(\cdot)-\hat N(\cdot)\Big|_{\infty,\mathcal{T}}=O_{\PR}(N^{-1/2+\delta})~,
    \]
    which yields the desired result.
\end{proof}

\subsection{Proof: Theorem \ref{theo:confidence:bands}}

\begin{proof}
Without loss of generality we consider the case where $0=t_1$ and $ t_p=1$. Let us begin by noting that $\mu(\cdot) \notin [\hat C^-(\cdot),\hat C^+(\cdot)]$ is equivalent to there existing some $s \in [0,1]$ and some  $i \in \{1,...,p-1\}$ such that either 
\begin{align}
\label{eq:mu:lower:deviation}
    \mu(st_i+(1-s)t_{i+1})<s\hat C^-(t_i)+(1-s)\hat C^-(t_{i+1}),
\end{align}
or
\begin{align}
\label{eq:mu:upper:deviation}
     \mu(st_i+(1-s)t_{i+1})>s\hat C^+(t_i)+(1-s)\hat C^+(t_{i+1}).
\end{align}
For illustration, let us consider the case eq. \eqref{eq:mu:lower:deviation}. By Assumption (B) we have that $\mu$ is $\vartheta$-Hölder continuous, implying that there exists a constant $K_1$ such that
\[
    |\mu(st_i+(1-s)t_{i+1})-\mu(t_i)|\leq K_1 |t_i-t_{i+1}|^\vartheta\leq C_1N^{-C^*\vartheta}.
\]
Therefore \eqref{eq:mu:lower:deviation} implies
\[
    \mu(t_i)<s(\hat C^-(t_i)+K_1N^{-C^*\vartheta})+(1-s)(\hat C^-(t_{i+1})+K_1N^{-C^*\vartheta})~.
\]
For this to hold true it is necessary that either $\mu(t_i)<\hat C^-(t_i)+K_1N^{-C^*\vartheta}$ or $\mu(t_i)<\hat C^-(t_{i+1})+K_1N^{-C^*\vartheta}$. Again using the continuity of the $\mu(\cdot)$ we thus have that \eqref{eq:mu:lower:deviation} implies that one of the inequalities 
\begin{align*}
    \mu(t_i)&<\hat C^-(t_i)+2K_1N^{-C^*\vartheta}\\
    \mu(t_{i+1})&<\hat C^-(t_{i+1})+2K_1N^{-C^*\vartheta}
\end{align*}
holds. A similar argument as for the case \eqref{eq:mu:upper:deviation} then yields that $\Big\{\mu(\cdot) \notin [\hat C^-(\cdot),\hat C^+(\cdot)]\Big\}$ is a subset of
\[
 A_N=\Big\{\exists i \in \{1,...,p\}: \mu(t_i)\notin [\hat C^-(t_i)+2K_1N^{-C^*\vartheta},\hat  C^+(t_i)-2K_1N^{-C^*\vartheta}] \Big\}~.
\]
Rewriting this set in terms of $\hat S_N$ and using Lemma \ref{lem:gauss:approx:confidence} we thus obtain 
\begin{align*}
    \PR(A_N)&=\PR(\max_{1 \leq i \leq p}|\hat S_N(t_i)|>q_{1-\alpha} N^{-1/2}+2K_1N^{-C^*\vartheta})\\
    &=\PR(\max_{1 \leq i \leq p}|W_{N,i}|>q_{1-\alpha}+2K_1N^{-C^*\vartheta+1/2})+o(1).
\end{align*}
Next we apply Theorem 3 from \cite{chernozhukov:chetverikov:kato:2015} (Lemma \ref{lem:lrv:formulas} ensures the required bounds for the variances of the Gaussian $W_N$, $C^*\vartheta>1/2$ that the resulting bound goes to 0) to obtain that 
\[
    \PR(\max_{1 \leq i \leq p}|W_{N,i}|>q_{1-\alpha}+2K_1N^{-C^*\vartheta+1/2})=\PR(\max_{1 \leq i \leq p}|W_{N,i}|>q_{1-\alpha})+o(1),
\]
from which the desired result follows by choosing $q$ as the $(1-\alpha)$-quantile of $|W_Y|_\infty$ in Lemma \ref{lem:gauss:tight}.
\end{proof}

\subsection{Lemmas about Covariances and Moments}

Here and in the following we will, for a stationary sequence $\textbf{X}=(X_i)$, define its expected value by $\E[\textbf{X}]:=\E[X_0]$ and its long-run variance by $\sigma^2_{\textbf{X}}:=\sum_{i\in \Z}\E[(X_i-\E[\mathbf{X}])(X_0-\E[\mathbf{X}])]$~.

\begin{lem}
\label{lem:lrv:formulas}
    Consider four jointly stationary sequences $\textbf{X}=(X_i),\textbf{O}=(O_i),\textbf{W}=(W_i)$ and $\textbf{Z}=(Z_i)$ of random variables for which $\textbf{X}$ is independent of $\textbf{O}$ and $\textbf{W}$ is independent of $\textbf{Z}$. Further assume that their autocovariances  and those of their products are absolutely summable. Then
    \begin{itemize}
        \item[(i)] \begin{align*}        \sigma_{\bf XO,WZ}&=\E[\bf X]\E[W]\sigma_{\bf O,Z}+\E[O]\E[Z]\sigma_{\bf X,W}\\& \quad \quad+\sum_{i \in \Z}\C[X_0,W_i]\C[O_0,Z_i]+\sum_{i \in \Z}\C[X_0W_{i},O_0Z_{i}]~.        
    \end{align*}   
    In particular, if $\textbf{X}=\textbf{W}$ and $\textbf{O}=\textbf{Z}$, we have 
    \begin{align*}
    \sigma_{\bf XO}^2=\E[\textbf{O}]^2\sigma_{\textbf{X}}^2+\E[\textbf{X}]^2\sigma_{\bf O}^2+{\sum_{i \in \Z}\C[X_0,X_i]\C[O_0,O_i]}
    \end{align*}
    \item[(ii)] Assuming that  $\bf\E[O]^2\sigma_X^2>0$, we have
    \[
      \bf\sigma_{XO}^2>0~.
    \]
    \end{itemize}
\end{lem}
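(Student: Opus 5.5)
The plan for (i) is a purely algebraic identity for each fixed lag $i$, followed by summation over $i\in\Z$. The stated independence hypotheses ($\mathbf X\perp\mathbf O$, $\mathbf W\perp\mathbf Z$) are used only to factor the subtracted mean term. By definition and joint stationarity,
\[
\sigma_{\bf XO,WZ}=\sum_{i\in\Z}\Big(\E[X_0O_0W_iZ_i]-\E[X_0O_0]\,\E[W_iZ_i]\Big).
\]
Since $X_0\perp O_0$ and $W_i\perp Z_i$, the subtracted term equals $\E[\mathbf X]\E[\mathbf O]\E[\mathbf W]\E[\mathbf Z]$.

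For the fourth moment, I regroup the product as $(X_0W_i)(O_0Z_i)$ and use the definition of covariance twice:
\[
\E[X_0W_iO_0Z_i]=\C[X_0W_i,O_0Z_i]+\E[X_0W_i]\,\E[O_0Z_i].
\]
Then I substitute $\E[X_0W_i]=\C[X_0,W_i]+\E[\mathbf X]\E[\mathbf W]$ and $\E[O_0Z_i]=\C[O_0,Z_i]+\E[\mathbf O]\E[\mathbf Z]$. Expanding the product gives four terms:
\[
\C[X_0,W_i]\C[O_0,Z_i],\qquad \E[\mathbf X]\E[\mathbf W]\,\C[O_0,Z_i],\qquad \E[\mathbf O]\E[\mathbf Z]\,\C[X_0,W_i],\qquad \E[\mathbf X]\E[\mathbf W]\E[\mathbf O]\E[\mathbf Z].
\]
The last of these cancels exactly against the subtracted mean term. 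Hence, for each $i$,
\[
\C[X_0O_0,W_iZ_i]=\E[\mathbf X]\E[\mathbf W]\C[O_0,Z_i]+\E[\mathbf O]\E[\mathbf Z]\C[X_0,W_i]+\C[X_0,W_i]\C[O_0,Z_i]+\C[X_0W_i,O_0Z_i].
\]
Every series on the right converges absolutely by the summability assumption, so summing over $i$ is legitimate. Using $\sum_i\C[O_0,Z_i]=\sigma_{\bf O,Z}$ and $\sum_i\C[X_0,W_i]=\sigma_{\bf X,W}$ yields (i).

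For the special case $\mathbf W=\mathbf X$ and $\mathbf Z=\mathbf O$, the last term is $\C[X_0X_i,O_0O_i]$. This vanishes because $(X_0,X_i)$ is independent of $(O_0,O_i)$, which holds since the whole sequence $\mathbf X$ is independent of $\mathbf O$. The displayed formula for $\sigma^2_{\bf XO}$ follows.

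For (ii), the terms $\E[\mathbf O]^2\sigma_{\mathbf X}^2$ and $\E[\mathbf X]^2\sigma_{\mathbf O}^2$ are handled directly: long-run variances are nonnegative, being limits of $N^{-1}\mathrm{Var}(\sum_{k\le N}\cdot)$. The remaining task, and the main (mild) obstacle, is to show that $\sum_i\C[X_0,X_i]\C[O_0,O_i]\ge 0$. I would write $\tilde X_i=X_i-\E[\mathbf X]$ and $\tilde O_i=O_i-\E[\mathbf O]$. By independence, $\C[X_0,X_i]\C[O_0,O_i]=\E[\tilde X_0\tilde O_0\tilde X_i\tilde O_i]$. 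The sequence $(\tilde X_i\tilde O_i)$ is stationary and centered, so this is its autocovariance at lag $i$. The sum is therefore the long-run variance of $(\tilde X_i\tilde O_i)$, which equals $\lim_N N^{-1}\mathrm{Var}(\sum_{k=1}^N\tilde X_k\tilde O_k)\ge 0$; the limit exists by absolute summability via the Cesàro argument. Combining the three terms gives $\sigma^2_{\bf XO}\ge\E[\mathbf O]^2\sigma^2_{\mathbf X}>0$.
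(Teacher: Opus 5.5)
Your proposal is correct and follows the paper's proof essentially step for step. For (i), both regroup the fourth moment as $(X_0W_i)(O_0Z_i)$, expand $\E[X_0W_i]$ and $\E[O_0Z_i]$, cancel the product of the four means, and sum over lags. For (ii), both identify $\sum_i \C[X_0,X_i]\C[O_0,O_i]$ as the nonnegative long-run variance of the centered product sequence $(\tilde X_i\tilde O_i)$.

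The only differences are cosmetic. You verify that identification directly via independence, whereas the paper applies (i) to the centered sequences. You also state explicitly that $\E[\mathbf{X}]^2\sigma^2_{\mathbf{O}}\ge 0$, which the paper leaves implicit.
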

\begin{proof}

    Group the product as
    \[
    X_tO_tW_{t-h}Z_{t-h}
    =(X_tW_{t-h})(O_tZ_{t-h}),
    \]
    so
    \[
    \mathbb{E}[X_tO_tW_{t-h}Z_{t-h}]
    =
    \C[X_tW_{t-h},\,O_tZ_{t-h}]
    +\mathbb{E}[X_tW_{t-h}]\,\mathbb{E}[O_tZ_{t-h}] .
    \]
    Next observe that
    \[
    \mathbb{E}[X_tW_{t-h}]=\E[\textbf{X}]\E[\textbf{W}]+\C[X_t,W_{t-h}],
    \qquad
    \mathbb{E}[O_tZ_{t-h}]=\E[\textbf{O}]\E[\textbf{Z}]+\C[O_t,Z_{t-h}]~.
    \]
    Therefore
    \begin{align*}
        \mathbb{E}[X_tO_tW_{t-h}Z_{t-h}]
    &=
    \C[X_tW_{t-h},O_tZ_{t-h}]
    +\E[\textbf{X}]\E[\textbf{O}]\E[\textbf{W}]\E[\textbf{Z}]
    +\E[\textbf{X}]\E[\textbf{W}]\C[O_t,Z_{t-h}]\\
    &\quad\quad +\E[\textbf{O}]\E[\textbf{Z}]\C[X_t,W_{t-h}]
    +\C[X_t,W_{t-h}]\C[O_t,Z_{t-h}].
    \end{align*}    
    Now we subtract $\E[\textbf{X}]\E[\textbf{O}]\E[\textbf{W}]\E[\textbf{Z}]$ on both sides and sum over all lags to obtain part (i) of the lemma.\\  
   For the second part we observe that for $\tilde{\mathbf{X}} = \mathbf{X}-\E\mathbf{X}$ 
   and $\tilde{\mathbf{O}}=\mathbf{O}-\E\mathbf{O}$ we have
   $0 \le \sigma_{\bf \tilde X \tilde O}^2 = \sum_{i \in \Z}\C[\tilde X_0,\tilde X_i]\C[\tilde O_0,\tilde O_i] = \sum_{i \in \Z}\C[X_0,X_i]\C[O_0,O_i]$. The desired conclusion then follows from the second part of (i).
\end{proof}

For a sequence  $\{\mathcal{F}_i\}_{i \in \Z}=\{(...,\eta_{i-1},\eta_i)\}_{i \in \Z}$ consisting of iid random variables, we denote by 
\[
    P_iX=\E[X|\mathcal{F}_i]-\E[X|\mathcal{F}_{i-1}]
\]
whenever $X$ is integrable. Note that here $X$ need not be generated as in equation \eqref{eq:filter:formalism}, further we also define
\[
    \mathcal{F}_i'=(...,\eta_{-1},\eta_0',\eta_1,...,\eta_i)
\]
where $\eta'_0$ is an independent copy of $\eta_0$. For the convenience of the reader we also record some properties of the projections $P_i$ and the resulting consequences that are well known in the literature:
\begin{lem}
\label{lem:physic:dep}
    \begin{enumerate}
        \item[(i)] $P_iX$ is uncorrelated with any $\mathcal{F}_{i-1}$-measurable square integrable random variable.
        \item[(ii)] If $X$ is $\mathcal{F}_k$ measurable we have that 
        \[
            X-\E[X]=\sum_{i=0}^\infty P_{k-i}(X)
        \]
        \item[(iii)] For some  filter $G(\mathcal{F}_i):\mathcal{S}^\Z\to \R$ let
        \[
           \delta_q(G,i)=\|G(\mathcal{F}_i)-G(\mathcal{F}_i')\|_q~.    
        \]
        From now on assume that $X_i$ is generated as follows $X_i=G(\mathcal{F}_i)$. Then, we have
        \[
            \|P_0(X_i)\|_q \leq \delta_q(G,i)
        \]
        \item[(iv)] Assume that for some monotonically decreasing function $f$ with $(f(n))_{n \in \N}\in \ell_1$ we have
        \[
            \delta_q(G,i)\leq f(i).
        \]
        Then
        \[
            |\C[X_0,X_{k}]|\leq Kf(k)
        \]
        for some constant $K$ that depends only on $f$. 
        \item[(v)] For some filter $G(t,\mathcal{F}_i):[0,1] \times \mathcal{S}^\Z \to \R$ and any $0<\alpha\leq 1$ let
        \[
            \tau_2(G,i):=\sup_{s,t \in [0,1]}|t-s|^{-\alpha}\|G(t,\mathcal{F}_i)-G(s,\mathcal{F}_i)-(G(t,\mathcal{F}_i')-G(s,\mathcal{F}_i'))\|_2~.
        \]
       Assume that for some monotonically decreasing function $f$ with $(f(n))_{n \in \N}\in \ell_2$ we have
        \[
            \tau_2(G,i)\leq f(i)~.
        \]
        Then 
        \[
           |\C[G(t,\mathcal{F}_i)-G(s,\mathcal{F}_i),G(t,\mathcal{F}_0)-G(s,\mathcal{F}_0)]| \leq K |t-s|^\alpha f(i)
        \]
        for some constant $K$ that depends only on $f$.
    \end{enumerate}
\end{lem}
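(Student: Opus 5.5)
The five items are standard facts about Wu's projection operators and physical dependence, and I will prove them in order, each using the previous ones.

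For (i), let $Z$ be $\mathcal{F}_{i-1}$-measurable and square integrable. Since $\mathcal{F}_{i-1}\subset\mathcal{F}_i$, the tower property gives $\E[Z\,\E[X|\mathcal{F}_i]]=\E[ZX]=\E[Z\,\E[X|\mathcal{F}_{i-1}]]$. Hence $\E[ZP_iX]=0$, and since $\E[P_iX]=0$ this is exactly uncorrelatedness.

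For (ii), write the telescoping sum $X-\E[X|\mathcal{F}_{k-m-1}]=\sum_{i=0}^{m}P_{k-i}X$. As $m\to\infty$ the conditional expectations form a reverse martingale. By Lévy's downward theorem, $\E[X|\mathcal{F}_{k-m-1}]$ converges to $\E[X|\mathcal{F}_{-\infty}]$. The tail $\sigma$-field $\bigcap_m\mathcal{F}_{-m}$ is trivial by Kolmogorov's 0–1 law, since the $\eta_i$ are iid, so the limit is $\E[X]$. Convergence holds in $L^q$ whenever $X\in L^q$.

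For (iii), the key observation is $\E[G(\mathcal{F}_i')|\mathcal{F}_0]=\E[G(\mathcal{F}_i)|\mathcal{F}_{-1}]$. This holds because $\eta_0'$ is independent of everything and plays the role of an integrated-out $\eta_0$. Therefore $P_0X_i=\E[G(\mathcal{F}_i)-G(\mathcal{F}_i')|\mathcal{F}_0]$, and conditional Jensen gives $\|P_0X_i\|_q\le\delta_q(G,i)$.

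For (iv), apply (ii) to both $X_0$ and $X_k$ and expand. By (i), different projections $P_j,P_l$ with $j\ne l$ are orthogonal, so
\[
\C[X_0,X_k]=\sum_{j\ge0}\E[P_{-j}X_0\,P_{-j}X_k].
\]
Stationarity gives $\|P_{-j}X_k\|_2=\|P_0X_{k+j}\|_2\le\delta_2(G,k+j)$. Cauchy–Schwarz then bounds the covariance by $\sum_j f(j)f(j+k)$. Since $f$ is decreasing, $f(j+k)\le f(k)$, so the covariance is at most $f(k)\sum_j f(j)=Kf(k)$, with $K:=\|f\|_{\ell_1}$.

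This step needs $\delta_2\le\delta_q$, so I take $q\ge2$, as in the paper's uses with $q=3$. One also needs a bound for $j=0$ on $\|P_0X_0\|_2$, which is fine.

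For (v), apply the same argument to the increment filter $H(\mathcal{F}_i)=G(t,\mathcal{F}_i)-G(s,\mathcal{F}_i)$. By assumption, its physical dependence measure satisfies $\delta_2(H,i)\le|t-s|^\alpha f(i)$. The computation in (iv) then yields a bound $|t-s|^{2\alpha}\sum_jf(j)f(j+i)$. Because $|t-s|\le1$, we have $|t-s|^{2\alpha}\le|t-s|^\alpha$.

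To get the factor $f(i)$, bound $\sum_jf(j)f(j+i)\le f(i)\sum_j f(j)$. This requires $f\in\ell_1$, not merely $\ell_2$. If only $f\in\ell_2$ were available, Cauchy–Schwarz would give the weaker bound $\|f\|_2\,(\sum_{j\ge i}f(j)^2)^{1/2}$ instead. In the paper's application, assumption (D) has $f(i)=\chi^i$, so this point is harmless.

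The only genuinely delicate step is (iii), specifically the identity for the coupled conditional expectation; the rest is bookkeeping.
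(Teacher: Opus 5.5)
Your proof is correct and follows essentially the same route as the paper: the tower/orthogonal-projection property for (i), telescoping with reverse-martingale convergence and Kolmogorov's 0--1 law for (ii), the coupling identity $\E[X_i'|\mathcal{F}_0]=\E[X_i|\mathcal{F}_{-1}]$ plus $L^q$-contraction for (iii), and the diagonal projection expansion with Cauchy--Schwarz for (iv) and (v). Your caveat on (v) is also right: the paper's ``proceed as in (iv)'' silently uses $\sum_j f(j)f(j+i)\lesssim f(i)$, which needs $f\in\ell_1$ (it fails e.g.\ for $f(n)=(1+n)^{-0.6}\in\ell_2$), so the $\ell_2$ hypothesis should be read as $\ell_1$; this is harmless for the paper because assumption (D) gives $f(i)=\chi^i$.
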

\begin{proof}
    \begin{enumerate}
        \item[(i)] Follows immediately by noting that $X\to \E[X|\mathcal{F}_{i-1}]$ is the orthogonal projection onto the space of $\mathcal{F}_{i-1}$-measurable square integrable random variables.
        \item[(ii)] Note that 
        \[
            \sum_{i=0}^m P_{k-i}(X)=X-\E[X|\mathcal{F}_{k-m-1}]=:X-Q_{m}X.
        \]
        Clearly $(Q_mX,\mathcal{F}_{k-m-1})$ is a reverse martingale and thus converges almost surely to some random variable $G$. By construction $G$ is measurable with respect to the terminal sigma algebra of $(\eta_i)_{i \in \Z}$ and thus constant by Kolmogorov's 0-1 law. By taking expectations of the above equation we observe that the limit must be $G=\E[X]$, finishing the proof. 
        \item[(iii)] We begin by noting that 
        \[
            \E[X_i|\mathcal{F}_{-1}]=\E[X_i'|\mathcal{F}_{-1}]=\E[X_i'|\mathcal{F}_0]
        \]
        so that 
        \[
            P_0(X_i)=\E[X_i-X_i'|\mathcal{F}_0]~.
        \]
        The conclusion follows by the fact that conditional expectations are $L_q$-contractions.
        \item[(iv)] 
        WLOG we assume that the means are zero. Observe that 
        \begin{align*}
           \E[X_0X_{k}]&=\E[\sum_{i=0}^\infty X_0P_{k-i}X_k]
            =\sum_{i \ge k}^\infty\E[X_0P_{k-i}X_k]\\
           &=  \sum_{i \ge 0}^\infty\E[X_0P_{-i}X_k] =  \sum_{i \ge 0}^\infty\E[P_{-i}X_0P_{-i}X_k] \le \sum_{i \ge 0}^\infty\|P_{-i}X_0\|_2 \|P_{-i}X_k\|_2\\
            &\leq \sum_{i=0}^\infty\delta_2(k+i)\delta_2(i)\\
            &\lesssim f(k)
        \end{align*}
        The first line follows by (ii), the second by (i) and the fourth by (iii). 
        \item[(v)] We proceed as in the proof of (iv) with the only difference being that  in the third line we use the bound
        \begin{align*}
           & \|P_{-i}(G(t,\mathcal{F}_0)-G(s,\mathcal{F}_0))\|_2\leq \|G(t,\mathcal{F}_i)-G(s,\mathcal{F}_i)-(G(t,\mathcal{F}_i')-G(s,\mathcal{F}_i'))\|_2\\
            \leq & \tau_2(G,i)|t-s|^\alpha  
        \end{align*}
        instead of (iii). The first inequality follows by the same arguments as in the proof of (iii), while the second follows by the definition of $\tau_2(G,i)$.
    \end{enumerate}
    
\end{proof}

\begin{lem}
\label{lem:lrc:partial:sums}
    Let $\textbf{X}$ and $\textbf{O}$ be two centered, jointly stationary sequences and define 
    \[
        S_l^k=\sum_{i=l+1}^kX_i, \quad T_l^k=\sum_{i=l+1}^kO_i
    \]
    and assume that for some $C>0$ 
    \[
        |\E[X_iO_k]|=:|\gamma_{\bf O,X}(k-i)|\leq C\chi^{|k-i|}~.
    \]
    Then, for $l<m$ and letting $N_1=\min\{k-l,h-m\}$, we have
    \[
        |N_1^{-1}\C[S_l^k,T_m^h]-(1-(m-l)/N_1)_+\sigma_{\bf X,O}|=O(N_1^{-1})~.
    \]
    where the hidden constants depend only on $C$. 
\end{lem}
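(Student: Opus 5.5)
The plan is a direct computation: expand the covariance as a double sum of cross-covariances, reorganise it by lag, and compare the result with the overlap length times $\sigma_{\bf X,O}=\sum_{d\in\Z}\gamma_{\bf O,X}(d)$.

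By bilinearity and joint stationarity,
\[
\C[S_l^k,T_m^h]=\sum_{i=l+1}^{k}\sum_{j=m+1}^{h}\gamma_{\bf O,X}(j-i).
\]
For each lag $d\in\Z$, let $n(d)$ be the number of pairs $(i,j)$ with $l<i\le k$, $m<j\le h$ and $j-i=d$. Then
\[
\C[S_l^k,T_m^h]=\sum_{d\in\Z}n(d)\gamma_{\bf O,X}(d).
\]
The index set of $i$ is an interval of length $k-l$, and that of $j$ is an interval of length $h-m$, shifted by $m-l$ relative to the first. Elementary counting gives
\[
|n(d)-L|\le |d|,
\]
where $L$ is the length of the overlap of the two index intervals $\{l+1,\dots,k\}$ and $\{m+1,\dots,h\}$ (that is, the number of common indices). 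The bound holds uniformly in $d$ as long as $L$ is measured correctly.

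From this,
\[
\Big|\C[S_l^k,T_m^h]-L\,\sigma_{\bf X,O}\Big|\le \sum_{d}|d|\,|\gamma_{\bf O,X}(d)|\le C\sum_d|d|\chi^{|d|}=O(1),
\]
with a constant depending only on $C$ (and $\chi$). We also need to control the tail where $n(d)=0$ but $L>0$. This is already covered: there $|d|\ge L$, so the contribution is bounded by $L\sum_{|d|\ge L}\chi^{|d|}$, which is itself $O(1)$. Dividing by $N_1$ yields
\[
\big|N_1^{-1}\C[S_l^k,T_m^h]-(L/N_1)\,\sigma_{\bf X,O}\big|=O(N_1^{-1}).
\]

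The main obstacle is the final identification $L/N_1=(1-(m-l)/N_1)_+$. This identity holds when the two blocks have equal length, or more generally when the overlap is $\big(\min(k,h)-m\big)_+$ and $\min(k,h)-l=N_1$. It is exactly the configuration used in the multiscale application, where blocks of common length $N_1$ are shifted by $m-l$. I would verify it by the case analysis $m\ge k$ (overlap zero), $m<k\le h$, and $k>h$. In the unequal-length configurations the formula must be read with $N_1$ equal to the common length. Where it fails by a bounded amount, the discrepancy is absorbed into $O(N_1^{-1})$ only if the lengths differ by $O(1)$. I would therefore state explicitly that the lemma is applied to equal-length blocks, as in $U_{k-l}^k$ versus $U_k^{k+l}$. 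Everything else is routine geometric-series bookkeeping.
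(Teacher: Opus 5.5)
Your argument is correct, and it proves more than the paper's proof does. You and the paper both expand the covariance by lag. The paper then fixes $N_1=k-l$ ``without loss of generality'' and works out by hand only the configuration $0\le m-l\le N_1$, $h-m\ge 2N_1$: a tail term $I_2=O(\chi^{N_1})$, plus the two halves $\tau<d$ and $\tau\ge d$ of $I_1$, each compared with $(N_1-d)\gamma_{\bf O,X}(\tau)$. All other cases are left to ``similar arguments''. Your observation that $d\mapsto n(d)$ is $1$-Lipschitz with $n(0)=L$ handles every configuration at once and gives the explicit bound
\[
\bigl|\C[S_l^k,T_m^h]-L\,\sigma_{\bf X,O}\bigr|\le \sum_{d\in\Z}|d|\,|\gamma_{\bf O,X}(d)|\le \frac{2C\chi}{(1-\chi)^2}.
\]

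Your doubt about the final identification is justified, and it exposes a defect in the statement rather than in your proof. Because $l<m$ breaks the symmetry, the paper's reduction to $N_1=k-l$ is not without loss of generality. The identity $L=N_1(1-(m-l)/N_1)_+$ holds whenever $k-l\le h-m$, not only for equal lengths. It fails when the later block is strictly shorter. For example, take $X_i=O_i$ i.i.d.\ with unit variance, $l=0$, $m=n$, $k=h=2n$. Then $N_1=n$ and $N_1^{-1}\C[S_l^k,T_m^h]=1$, while the lemma predicts $0$.

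Restricting to equal-length blocks, as you suggest, would not cover the paper's use, and your claim that equal lengths is the configuration used in the multiscale application is inaccurate. In Theorem~\ref{theo:gauss:approx:multiscale}, the covariances of $M_N(l_1,k_1,t_{j_1})$ and $M_N(l_2,k_2,t_{j_2})$ involve blocks of lengths $l_1\neq l_2$ in arbitrary relative position, and $\mathcal{L}$ is exactly a signed sum of four overlap lengths. The appropriate repair is to state the lemma in your overlap form, $|\C[S_l^k,T_m^h]-|\{l+1,\dots,k\}\cap\{m+1,\dots,h\}|\,\sigma_{\bf X,O}|=O(1)$, which you have already established.
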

\begin{proof}
    Assume WLOG that $N_1\leq N_2$ and $N_1=k-l,N_2=h-m$. We exemplarily inspect the case $0\leq d=m-l\leq N_1, N_2\geq 2N_1$ as it is the most involved one. All others follow by similar arguments. We begin by expanding 
    \begin{align*}
        \C[S_l^k,T_m^h]&=\sum_{\substack{l+1 \leq  i \leq k\\ m+1 \leq j \leq h  }}\C[X_i,O_j]\\
        &=\sum_{s=0}^{N_1-1}\sum_{r=0}^{N_2-1}\gamma_{\bf O,X}(d+r-s)\\
        &=\sum_{\tau=-N_1+1}^{N_2-1}c(\tau)\gamma_{\bf O,X}(d+\tau)\\
        &=\sum_{\tau=d-N_1+1}^{d+N_2-1}c(\tau-d)\gamma_{\bf O,X}(\tau)\\
        &=\sum_{\tau=d-N_1+1}^{d+N_1-1}c(\tau-d)\gamma_{\bf O,X}(\tau)+\sum_{\tau=d+N_1}^{d+N_2-1}c(\tau-d)\gamma_{\bf O,X}(\tau)\\
        &=:I_1+I_2
    \end{align*}
    where
    \[
            c(\tau)=\begin{cases}
            N_1 \land (N_2-\tau), &\quad \tau \ge 0\\
            N_1+\tau, &\quad \tau<0~.
        \end{cases}
    \]
    Now let us analyze each term separately. First we show that $I_2$ is negligible:
    \begin{align*}
        &\sum_{\tau=d+N_1}^{d+N_2-1}c(\tau-d)|\gamma_{\bf O,X}(\tau)|\\
         = &\sum_{\tau=N_1}^{N_2-1}c(\tau)|\gamma_{\bf O,X}(\tau+d)|\\
        \le&\sum_{\tau=N_1}^{N_2-1}(N_2-\tau)|\gamma_{\bf O,X}(\tau+d)|\\
        =&\sum_{\tau=0}^{N_2-N_1-1}(N_2-N_1-\tau)|\gamma_{\bf O,X}(\tau+d+N_1)|\\
        \leq & C\chi^{N_1}
    \end{align*}
    for some $C>0$. Dividing by $N_1$ yields that $I_2=o(N_1^{-1})$\\

    For $I_1$ we analyze the cases $\tau<d$ and $\tau\geq d$ separately. For the former we have $c(\tau)=N_1-|\tau-d|$, yielding
    \begin{align*}
        &\Big|\sum_{\tau=d-N_1+1}^{d-1}(N_1-|\tau-d|)\gamma_{\bf O,X}(\tau)-\sum_{\tau=d-N_1+1}^{d-1}(N_1-d)\gamma_{\bf O,X}(\tau)\Big|\\
        \leq&\sum_{\tau=d-N_1+1}^{d-1}|\tau \cdot \gamma_{\bf O,X}(\tau)|~.
    \end{align*}
    Similarly, noting that $c(\tau)=N_1$ for $\tau\geq 0$ because $N_2\geq 2N_1$, we have
    \begin{align*}
        &\Big|\sum_{\tau=d}^{d+N_1-1}N_1\gamma_{\bf O,X}(\tau)-\sum_{\tau=d}^{d+N_1-1}(N_1-d)\gamma_{\bf O,X}(\tau)\Big|\\
        \leq&\sum_{\tau=d}^{d+N_1-1}d|\gamma_{\bf O,X}(\tau)|\leq Cd\chi^d.
    \end{align*}    
    Dividing by $N_1$ and summing up the inequalities yields the desired result if we can show that
    \[
        N_1^{-1}\sum_{\tau=d-N_1+1}^{d+N_1-1}(N_1-d)\gamma_{\bf O,X}(\tau)=(1-d/N_1)_+\sigma_{\bf X,O}+O(N_1^{-1})~.
    \]
    To that end observe
    \begin{align*}
        &\Big|\sum_{\tau=d-N_1+1}^{d+N_1-1}(N_1-d)\gamma_{\bf O,X}(\tau)-(N_1-d)\sigma_{\bf X,O}\Big|\\
        \leq&N_1 \sum_{\tau=d+N_1}^\infty|\gamma_{\bf O,X}(\tau)|+(N_1-d)\sum^{d-N_1}_{\tau=-\infty}|\gamma_{\bf O,X}(\tau)|\\
        \lesssim & N_1\chi^{N_1}+(N_1-d)\chi^{N_1-d}=O(1)
    \end{align*}
    which yields the desired bound upon dividing by $N_1$.
\end{proof}

\subsection{Dependent Missingness}
\begin{lem}
    The missingness indicator in \eqref{eq:defin:waiting:missing} satisfies Assumptions (B) - (D). 
\end{lem}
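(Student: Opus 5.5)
The proof exploits the Markov structure of the construction \eqref{eq:defin:waiting:missing}. The state at the end of day $i$, $Z_i:=O_i(1)\in\{0,1\}$, is a function of $Z_{i-1}$ and $\eta_i$ only. On day $i$, $O_i(\cdot)$ is a deterministic function of $(Z_{i-1},\eta_i)$. Hence $Z_i$ is a two-state Markov chain with transition probabilities
\[
a:=\PR(Z_i=0\mid Z_{i-1}=1),\qquad b:=\PR(Z_i=1\mid Z_{i-1}=0).
\]
These probabilities are bounded away from $0$ and $1$: with probability at least some $c>0$ the first waiting time already exceeds $1$, which keeps the state, and with probability at least $c$ exactly one switch occurs. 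For both bounds we use that the rates $\lambda_j$ are bounded away from $0$ and $\infty$.

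\emph{Assumption (C).} Replacing $\eta_0$ by $\eta_0'$ changes $O_i$ for $i\ge1$ only through $Z_0$ versus $Z_0'$. We couple the two chains driven by the common $\eta_1,\eta_2,\dots$. Once $Z_k=Z_k'$, they agree forever, and then $O_i=O_i'$ for $i>k$. For coalescence at the next step it suffices that both starting states lead to the same end state. One event that achieves this for any pair of starting states is the following: the first waiting time in the $S$-sequence exceeds $1$ (so a chain started in state $1$ stays $1$), and the $T$-sequence has exactly one switch before time $1$, i.e.\ $E_{i,3}<1<E_{i,3}+E_{i,5}$ (so a chain started in state $0$ ends in state $1$). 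This event has probability $\ge c>0$, independently across days. Therefore $\PR(Z_k\neq Z_k')\le (1-c)^k$. Since $|O_i-O_i'|\le1$, we get $\sup_t\|O_i(t)-O_i'(t)\|_3\le \PR(Z_{i-1}\ne Z'_{i-1})^{1/3}=\mathcal O(\chi^i)$ with $\chi=(1-c)^{1/3}$.

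\emph{Assumption (B).} Stationarity requires starting $Z$ in its invariant law $(b,a)/(a+b)$; the filter representation over $\mathcal F_i$ exists by the same coupling from the past. Then $\pi(t)=\sum_{z}\PR(Z=z)\,\PR(O_1(t)=1\mid Z_0=z)$. The conditional probability is the probability that the renewal-type process has an even or odd number of jumps before $t$. It is Lipschitz in $t$, because the probability of any jump in $[s,t]$ is at most $\E[\#\text{jumps in }[s,t]]\le \sup_j\lambda_j|t-s|$, where the intensity is bounded since each waiting time is exponential with bounded rate. Hence $\pi$ is $1$-Hölder, so $\vartheta=1$ works. Positivity holds since $\PR(O_1(t)=1)\ge \PR(Z_0=1)\PR(E_{1,2}>1)>0$.

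\emph{Assumption (D), the main step.} We need $\|H_2(i,t,s)-H_2'(i,t,s)\|_2\lesssim \chi^i|t-s|^\vartheta$. The difference vanishes on $\{Z_{i-1}=Z'_{i-1}\}$. On the complement it is bounded by $|O_i(t)-O_i(s)|+|O_i'(t)-O_i'(s)|$, and each of these is at most the indicator that a jump occurs in $[s,t]$ on day $i$. Conditioning on $\mathcal F_{i-1}$ and $\mathcal F'_{i-1}$ decouples the two factors: the event $\{Z_{i-1}\ne Z_{i-1}'\}$ is measurable with respect to the past, while the jump event depends on $\eta_i$ given the starting state. For either starting state, the conditional probability of a jump in $[s,t]$ is $\le C|t-s|$. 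Hence
\[
\E|H_2-H_2'|^2\le 4\,\PR(Z_{i-1}\ne Z'_{i-1})\,C|t-s|\lesssim (1-c)^{i}|t-s|,
\]
so $\tau(i)=\mathcal O(\chi^i)$ with Hölder exponent $\vartheta=1/2$ on the $L^2$ scale. We then take $\vartheta=1/2$ overall, which is allowed because Hölder-$1$ implies Hölder-$1/2$ on $[0,1]$.

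The delicate point is this decoupling step. It relies on the bound on the jump probability in $[s,t]$ holding uniformly over all sequences $(\lambda_j)$ and over the number of prior switches. Handling that uniformity is where I would spend the most care, bounding the jump count by a Poisson process with rate $\sup_j\lambda_j$.
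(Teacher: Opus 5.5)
Your proposal is correct and follows essentially the same route as the paper. The end-of-day state $O_{i-1}(1)$ drives a coupling in which a $\sigma(\eta_i)$-measurable coalescence event, with probability bounded below, gives the geometric decay in (C). For (D), $\PR(\Delta_i(t,s)\neq 0)$ is bounded by the product of the past-mismatch probability and the probability of a jump in $[s,t]$, which yields the $L^2$-H\"older exponent $\vartheta=1/2$.

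The only real variation is in (B): you bound $|\pi(t)-\pi(s)|$ by the expected number of jumps in $[s,t]$ (intensity at most $\sup_j\lambda_j$), whereas the paper sums density bounds of the form $(\lambda_{\max}t)^{k-1}/(k-1)!$. You also make explicit several points the paper leaves implicit: the positivity of $\pi$, the existence of the stationary filter via coupling from the past, and the choice $\vartheta=1/2$.
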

\begin{proof}
    \begin{itemize}
        \item[(B)] We will show that $\pi(t)=\PR(O_i(t)=1)$ is Lipschitz continuous. To that end we note that
        \[
            \pi(t)=\PR(O_i(t)=1,O_{i-1}(1)=0)+\PR(O_i(t)=1,O_{i-1}(1)=1)
        \]
       Observe that $O_i(t)=1, O_{i-1}(1)=0$ holds iff $T_{i,2j}\leq t \leq T_{i,2j+1}$ for some $j$ and $O_{i-1}(1)=0$ holds. An analogous statement holds for the event in the rightmost probability. Using the independence of the $\eta_i$ we hence merely need to show that the sums
        \[
            \sum_{k=1}^\infty\PR(S_{i,2k}\leq t \leq S_{i,2k+1}), \quad \sum_{k=1}^\infty\PR(T_{i,2k}\leq t \leq T_{i,2k+1}) 
        \]
        are Lipschitz in $t$. Let $p_{i,k}$ and $q_{i,k}$ denote the densities of $S_{i,k}$ and $T_{i,k}$ and $\lambda_{\min}$ and $\lambda_{\max}$ a lower and upper bound for the exponential intensities, respectively. Note that 
        \begin{align*}
            \PR(S_{i,2k}\leq t \leq S_{i,2k+1})&=\int_0^tp_{i,2k}(x)\PR(E_{i,2k+1}>t-x|S_{i,2k}=x)dx\\
            &=\int_0^tp_{i,2k}(x)e^{-\lambda_{2k+1}(t-x)}dx
        \end{align*}

        and that $\lambda_{\max}<\infty$. It therefore suffices to show that (recall $t \in [0,1]$)
        \[
            \sum_{k=1}^\infty |p_{i,k}|_{\infty, [0,1]}+|q_{i,k}|_{\infty, [0,1]}<\infty~.
        \]
        This follows by noting that
        \[
            p_{i,k}(t)\leq \frac{(\lambda_{\max}t)^{k-1}}{(k-1)!}e^{-\lambda_{\min}t}
        \]
        which follows by a straightforward induction using that the density of $S_{i,k}$ is the convolution of exponential densities. The statement for $q_{i,k}$ follows analogously. 
        \item[(C)] Define
        \[
            \Delta_i(t)=G_2(t,\mathcal{F}_i)-G_2(t,\mathcal{F}_i')=:O_i(t)-O'_i(t)
        \]
        and note that the desired conclusion follows if we can show that
        \[
            \PR(\Delta_i(t)\neq 0)\leq \chi^i
        \]
        for some $\chi<1$. By definition of $O_i$ the event $\{\Delta_i(t) \neq 0\}$ is a subset of the event $\{O_{i-1}(1)\neq O_{i-1}'(1)\}$. Hence
        \begin{align*}
            \PR(\Delta_i(t)\neq 0)&\leq \PR(O_{i-1}(1)\neq O_{i-1}'(1))\\
            &=1-\PR(O_{i-1}(1)= O_{i-1}'(1))\\
            &=1-\PR(O_{i-2}(1)=O_{i-2}'(1))-\PR(O_{i-2}(1)\neq O_{i-2}'(1),O_{i-1}(1)= O_{i-1}'(1))~.
        \end{align*}
        Next we observe that $\{O_{i-2}(1)\neq O_{i-2}'(1),O_{i-1}(1)= O_{i-1}'(1)\}$ may be rewritten as $A_i\cap\{O_{i-2}(1)\neq O_{i-2}'(1)\}$. Here
        \begin{align*}
            A_i=\{\exists j=2k+1, k\in \N : \text{Swap}(O_{i-1})=\text{Swap}(O_{i-1})+j\}
        \end{align*}
        where $\text{Swap}(O_i)$ counts the number of jumps of the function $O_i(\cdot)$. This set is $\sigma(\eta_{i-1})$-measurable. Hence, by independence,
        \begin{align*}
            &1-\PR(O_{i-2}(1)=O_{i-2}'(1))-\PR(O_{i-2}(1)\neq O_{i-2}'(1),O_{i-1}(1)=O_{i-1}'(1))\\
            =&1-\PR(O_{i-2}(1)=O_{i-2}'(1))-\PR(A_i)\PR(O_{i-2}(1)\neq O_{i-2}'(1))\\
            =&(1-\PR(A_i))\PR(O_{i-2}(1)\neq O_{i-2}'(1))
        \end{align*}
        Iterating this argument the desired result follows with $\chi=(1-\PR(A_i))$ (all $A_i$ have the same probability) because $\PR(O_{-1}(1)=O_{-1}'(1))=1$.        
        \item[(D)] Define 
        \[
            \Delta_i(t,s)=H_2(i,t,s)-H'_2(i,t,s)=:O_i(t)-O_i(s)-(O_i'(t)-O'_i(s))~.
        \]
        Note that $\Delta_i$ can only ever be non-zero when 
        \begin{itemize}
            \item[(i)]  one of the increments is non-zero 
            \item[(ii)] $O_{i-1}(1)\neq O_{i-1}'(1)$
        \end{itemize}
        Thus 
        \[
            \{\Delta_i(t,s)\neq 0\}\subset \{O_{i-1}(1)\neq O_{i-1}'(1)\}\cap \Bigg(\bigcup_{k=1}^\infty\{S_{i,k}\in[s,t] \}\cup \bigcup_{k=1}^\infty\{T_{i,k}\in[s,t] \}\Bigg)
        \]
        Using that
        \[
            \|H_2(t,s,\mathcal{F}_i)-H_2(t,s,\mathcal{F}_i') \|_2^2\leq 4\PR(\Delta_i(t,s)\neq 0)
        \]
        and independence of the $\eta_i$ we hence only need to show that
        \[
            \PR(O_{i-1}(1)\neq O_{i-1}'(1))\leq \chi^i, \quad \PR\Big(\bigcup_{k=1}^\infty\{S_{i,k}\in[s,t]\} \Big)+\PR\Big(\bigcup_{k=1}^\infty\{T_{i,k}\in[s,t]\} \Big)\lesssim |t-s|~.
        \]
        We have already established the first condition during the proof of (C). The second conditions follows by similar but easier arguments than those for (B).
    \end{itemize}
\end{proof}

\subsection{Proofs: Multiscale Statistics}
We will only provide proofs for the statistics \eqref{e:hatM}, the results for \eqref{e:hatR} follow by using the bound
\[
    \widehat R_N\leq \max_{1 \leq j \leq p}\max_{1 \leq l<k\leq N}\frac{\sum_{i=l+1}^k\Big((X_iO_i)(t_j)-\E[(X_iO_i)(t_j)]\Big)}{\sqrt{k-l}\log^2(eN/(k-l))\hat \pi_N(t_j)}
\]
which is valid whenever $H_{0}^\Delta$ holds. From thereon one can proceed exactly as for the test based on \eqref{e:hatM}. Before we continue we define the auxiliary quantity
\[
    M_N(t)=\max_{1 \leq l< k  \leq N-l}\frac{U_{k-l}^k(t)-U_k^{k+l}(t)}{\pi(t)\sqrt{l}\log^2(eN/l)}~.
\]
and note that Theorem \ref{theo:local:inference} is an immediate corollary of Theorem \ref{theo:gauss:approx:multiscale} and Lemma \ref{lem:empirical:count:multiscale} which are proven below. Note that under $H_0^{stat}$ we may assume WLOG that $\mu_i(t)=0$ for all $t$ and $i$.

\begin{theo}
\label{theo:gauss:approx:multiscale}
    Let $p=p(N)$ be a sequence satisfying $p \sim N^{C^*}$ and let $t_i=i/p\in [0,1], 1 \leq i \leq p$ be an array of points. Then there exists a sequence of mean-zero Gaussian vectors $W_N \in \R^p$ and a $C>0$ such that 
    \begin{align*}
        \sup_{x\geq C}&|\PR(\max_{(j,l,k)\in A}|W_{N,j,k,l}|>x)-\PR(\max_{1 \leq j \leq p}|M_N(t_j)|>x)|=o(1)\\
        &\C[W_{N,j_1,k_1,l_1},W_{N,j_2,k_2,l_2}]=\frac{\sigma_{\textbf{Y}}(t_{j_1},t_{j_2})\mathcal{L}(k_1,k_2,l_1,l_2)}{\pi(t_{j_1})\pi(t_{j_2})\rho_{2}(l_2/N)\rho_{2}(l_1/N)}~.
    \end{align*}
    where 
    \begin{align*}
         A&=\{(j,l,k)| \ 1 \leq j \leq p, l>N^{0.9}, l< k \leq N-l\}\\
        \rho_{2}(x)&=x^{1/2}\log^{2}(e/x)\\
        \mathcal L(k_1,k_2,l_1,l_2)
=&\ 
\big(\min\{k_1,k_2\}
     - \max\{k_1-l_1,\; k_2-l_2\}\big)_+ \\
&+
\big(\min\{k_1+l_1,\; k_2+l_2\}
     - \max\{k_1,\; k_2\}\big)_+ \\
&-
\big(\min\{k_1,\; k_2+l_2\}
     - \max\{k_1-l_1,\; k_2\}\big)_+ \\
&-
\big(\min\{k_1+l_1,\; k_2\}
     - \max\{k_1,\; k_2-l_2\}\big)_+ .
    \end{align*}
    Further, letting $B_W$ be the Brownian motion associated to the Gaussian process $W_Y$ with covariance $\C[W_Y(t),W_Y(s)]=(\pi(t)\pi(s))^{-1}\sigma_{\textbf{Y}}(t,s)$, we have that
    \[
        \sup_{x\geq C}|\PR(\max_{(j,l,k)\in A}|W_{N,j,k,l}|>x)-\PR(\Phi(B_W)>x)|=o(1)~.
    \]   
\end{theo}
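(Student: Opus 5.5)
Under $H_0^{stat}$ we may take $\mu_i\equiv0$, so $U_{k-l}^k(t)-U_k^{k+l}(t)$ is a difference of centered partial sums of $Y_m(t)=\varepsilon_m(t)O_m(t)$. This quantity has mean zero because $\varepsilon$ and $O$ are independent. The plan has four steps.

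\textbf{Step 1 (small scales).} We split the index set into small scales $l\le N^{0.9}$ and the large-scale set $A$, and show that the small-scale part is eventually dominated. For fixed $(j,l,k)$ with $l\le N^{0.9}$, the normalized increment $(U_{k-l}^k-U_k^{k+l})(t_j)/(\pi(t_j)\sqrt l)$ has sub-Gaussian tails. For this we use (A') together with the condition $\sup_q q^{-1/2}\Theta_q<\infty$ in (C'), via a Burkholder/Wu-type moment bound of the form $\|\sum Y_m\|_q\lesssim \sqrt{q l}\,\Theta_q\cdots$. The products $\varepsilon_m O_m$ inherit the dependence measures since $|O_m|\le1$. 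A union bound over the $pN^2$ indices, with $\log p\lesssim\log N$, gives a maximum of order $\sqrt{\log N}$. The weight satisfies $\log^2(eN/l)\ge \log^2(eN^{0.1})\asymp\log^2N$, so the small-scale maximum is $O_{\PR}((\log N)^{-3/2})$. It therefore lies below any $x\ge C$ with probability tending to one. The same bound applied to the Gaussian counterpart shows its small scales are negligible too. Hence for $x\ge C$ both tail probabilities are determined by the indices in $A$.

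\textbf{Step 2 (Gaussian approximation on $A$).} On $A$ we write the statistic as a maximum over $|A|\lesssim pN^2$ coordinates of normalized sums $N^{-1/2}\sum_m a_{m}(j,k,l)Y_m(t_j)$. The weights are $a_m=\pm \sqrt N/(\pi(t_j)\sqrt l\log^2(eN/l))$ on the two windows. We apply the extension Theorem~\ref{thm:zhang:cheng:extension} of Corollary 2.2 in \cite{zhang:cheng:2018}, which allows coordinate-dependent weights and non-uniform variances. The following checks are needed:
\begin{itemize}
\item Since $l>N^{0.9}$, the coefficients are bounded by $N^{0.05}$. Because $\log^2(eN/l)\le C$ on $A$ up to logs, the variances are bounded above and below up to polylog factors. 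The lower bound uses Lemma~\ref{lem:lrv:formulas}(ii) and (E), and the upper bound uses Lemma~\ref{lem:physic:dep}(iv).
\item Dependence summability follows from $\delta_3(i)=\mathcal O(\chi^i)$.
\item Moments follow from (A'), using truncation at level $\sqrt{\log N}$.
\end{itemize}
The covariance of the approximating Gaussian is then computed with Lemma~\ref{lem:lrc:partial:sums}. It equals $\sigma_{\textbf Y}(t_{j_1},t_{j_2})\mathcal L(\cdot)/(\pi\pi N\rho_2\rho_2)$ up to an error of $O(N^{-0.9})$, because $\mathcal L$ counts signed overlaps of the four windows. A Gaussian comparison (Theorem 2 of \cite{chernozhukov:chetverikov:kato:2015}) replaces the exact covariance by this limit, and Theorem 3 of \cite{chernozhukov:chetverikov:kato:2015} (anti-concentration) turns the resulting distributional closeness into the claimed uniform statement over $x\ge C$.

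\textbf{Step 3 (identification with $\Phi(B_W)$).} This step is the main obstacle. We realize $W_{N,j,k,l}$ as $(B_W(x+y,t_j)-2B_W(x,t_j)+B_W(x-y,t_j))/\rho_2(y)$ with $x=k/N$, $y=l/N$. Since $B_W(\cdot,t)$ has independent increments with covariance kernel $(\pi\pi)^{-1}\sigma_{\textbf Y}$ by Lemma~\ref{lem:gauss:tight}, its second differences reproduce exactly the covariance $\mathcal L/(N\rho_2\rho_2)$, so the two maxima coincide in law. It remains to show that the discrete maximum over the grid $j\le p$, $k/N$, $l/N\ge N^{-0.1}$ is close to $\Phi(B_W)$ on $\{x\ge C\}$. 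We proceed in two parts:
\begin{itemize}
\item \emph{Small $y$ in the limit.} The Lévy modulus of continuity for the Banach-valued Brownian motion, combined with the extra $\log^2$ weight, makes $\sup_{y\le N^{-0.1}}$ of the normalized second differences tend to zero. This holds uniformly in $t$ because $|W_Y|_\infty<\infty$ a.s.
\item \emph{Discretization.} Continuity in $t$ uses the Hölder bound on $\sigma_{\textbf Y}$ from Lemma~\ref{lem:gauss:tight}. Continuity in $(x,y)$ uses Brownian path regularity, which gives a grid error of $o_{\PR}(1)$.
\end{itemize}
Anti-concentration of $\Phi(B_W)$ on $[C,\infty)$ then converts the almost-sure closeness into closeness of the distribution functions. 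This anti-concentration follows since $\Phi(B_W)$ is the supremum of a Gaussian process with variances bounded below, so Tsirelson-type continuity of its law away from the lower part of its support applies. The threshold $C$ is chosen so that the small-scale contributions from Step 1 lie below it with high probability.
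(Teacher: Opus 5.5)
Your proposal is correct and follows essentially the paper's route. It shows the scales $l\le N^{0.9}$ are negligible via exponential tail bounds and the $\log^2(eN/l)$ weight, applies Theorem~\ref{thm:zhang:cheng:extension} on $A$ with covariances from Lemma~\ref{lem:lrc:partial:sums} and a Gaussian comparison, and realizes $W_N$ through normalized second differences of $B_W$, whose path regularity together with the density of the grid yields $\Phi(B_W)$. You also make explicit the continuity of the law of $\Phi(B_W)$ on $[C,\infty)$, which the paper's final step uses only tacitly.

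Two side claims should be corrected, though neither breaks the argument. First, the tails in Step 1 are not sub-Gaussian. With $\Theta_q\lesssim\sqrt q$ your moment bound gives at best $\|\sum Y_m\|_q\lesssim q\sqrt l$, and for $Y_m=\varepsilon_mO_m$ the term $\varepsilon_m'(O_m-O_m')$ costs a further factor $\sqrt q$. So you only get tails of order $\exp(-cx^{2/3})$, and the small-scale maximum is $O_{\PR}((\log N)^{-1/2})$ rather than $O_{\PR}((\log N)^{-3/2})$, which is still below any fixed $C$. Second, the Gaussian process behind $\Phi(B_W)$ has variances of order $\log^{-4}(e/y)\to0$, so they are not bounded below. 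Tsirelson's theorem needs only a.s.\ boundedness and non-degeneracy, however, and still gives a continuous law on $(0,\infty)$.
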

\begin{proof}
    We first define 
    \begin{align*}
        \mathcal{Y}_{i,l,k}(t)=\begin{cases}
            \frac{\sqrt{N}}{\sqrt{l}\log^2(eN/l)\pi(t)}Y_i(t), &\quad k-l+1 <i \leq k\\
            \frac{-\sqrt{N}}{\sqrt{l}\log^2(eN/l)\pi(t)}Y_i(t), &\quad k+1\leq i\leq k+l\\
            0, &\quad else
        \end{cases}
    \end{align*}
    and note whenever $\max_{1 \leq j \leq p}|M_N(t_j)|\geq C$ holds we have by Lemma \ref{lem:small:scale:negligible} that
    \begin{align*}
        M_N(t)= &\max_{1 \leq j \leq p}\max_{l> N^{0.9}}\max_{l<k\leq N-l}\Big|N^{-1/2}\sum_{i=1}^N\mathcal{Y}_{i,l,k}(t_j)\Big|+o_{\PR}(1)\\
        =: &\max_{1 \leq j \leq p}\max_{l> N^{0.9}}\max_{l < k \leq N-l}|M_N(l,k,t_j)|+o_{\PR}(1)~. 
    \end{align*}
    It therefore suffices to establish the desired conclusion for the non-vanishing part of the right hand side quantity. We now want to apply Theorem \ref{thm:zhang:cheng:extension} and thus need to check its assumptions, in particular we want to apply \eqref{eq:GAP:bound}. Assumption (2.1) follows from noting that for $l>N^{0.9}$ we have
    \begin{align}
    \label{eq:factor:bound}
        \frac{\sqrt{N}}{\sqrt{l}\log^2(eN/l)\pi(t)} \lesssim N^{1/20}~.
    \end{align}    
    The restriction on the dimension is immaterial as our index set $A=\{(j,l,k)| \ 1 \leq j \leq p,l>N^{0.9},l<k\leq N-l\}$ grows polynomially. The decay assumptions on $\Theta_{k,j,q}$ and $\theta_{k,j,q}$ are an immediate corollary of \eqref{eq:factor:bound}, the definition of $\mathcal{Y}_{i,l,k}$ and Assumption (C'). The variance bounds in Assumption (2.3) follow by the same arguments as those when checking \eqref{ass:2} in the proof of Theorem \ref{lem:gauss:approx:confidence}.\\
    
    Consequently we may apply the theorem to obtain that there exists a sequence of Gaussian vectors $Z\in \R^q, q=|A|$ that satisfies
    \begin{align*}
        \sup_{x\geq C}&|\PR(\max_{1 \leq j \leq p}|M_N(t_j)|>x)-\PR(\max_{(j,l,k)\in A}|Z_{j,l,k}|>x)|=o(1)\\
        &\C[Z_{j_1,k_1,l_1},Z_{j_2,k_2,l_2}]=\C[M_N(l_1,k_1,t_{j_1}),M_N(l_2,k_2,t_{j_2})]~.
    \end{align*}
    Using that 
    \[
        \C[Y_i(t_j),Y_k(t_m)]\lesssim \chi^{|i-k|}
    \]
    in combination with Lemmas \ref{lem:lrc:partial:sums} and \ref{lem:gauss:comparison} then yields that we may find Gaussian vectors $W \in \R^q$ such that (assuming $k_1\leq k_2$)
    \begin{align*}
        \sup_{x \geq C}&|\PR(\max_{(j,l,k)\in A}|W_{j,k,l}|>x)-\PR(\max_{(j,l,k)\in A}|Z_{j,l,k}|>x)|=o(1),\\
        &\C[W_{j_1,k_1,l_1},W_{j_2,k_2,l_2}]=\frac{\sigma_{\textbf{Y}}(t_{j_1},t_{j_2})\mathcal{L}(k_1,k_2,l_1,l_2)}{\pi(t_{j_1})\pi(t_{j_2})\rho_{2}(l_2/N)\rho_{2}(l_1/N)}~,
    \end{align*}    
    which establishes the first part. For the second part we note that Lemma \ref{lem:gauss:tight} implies that $W_Y$ is an element of $\mathcal{C}[0,1]$. We further observe that
    \[
            M_{j,k,l}=\frac{B_W(k/N)(t_j)-B_W((k-l)/N)(t_j)-\Big(B_W((k+l)/N)(t_j)-B_W(k/N)(t_j)\Big)}{\sqrt{l/N}\log^{2}(eN/l)}
    \]
    has the same covariance structure as $W_{j,k,l}$, i.e. we may realize $W$ as (point evaluations of) weighted differences of increments of $B_W$. Any Brownian motion on a separable Banach space $\mathbb{B}$ takes values in the space $\mathcal{H}^{1/2,\gamma}(\mathbb{B}),$ for any $\gamma>1/2$. Specialized to this situation we thus obtain that 
    \[
        (t,s,x)\to(B_W(t)(x)-B_W(s)(x))/\rho_{2}(|t-s|)
    \]
    is an element of $C([0,1]^3)$. As the array $(t_j,k/N,l/N)_{(j,k,l) \in A}$ is (asymptotically) dense in $\{(x,y)| 0\leq y \leq \min(x,1-x)\}$ we therefore obtain
    \[
        \sup_{x\geq C}|\PR(\max_{(j,l,k)\in A}|W_{j,k,l}|>x)-\PR(\Phi(B_W)>x)|=o(1)
    \]
    as desired.
\end{proof}

\begin{lem}
    \label{lem:empirical:count:multiscale}
    We have that
    \[
        \sup_{x \in \R}|\PR(\max_{1 \leq j \leq p}|M_N(t_j)|>x)-\PR(\max_{1 \leq j \leq p} |\widehat M_N(t_j)|>x)|=o(1)~.
    \]
    In particular, in view of Theorem \ref{theo:gauss:approx:multiscale}, it then also holds that
    \[
        \sup_{x \in \R}|\PR(\Phi(B_W)>x)-\PR(\max_{1 \leq j \leq p}|\widehat M_N(t_j)|>x)|=o(1)~.
    \]
\end{lem}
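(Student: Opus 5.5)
The plan mirrors the passage from $T_N$ to $\widehat T_N$ in the proof of Lemma \ref{lem:gauss:approx:confidence}. We show that the replacement of $\pi(t_j)$ by $\hat\pi_N(t_j)$ costs only a negligible multiplicative perturbation, and then absorb that perturbation through anti-concentration of the Gaussian limit. The key observation is that, under $H_0^{stat}$ (with $\mu_i\equiv0$ w.l.o.g.), every summand $m(t,k,l)$ in $\widehat M_N(t)$ equals the corresponding summand of $M_N(t)$ multiplied by the factor $\pi(t)/\hat\pi_N(t)$, which does not depend on $k,l$. Hence, for each $j$,
\[
\widehat M_N(t_j)=\frac{\pi(t_j)}{\hat\pi_N(t_j)}\,M_N(t_j),
\qquad
\big|\max_{j}|\widehat M_N(t_j)|-\max_j|M_N(t_j)|\big|\le \max_j\Big|\frac{\pi(t_j)}{\hat\pi_N(t_j)}-1\Big|\cdot\max_j|M_N(t_j)|.
\]
By Lemma \ref{lem:gauss:approx:missingness}, applied with the grid $\{t_j\}$ and with both signs, together with Lemma \ref{lem:gauss:tight} for tightness, we have $\max_j|\hat\pi_N(t_j)-\pi(t_j)|=O_{\PR}(N^{-1/2})$. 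Since $\inf_t\pi(t)>0$, with probability tending to one $\hat\pi_N\ge\inf\pi/2$ on the grid, and the event $\hat\pi_N(t_j)=0$ is asymptotically negligible. It follows that $\max_j|\pi(t_j)/\hat\pi_N(t_j)-1|=O_{\PR}(N^{-1/2+\delta})$, as in Lemma \ref{lem:approx:Nx}.

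Next, we need $\max_j|M_N(t_j)|=O_{\PR}(\log N)$, or any bound of polynomially small order times $N^{1/2}$. A crude bound suffices. By (A') and (C'), exponential concentration holds for each partial sum, and a union bound over the polynomially many triples $(j,k,l)$ gives $\max|M_N|=O_{\PR}(\sqrt{\log N})$. Alternatively, this follows from Theorem \ref{theo:gauss:approx:multiscale} and Lemma \ref{lem:small:scale:negligible}, since $\max|M_N|$ is stochastically bounded. Either way, the difference of the two maxima is $o_{\PR}(N^{-1/2+2\delta})=:o_{\PR}(\epsilon_N)$.

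Then, exactly as in display \eqref{eq:empirical:count:confidence}, for every $x$,
\[
\PR\big(\max_j|\widehat M_N(t_j)|>x\big)\le \PR\big(\max_j|M_N(t_j)|>x-\epsilon_N\big)+o(1),
\]
with the reverse inequality obtained analogously. The remaining task, and the main obstacle, is to show that $\sup_x|\PR(\max_j|M_N(t_j)|>x\pm\epsilon_N)-\PR(\max_j|M_N(t_j)|>x)|=o(1)$, i.e.\ an anti-concentration statement for $\max_j|M_N(t_j)|$ itself. I would first try to transfer it from the Gaussian side. For $x\ge C$, Theorem \ref{theo:gauss:approx:multiscale} replaces $M_N$ by $\max_A|W_{N}|$, and Theorem 3 of \cite{chernozhukov:chetverikov:kato:2015} bounds the shift by $\epsilon_N\sqrt{\log|A|}\to0$. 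This uses that the variances of the $W_{N,j,k,l}$ on $A$ are bounded away from zero, which follows from (E) and Lemma \ref{lem:lrv:formulas}. Alternatively, one can argue via continuity of the law of $\Phi(B_W)$.

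For $x<C$ the Gaussian approximation is unavailable. There I would avoid anti-concentration altogether by a sandwich argument. Since $\pi/\hat\pi_N\in[1-\epsilon_N,1+\epsilon_N]$ with high probability, we get $\{\max|\widehat M_N|>x\}\subset\{\max|M_N|>x/(1+\epsilon_N)\}$ and the reverse inclusion with $1-\epsilon_N$. A multiplicative sandwich still requires continuity of the distribution, so if this fails the statement would have to be restricted to $x\ge C$. That restriction is all that Theorem \ref{theo:local:inference} actually needs, for $\alpha\le\alpha^*$. The second display of the lemma then follows by the triangle inequality combined with the second part of Theorem \ref{theo:gauss:approx:multiscale}, again on $x\ge C$.
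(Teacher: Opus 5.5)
Your argument is essentially the paper's: there, the bound $\max_{1\le j\le p}\sqrt N|M_N(t_j)-\widehat M_N(t_j)|=O_{\PR}(1)$ is obtained as in Lemma \ref{lem:approx:Nx}, which amounts to your factorization through $\pi(t_j)/\hat\pi_N(t_j)$ together with $\max_j|M_N(t_j)|=O_{\PR}(1)$, and then the shift-and-Gaussian-anti-concentration argument starting at \eqref{eq:empirical:count:confidence} is rerun. Your caveat about $x<C$ is justified: that argument only gives the range $x\ge C$ (after enlarging $C$), since Theorem \ref{theo:gauss:approx:multiscale} only holds there, and an unrestricted second display would be exactly the weak convergence that Remark \ref{rem:details} says cannot be shown; on that range your Nazarov step still goes through, but not because the variances on $A$ are bounded away from zero as you claim (they are only of order at least $\log^{-4}N$), rather because $\epsilon_N$ is polynomially small.
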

\begin{proof}
    Follows by the exactly the same arguments as those starting in equation \eqref{eq:empirical:count:confidence} where we use the bound
    \begin{align}
    \label{eq:count:to:prop:multiscale}
       \max_{1 \leq j \leq p} \sqrt{N}|M_N(t_j)-\widehat M_N(t_j)|=O_{\PR}(1)
    \end{align}   
    in place of Lemma \ref{lem:approx:Nx}. This bound can be obtained by exactly the same arguments as Lemma \ref{lem:approx:Nx}.
    
\end{proof}

\begin{lem}
\label{lem:small:scale:negligible}
    We have that for any $\kappa>0$ there exists a $C=C(\kappa)>0$ such that
    \[
        \PR\Big(\max_{l\leq N^{0.9}}\max_{l<k\leq N-l} \frac{U_{k-l}^k(t)-U_k^{k+l}(t)}{\pi(t)\sqrt{l}\log(eN/l)^{2}}>C\Big)\leq N^{-\kappa}~.
    \]
    In particular, by the union bound we may always find a $C>0$ such that
    \[
        \PR\Big(\max_{l\leq N^{0.9}}\max_{l<k\leq N-l} \frac{(|U_{k-l}^k(\cdot)-U_k^{k+l}(\cdot))/\pi(\cdot)|_{\infty,\mathcal{T}}}{\sqrt{l}\log(eN/l)^{2}}>C\Big)=o(1)
    \]
    for any $\mathcal{T}$ that satisfies $|\mathcal{T}|\leq N^{C^*}$.
\end{lem}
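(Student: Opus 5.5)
The plan is to prove an exponential tail bound for each fixed pair $(k,l)$ and each fixed $t$, and then pay for the maximum over pairs with a union bound. The key point is that the logarithmic weight $\log^2(eN/l)$ is at least $\log^2(eN^{0.1})\gtrsim \log^2 N$ on the range $l\le N^{0.9}$, which dominates the $\sqrt{\log N}$ growth of sub-Gaussian maxima over polynomially many indices.

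Under $H_0^{stat}$ we may take $\mu_i\equiv 0$, so $Y_i(t)=\varepsilon_i(t)O_i(t)$. Recall that the coordinates of $G$ are independent, hence $\E[Y_i(t)]=\E\varepsilon_i(t)\,\pi(t)=0$. Therefore $U_{k-l}^k(t)-U_k^{k+l}(t)$ is a signed sum of $2l$ centered terms of the stationary sequence $Y_i(t)=G_1(t,\mathcal F_i)G_2(t,\mathcal F_i)$. The steps are as follows.

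\begin{enumerate}
\item[(1)] Bound the physical dependence of the product. Since $|O_i|\le 1$, we have
\[
\|Y_i-Y_i'\|_q\le \|\varepsilon_i-\varepsilon_i'\|_q+\|\varepsilon_i'\|_{2q}\,\|O_i-O_i'\|_{2q}.
\]
Combining this with (A') (which gives $\|\varepsilon_1(t)\|_{q}\lesssim \sqrt q$ uniformly in $t$) and with (C') yields a dependence-adjusted norm $\sup_{q\ge2}q^{-1/2}\sum_i\delta_q^Y(i)<\infty$. Some care is needed because $\|O_i-O_i'\|_{2q}$ relates to $\delta_{2q}$; one can use $\|O_i-O_i'\|_{2q}\le\|O_i-O_i'\|_q^{1/2}$ for indicators, or simply work with $2q$ throughout.

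\item[(2)] Apply a Burkholder/Rosenthal-type moment inequality for physically dependent sums (e.g.\ Wu's 2005 inequality via the projections $P_i$ of Lemma \ref{lem:physic:dep}):
\[
\Big\|\sum_{i=a+1}^{a+l}Y_i(t)\Big\|_q\lesssim \sqrt{q}\,\sqrt{l}\sum_{i\ge0}\delta^Y_q(i)\lesssim q\sqrt l .
\]
This holds uniformly in $q\ge2$, $a$ and $t$. The orders in $q$ give a sub-exponential (or sub-Gaussian) tail: $\PR(|\cdot|>x\sqrt l)\le C\exp(-cx)$, or $\exp(-cx^2)$ if the sharper $\sqrt q$ rate survives.

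\item[(3)] Take $x=C\log^2(eN/l)\,\pi(t)\ge C'\log^2 N$ for $l\le N^{0.9}$, with $\pi$ bounded below by (B'). Each pair $(k,l)$ then has probability at most $\exp(-cC'\log^2N)$, which is smaller than any power of $N$. A union bound over the at most $N^2$ pairs gives the bound $N^{-\kappa}$ for any $\kappa$, once $C=C(\kappa)$ is large enough; in fact any $C>0$ works for large $N$.

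\item[(4)] For the second claim, apply a union bound over $t\in\mathcal T$ with $|\mathcal T|\le N^{C^*}$, choosing $\kappa>C^*$. The absolute value is handled by applying the first bound to $\pm Y$.
\end{enumerate}

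The main obstacle is step (2), specifically getting the $q$-dependence of the moment bound sharp enough to produce an exponential tail. Only a rate polynomial in $q$ is needed, because the $\log^2$ weight gives plenty of room. A tail like $\exp(-c\,x^{1/2})$ would still suffice, since $x\asymp\log^2N$ and so $\exp(-c\log N\cdot C^{1/2})$ beats $N^{-\kappa-2}$ for large $C$. If Wu's inequality with explicit constant $\sqrt q$ proves cumbersome, I would truncate $\varepsilon_i$ at level $\sqrt{\log N}$, which is possible by (A'), and apply a Bernstein/Freedman inequality to $m$-dependent approximations with $m\asymp\log N$, using the geometric decay $\delta_3(i)=\mathcal O(\chi^i)$ to control the approximation error.
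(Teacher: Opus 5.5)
Your proof is correct. It shares its first half with the paper but controls the maximum over $(k,l)$ differently.

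The first half agrees in substance. The paper also bounds the physical dependence measure of the product $Y_i=\varepsilon_iO_i$. It then invokes Theorem 3 of Wu and Wu (2016), which is exactly the Burkholder-moment-to-exponential-tail conversion you sketch in step (2).

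The routes diverge after that. The paper rewrites the per-pair tail as a polynomial bound $C_1(l/(eN))^{Ky^{1/2}}$. It feeds this into the dyadic chaining argument from the proof of Theorem A.3 in Koehne and Mies (2025), a modulus-of-continuity bound for the partial-sum process over $|k-l|\le N^{0.9}$, and finishes with the triangle inequality for the two adjacent blocks. You replace the chaining by a plain union bound over the at most $N^2$ pairs. This is legitimate because $l\le N^{0.9}$ forces $\log(eN/l)\ge 1+0.1\log N$. The per-pair tail is therefore smaller than any fixed power of $N$ (or, with a weaker tail, a power whose exponent grows with $C$), and it absorbs the factor $N^2$. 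Chaining is only needed at large scales, where $\log(eN/l)$ is no longer of order $\log N$ (e.g.\ $l\asymp N$), and those scales are not part of this lemma.

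Each route has its merit. Yours is shorter and self-contained, and it avoids extracting an intermediate inequality from inside another paper's proof. The paper's route is the one that generalises: the same chaining gives tightness of the weighted modulus over all scales. For $l\le N^{0.9}$, however, it is more than is needed.

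There is one bookkeeping slip. In step (1), the cross term contributes $\|\varepsilon_i'\|_{2q}\sum_i\|O_i-O_i'\|_{2q}\lesssim\sqrt{q}\,\Theta_{2q}\lesssim q$. So what you actually get is $\sup_{q\ge2}q^{-1}\sum_i\delta^Y_q(i)<\infty$, as in the paper, not the rate $q^{-1/2}$. Step (2) then gives $\|\sum_{i=a+1}^{a+l}Y_i(t)\|_q\lesssim q^{3/2}\sqrt l$, and the tail is of order $\exp(-c\,x^{2/3})$ rather than $\exp(-cx)$.

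As you anticipated, this is harmless. With $x\asymp C\log^2N$, the per-pair bound is $\exp(-c\,C^{2/3}(\log N)^{4/3})$, which is still superpolynomial. Even your remark that any fixed $C>0$ eventually works survives. The paper's weaker tail $\exp(-C_2x^{1/2})$ is why it needs $C$ large.

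Finally, under $H_0^{stat}$ with a common $\mu\not\equiv0$, the block difference also contains $\mu(t)\sum\pm(O_m(t)-\pi(t))$. The reduction to $\mu_i\equiv0$, which the paper also makes, should be read as working with the centred variables $Y_m-\mu\pi$. Their dependence measures are bounded by the same quantities, so nothing changes.
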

\begin{proof}
Recall the filter formalism in and after eq. \eqref{eq:filter:formalism}. For this proof, when considering the dependence measures $\delta_q$ and $\Theta_q$ of a filter $H$, we will make the associated filter explicit by writing $\delta_q(H)$ or $\Theta_q(H)$. First note that we may write $Y_i(t)=G_1(t,\mathcal{F}_i)G_2(t,\mathcal{F}_i)=:G_3(t,\mathcal{F}_i)$, defining $Y_i'$ in the obvious way we hence obtain
    \[
        \|Y_i(t)-Y_i'(t)\|_q\leq \sqrt{q}\|X_i(t)\|_{\Psi_2}\delta_q(G_2,i)+\delta_q(G_1,i)
    \]
    so that
    \[
        \Theta_q(G_3)\lesssim\sum_{i=1}^\infty(\sqrt{q}\delta_q(G_2,i)+\delta_q(G_1,i)) \lesssim \sqrt{q}\Theta_q(G).
    \]
    By Assumption (C') we therefore obtain that
    \[
        \sup_{q \geq 2}q^{-1}\Theta_q(G_3)<\infty
    \]
    so that by Theorem 3 from \cite{Wu:Wu:2016} it holds for some $C_1,C_2$ not depending on $l,k$ or $t$ that
    \[
        \PR( (k-l)^{-1/2}|U_l^k(t)|>x)\leq C_1\exp(-C_2x^{1/2})~.
    \]
    In particular, choosing  $x=y(KC_2^{-1})^{2}\log^2(eN/l)$ we have that
    \[
    \PR\Big( \frac{|U_l^k(t)|}{\sqrt{l}\log^2(eN/l)}>y(KC_2^{-1})^{2}\Big)\leq C_1\Big(\frac{l}{Ne}\Big)^{Ky^{1/2}}~.
    \]
    Letting, for fixed $t$, $W_N(k/N)=N^{-1/2}U_0^k(t)$ as well as $\rho_2(x)=x^{1/2}\log^{2}(e/x)$ we are hence in the setting of Theorem A.3 from \cite{koehne:mies:2025} where the constants $C, \kappa(y)$ in its statement satisfy $C=(KC_2^{-1})^{2}$ and $\kappa(y)\geq K$. In its proof the authors derive, for sufficiently large $m$ and $t>C$, the inequality
    \begin{align*}
        \Pr\Big(\sup_{|k/N-l/N|\leq 2^{-m(1-\delta)}}\frac{|W_N(k/N)-W_N(l/N)|}{\rho_2((k-l)/N)}>2y\Big) \leq 2^{m(1+\delta)-m(1-\delta)\kappa(y)}\leq 2^{m(3\delta+1-K)}~.
    \end{align*}
    for any $\delta<(K-1)/(K+1)$. Setting $m=\log_2(N)$ we hence obtain
    \begin{align*}
       \Pr\Big(\sup_{|k-l|\leq N^{\delta}}\frac{|U_l^k(t)|}{\rho_2((k-l)/N)}>y\Big)\leq N^{3\delta+1-K}~.
    \end{align*}
    for any $y>2C$. Finally, letting $\delta=0.9$ and setting $K$ sufficiently high we obtain the desired result by an application of the triangle inequality.
\end{proof}

\subsection{Proofs: Local Alternatives}

In this Section, we give the proof of Theorem \ref{thm:loc:cons}. Before that, we provide some additional comments on the theorem.

\begin{rem}(Details on Theorem \ref{thm:loc:cons}) \label{rem:loc:cons}\\ 
 $ {}\quad {}$ \indent \textit{I) Multiscale detection:} The power of the multiscale statistics is most easily visible in the case of the stationarity test. If $b_N =b>0$ (fixed signal size) we have consistency as long  as $a_N \gg \log^2(N)/N$. Since consistency is only possible as long as $a_N\gg 1/N$ we see how close the multiscale result is to optimality on small scales. Conversely, on large scales where $a_N=a>0$, we have a sample size for the signal of $\mathcal{O}(N)$ and detection is possible as long as $b_N\gg N^{-1/2}$ which is of course optimal.\\
 $ {}\quad {}$ \indent \textit{II) Uniformity:} The aim of our investigation under the alternatives is to give an insight into the power of multiscale statistics, not to provide an investigation in full generality. For the interested reader, we refer to our below proofs, which clearly show that uniform consistency of our tests is possible over large classes of local alternatives that satisfy roughly our $a_N,b_N$ conditions. In particular, there exists uniformity regarding the direction $\mu^*(\cdot)$ of the alternative, as long as $\mu^*(\cdot)$ satisfies some  smoothness property. We have avoided this more general presentation in the main body of our paper, as it is harder to read and adds little insight.
 
\end{rem}
\begin{proof}
    We first consider the alternative \eqref{e:alt:1}. It suffices to show that $\widehat M_N(t_{max})$ with $t_{max} \in \arg \max_{1 \leq i \leq p} |\mu^*(t_i)|$ diverges in probability. To that end we note that by  \eqref{eq:count:to:prop:multiscale} and Lemma \ref{lem:empirical:count:multiscale}
    \begin{align*}
         &\max_{1 \leq l < k \leq N-l}\frac{|\sum_{j=k-l+1}^k(X_jO_j)(t_{max})-\sum_{j=k+1}^{k+l}(X_jO_j)(t_{max})|}{\hat \pi_N(t_{max})\sqrt{l}\log^{2}(eN/l)}\\
         \geq &\max_{1 \leq l < k \leq N-l}\frac{|\sum_{j=k-l+1}^k\mu_j(t_{max})-\sum_{j=k+1}^{k+l}\mu_j(t_{max})|}{\sqrt{l}\log^{2}(eN/l)}+O_{\PR}(1)
    \end{align*}
    As $C^*>1/(2\theta)$ we may replace $t_{\max}$ by the maximizer of $\mu^*(\cdot)$ incurring only a negligible error. Choosing $k=\lceil x_0N\rceil$ and $l=\lfloor a_NN\rfloor$ then yields that the deterministic part on the right hand side is at least of order
    \[
        b_N\frac{a_NN}{\sqrt{a_NN}\log^{2}(1/a_N)},
    \]
    whenever $a_n \to 0$, which yields the desired result.\\

    For the class \eqref{e:alt:2} let $t^\star$ satisfy $\mu(t^\star)=\Delta$ (or choose a grid point with
    $\mu(t_j)\ge \Delta-o(1)$ by Hölder continuity and the condition on $C^*$). Under \eqref{e:alt:2},
    \[
    \mu_i(t^\star)=\Delta+b_N\,d\!\left(\frac{|i/N-x_0|}{a_N}\right).
    \]
    By $d(0)-d(x)\sim x^\kappa$ there exists $a>0$ such that for all $0\le x\le a$,
    $d(x)\ge 1-2x^\kappa$.
    Choose $c>0$ small enough and set
    \[
    m:=\left\lfloor c\,N a_N b_N^{1/\kappa}\right\rfloor,
    \qquad
    l_0:=\left\lfloor Nx_0-\frac{m}{2}\right\rfloor,\quad k_0:=l_0+m.
    \]
    Then for all $j\in\{l_0+1,\dots,k_0\}$ we have
    $\frac{|j/N-x_0|}{a_N}\le c\, b_N^{1/\kappa}\le a$, hence
    $d\big(\frac{|j/N-x_0|}{a_N}\big)\ge 1-2c^\kappa b_N$ and therefore,
    for $N$ large,
    \[
    \mu_j(t^\star)-\Delta
    =(\Delta+b_N)d(\cdot)-\Delta
    \ge b_N-2(\Delta+b_N)c^\kappa b_N
    \ge \frac{b_N}{2}.
    \]
    Consequently,
    \[
    \max_{1\le l<k\le N}
    \frac{\sum_{j=l+1}^k \mu_j(t^\star)-(k-l)\Delta}{\sqrt{k-l}\log^2(eN/(k-l))}
    \ \ge\
    \frac{(b_N/2)\,m}{\sqrt{m}\log^2(eN/m)}
    \ \gtrsim \frac{b_N\sqrt{m}}{\log^2(eN/m)},
    \]
    which yields the desired conclusion.
    
\end{proof}

\subsection{Extensions of Theorem 2.1 from \cite{zhang:cheng:2018}}

We state the extension with the terminology given in \cite{zhang:cheng:2018} and refer the interested reader to that paper for all the required definitions. We also note that we make no effort to optimize the dependence of the rates on any of the involved parameters as this is not needed for our purposes. \\

We now state the adjusted assumptions, using the same numbering as in \cite{zhang:cheng:2018} for comparability. We also adopt all notations from that paper.

\begin{enumerate}
    \item[(2.1)] Assume that there exists $\mathfrak{D}_N,K$ such that 
    \[
        \max_{1 \leq i \leq N}\max_{1 \leq j \leq p}\E[\exp(|x_{i,j}|/\mathfrak{D}_N)]\leq K
    \]
    \item[(2.2)] We have that there exists $M=M(N), \gamma=\gamma(N), C_4$ such that
    \[
        N^{3/8}M^{-1/2}\ell_n^{-5/8}\geq C_4\max\{\mathfrak{D}_N\ell_N,\ell_N^{1/2}\}
    \]
    where $\mathfrak{D}_N$ is given in (2.1) and $\ell_n:=\log(Np/\gamma)$.
    \item[(2.3)] For some constants $c_2,c_3$  and sequences $\tilde c_N,d_N>0$ we have
    \begin{align*}
        \tilde c_N\lesssim\min_{1 \leq j \leq p}\sigma_{j,j}&\leq \max_{1 \leq j \leq p}\sigma_{j,j}<c_2\\        
       \max_{1\leq j \leq p}\sum_{k=1}^\infty  k\theta_{k,j,3}&<d_N
    \end{align*}
\end{enumerate}
We also remark that the filters $\mathcal{G}_i$ in \cite{zhang:cheng:2018} are a triangular array and that the quantities $\Theta_{j,k,l}$ that will also appear in the following theorem are thus also implicitly dependent on $N$.

\begin{theo}
\label{thm:zhang:cheng:extension}
    Suppose that Assumptions (2.1) to (2.3) above hold. Then, for any $q\geq 2$,  we have
    \begin{align}
    \label{eq:gauss:approx:errors}
        \rho_N\lesssim &N^{-1/8}M^{1/2}\ell_n^{11/8} (\tilde c_N^{-1/2}\lor d_N)\\
         &+\gamma+\Big(N^{1/8}M^{-1/2}\ell_N^{-3/8}\Big)^{q/(1+q)}\Big(\sum_{j=1}^p\Theta^q_{M,j,q}\Big)^{1/(1+q)}\\
         + & \Xi_M^{1/3}(1\lor\log(p/\Xi_M))^{2/3}\tilde c_N^{-1/2}
    \end{align}
    where $\Xi_M=\max_{1 \leq j \leq p}\sum_{k=M}^\infty k\theta_{k,j,2}$. In particular, assuming that
    \begin{itemize}
        \item[(i)] Assumption 2.1 in \cite{zhang:cheng:2018}  holds with $\mathfrak{D}_N\lesssim N^{(3-17b)/8}$
        \item[(ii)] $p \lesssim \exp(N^b)$ for $0<b<1/11$
        \item[(iii)] $\max_{1 \leq j \leq p} \Theta_{u,j,q}\leq \mathfrak{D}_N\chi^u$ for some $\chi<1$ and $q\geq 2$
        \item[(iv)] Assumption 2.3  in \cite{zhang:cheng:2018}  holds and $\tilde c_N\gtrsim \log(N)^{-2k}$ for some $k>0$
    \end{itemize}
    we have that
    \begin{align}
    \label{eq:GAP:bound}
        \rho_N\lesssim N^{-(1-11b)/8}(\log(N)^k\lor d_N)
    \end{align}   
    by choosing $M=CN^b$ for sufficiently large $C$ and $\gamma=N^{-(1-11b)/8}$.
\end{theo}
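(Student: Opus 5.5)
The plan is to follow the structure of the proof of Theorem 2.1 in \cite{zhang:cheng:2018} essentially verbatim, tracking the two modifications explicitly. The first is the possibly vanishing lower variance bound $\tilde c_N$ in place of a constant. The second is the sub-exponential moment condition (2.1) with a possibly growing $\mathfrak{D}_N$.

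\textbf{Step 1: $M$-dependent approximation.} Replace each $x_{i,j}$ by its $M$-dependent conditional expectation $\tilde x_{i,j}=\E[x_{i,j}\mid \eta_{i-M},\dots,\eta_i]$. Using the moment inequalities for physically dependent sums (as in Zhang and Cheng, via Burkholder and the functional dependence measure), the $L^q$ distance between the maxima of the two normalized sums is controlled by $\Theta_{M,j,q}$. Combining this with a union bound over $j$ and Markov's inequality at level $\psi=(N^{1/8}M^{-1/2}\ell_N^{-3/8})$ gives the term $(\psi)^{q/(1+q)}(\sum_j\Theta_{M,j,q}^q)^{1/(1+q)}$. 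Converting this probability bound into a Kolmogorov-distance bound requires Gaussian anti-concentration (Nazarov's inequality, via \cite{chernozhukov:chetverikov:kato:2015}). Anti-concentration scales with $\min_j\sigma_{j,j}^{-1/2}$, so this is where $\tilde c_N^{-1/2}$ enters.

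\textbf{Step 2: Blocking.} Split the $M$-dependent sequence into big blocks of length of order $M$ separated by small blocks. The big-block sums are independent. Apply the independent high-dimensional Gaussian approximation of Chernozhukov--Chetverikov--Kato to them, using condition (2.1). Sub-exponential moments allow truncation with error $\gamma$. Condition (2.2) is exactly the constraint guaranteeing that the truncated summands satisfy the CCK growth requirement, which yields the leading term $N^{-1/8}M^{1/2}\ell_N^{11/8}$. The CCK constant depends polynomially on the minimal variance of the block sums, and again this enters through $\tilde c_N^{-1/2}$. The small blocks, and the difference between the block covariance and the full covariance, are handled by the bound $d_N$ on $\sum_k k\theta_{k,j,3}$.

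\textbf{Step 3: Covariance comparison.} Compare the Gaussian vector carrying the covariance of the $M$-dependent or blocked sums with the one carrying the true covariance. The entrywise covariance discrepancy is bounded by $\Xi_M=\max_j\sum_{k\ge M}k\theta_{k,j,2}$. The Gaussian comparison lemma (Theorem 2 of \cite{chernozhukov:chetverikov:kato:2015}, in its version normalized by the minimal variance) then gives $\Xi_M^{1/3}(1\vee\log(p/\Xi_M))^{2/3}\tilde c_N^{-1/2}$. Summing the three contributions gives \eqref{eq:gauss:approx:errors}.

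\textbf{Step 4: The simplified bound.} Substitute $M=CN^b$ and $\gamma=N^{-(1-11b)/8}$. Use (ii) to get $\ell_N\lesssim N^b$, and use (iii) to make $\Theta_{M,j,q}\le\mathfrak{D}_N\chi^{CN^b}$. This last term is super-polynomially small, so even after summing over $p\le\exp(N^b)$ coordinates it is negligible once $C$ is large. The same holds for $\Xi_M$. Condition (i) then verifies (2.2), and (iv) turns $\tilde c_N^{-1/2}$ into $\log(N)^k$. Collecting exponents leaves $N^{-(1-11b)/8}(\log(N)^k\vee d_N)$.

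The main obstacle is Step 2. One must check carefully how the CCK bound depends on the lower variance bound when it is allowed to degenerate. I would first rescale each coordinate by its standard deviation, which is bounded below by $\tilde c_N^{1/2}$. That inflates $\mathfrak{D}_N$ by a factor $\tilde c_N^{-1/2}$, and I would then check that the resulting polynomial loss is of the claimed form.
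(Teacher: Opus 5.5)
\textbf{The gap is in Step 2.} That step is where the whole leading term is supposed to come from, and the mechanism you describe does not deliver the stated bound.

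Zhang and Cheng's Theorem 2.1 is not proved by feeding independent big-block sums into the independent-case CCK theorem. It is proved by a Slepian-interpolation/Stein argument run directly on the $M$-dependent approximation (their Proposition A.1). Both the term $N^{-1/8}M^{1/2}\ell_N^{\,\cdot}$ and condition (2.2) come out of the choice of smoothing parameters $\beta,\psi$ and truncation levels $M_x,M_y$ in that argument (their (46)--(49)). Your assertion that (2.2) is "exactly" the CCK growth requirement and "yields the leading term" is therefore not derived.

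The blocking is also not workable as stated. Big blocks of an $M$-dependent sequence are independent only if separated by gaps of length at least $M$. With big blocks of order $M$, the discarded small blocks therefore carry a fixed fraction of the variance. With big blocks of length $L\gg M$ you incur an extra remainder of order $\sqrt{M/L}$ (times logarithms and an anti-concentration factor), which is not in the target.

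The rescaling device for $\tilde c_N$, which you flag as unchecked, also fails. It turns $\{\max_jX_j\le t\}$ into rectangle events with coordinate-dependent thresholds. It also inflates $\mathfrak D_N$ and $\Theta_{M,j,q}$ by $\tilde c_N^{-1/2}$, and the covariance discrepancy by $\tilde c_N^{-1}$. Hypothesis (2.2), stated with the original $\mathfrak D_N$, then no longer suffices. The loss also does not appear in the stated form $(\tilde c_N^{-1/2}\vee d_N)$ and $\Xi_M^{1/3}(1\vee\log(p/\Xi_M))^{2/3}\tilde c_N^{-1/2}$.

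\textbf{What you are missing} is that no re-blocking or rescaling is needed. The paper keeps Zhang--Cheng's proof verbatim and only tracks where the new quantities enter.
(a) In the passage from smooth functionals to indicators (their (39)), the anti-concentration lemma of Chernozhukov--Chetverikov--Kato (2013) has a constant depending implicitly on the variance bounds. It is replaced by Nazarov's inequality, $\PR(\max_jY_j\le x+\epsilon)-\PR(\max_jY_j\le x)\le\epsilon\,\underline\sigma^{-1}(\sqrt{2\log p}+2)$. This multiplies $(\beta^{-1}\log p+\psi^{-1})\sqrt{1\vee\log(p\psi)}$ by $\tilde c_N^{-1/2}$.
(b) The growing bound $d_N$ on $\max_j\sum_k k\theta_{k,j,3}$ is the second modification, not $\mathfrak D_N$, which Zhang--Cheng already allow to grow. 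It enters only through the truncation terms $\varphi^{(M)}(M_x)$, $\varphi^{(M)}(M_y)$, and hence their (46) and (48).
(c) The final Gaussian comparison (their (52)) uses Lemma \ref{lem:gauss:comparison} instead of Theorem 2 of Chernozhukov--Chetverikov--Kato (2015).

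Finally, your Step 4 does not close as written. With $M=CN^b$ and $\ell_N\lesssim N^b$ you only get $N^{-1/8}M^{1/2}\ell_N^{11/8}\lesssim N^{-(1-15b)/8}$, not $N^{-(1-11b)/8}$. The latter is what Zhang--Cheng's original first term $N^{-1/8}M^{1/2}\ell_N^{7/8}$ gives. So you must either retain that exponent or settle for $(1-15b)/8$ with $b<1/15$. The paper's sketch does not address this mismatch either.
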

\begin{proof}
    The proof is divided into 5 steps. Due to its length and because the changes require only minor adaptations we only indicate the required changes for each step.

    \begin{enumerate}
        \item[Step 1:] When following the arguments in Proposition A.1 that lead to equation (39) the term $(\beta^{-1}\log(p)+\psi^{-1})\sqrt{1 \lor \log(p\psi)}$ needs to be replaced by 
        \[
            (\beta^{-1}\log(p)+\psi^{-1})\sqrt{1 \lor \log(p\psi)}\tilde c_N^{-1/2}~.
        \]
        This follows because instead of using Lemma 2.1 in \cite{chernozhukov:chetverikov:kato:2013} we use Theorem 1 in \cite{chernozhukov:chetverikov:kato:2017} which gives an explicit dependence of the bound on $\tilde c_N$.
        \item[Step 2:] The bounds on  $\sum_{k=1}^\infty\theta_{k,j,q}$ get worse by a factor of $d_N$, yielding an extra factor of $d_N$ in the final bound on $\varphi^{(M)}(M_x)$
        \item[Step 3:] Same as in Step 2, i.e. $\varphi^{(M)}(M_y)=d_NC'/M_y^2$ for some $C'>0$.
        \item[Step 4:] Due to the worse bound in Step 1 the upper bound in equation (49) is now given by $n^{-1/8}M^{1/2}\ell_n^{11/8}$. Equations (46) and (48) pick up an extra $d_N$ factor due to the worse bounds in Step 2 and 3.
        \item[Step 5:] The bound in equation (52) incurs an extra $\tilde c_N^{-1/2}$ factor as we apply Lemma \ref{lem:gauss:comparison} instead of Theorem 2 from \cite{chernozhukov:chetverikov:kato:2015}. Consequently the bound in equation (52) becomes 
        \[
            \Xi_M^{1/3}(1\lor\log(p/\Xi_M))^{2/3}\tilde c_N^{-1/2}~.
        \]
        Also, when using arguments similar to those in Lemma A.2 one uses the same substitutions of the bounds involving $\sum_{k=1}^\infty j\theta_{k,j,q}$ as in Step 2 and 3.
    \end{enumerate}
\end{proof}

\begin{lem}
\label{lem:gauss:comparison}
    Let $Z_1,Z_2 \in \R^p$ be two centered Gaussian vectors satisfying $\min_{1 \leq i \leq p}\min\{\C[Z_{1i},Z_{1i}],\C[Z_{2i},Z_{2i}]\}>c_N$. Then
    \[
        \sup_{x \in \R}|\PR(\max_{1 \leq i \leq p}Z_{1i}>x)-\PR(\max_{1 \leq i \leq p}Z_{2i}>x)|\lesssim \Delta^{1/3}(1 \lor \log(p/\Delta))^{2/3}c_N^{-1/2}
    \]
    where 
    \[
        \Delta=\max_{1 \leq i,j\leq p}|\C[Z_{1i},Z_{1j}]-\C[Z_{2i},Z_{2j}]|~.
    \]
\end{lem}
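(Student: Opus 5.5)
The plan is a reduction to the Chernozhukov--Chetverikov--Kato Gaussian comparison inequality (Theorem 2 in \cite{chernozhukov:chetverikov:kato:2015}), while keeping track of how the constant depends on the minimal variance. That inequality is proved for variances bounded below by a fixed constant; the only new feature here is the explicit factor $c_N^{-1/2}$. I would rerun the original argument and record this dependence. It enters only through the anti-concentration step.

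The argument has four steps.

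\textbf{Step 1: smoothing.} Replace the indicator $\mathbb{I}\{\max_i z_i > x\}$ by a smooth surrogate. Use the soft-max $F_\beta(z)=\beta^{-1}\log\sum_{i}e^{\beta z_i}$, which satisfies
\[
0\le F_\beta(z)-\max_i z_i\le \beta^{-1}\log p.
\]
Compose it with a $C^3$ function $g$ that equals $1$ below $x$ and $0$ above $x+\psi^{-1}$.

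\textbf{Step 2: comparison for the smooth functional.} Apply Slepian--Stein interpolation between $Z_1$ and $Z_2$. Using the standard derivative bounds for $F_\beta$, this gives
\[
\big|\E[g(F_\beta(Z_1))]-\E[g(F_\beta(Z_2))]\big|\lesssim (\psi^2+\psi\beta)\,\Delta .
\]
This step is dimension-free except through $\beta$, and it does not involve the variances at all.

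\textbf{Step 3: passing back to the indicator.} To return from the smooth functional to the probability $\PR(\max_i Z_{2i}>x)$, one pays the probability that $\max_i Z_{2i}$ lies in an interval of length $\epsilon=\beta^{-1}\log p+\psi^{-1}$. This is the anti-concentration step. For a fixed lower variance bound it is handled by Nazarov's inequality in the form of Theorem 1 of \cite{chernozhukov:chetverikov:kato:2017}:
\[
\sup_x\PR\big(|\max_i Z_i-x|\le\epsilon\big)\le \frac{\epsilon}{\underline\sigma}\Big(\sqrt{2\log p}+2\Big),
\]
where $\underline\sigma=\min_i \mathrm{Var}(Z_i)^{1/2}$. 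Since $\underline\sigma\ge c_N^{1/2}$, the error term becomes $\epsilon\sqrt{\log p}\,c_N^{-1/2}$. This is exactly where the extra factor arises, and I expect it to be the main point requiring care.

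\textbf{Step 4: optimizing the parameters.} The total error is
\[
(\psi^2+\psi\beta)\Delta+(\beta^{-1}\log p+\psi^{-1})\sqrt{\log p}\;c_N^{-1/2}.
\]
Take $\beta=\psi\log p$ and $\psi\asymp \Delta^{-1/3}(1\vee\log(p/\Delta))^{-1/3}$, as in CCK. The smoothing error is then of order $\Delta^{1/3}\log^{2/3}$ without any $c_N$ factor, while the anti-concentration part carries the factor $c_N^{-1/2}$. Since $c_N\lesssim 1$, both parts are bounded by the claimed expression $\Delta^{1/3}(1\vee\log(p/\Delta))^{2/3}c_N^{-1/2}$. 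The lower bound is obtained symmetrically by shifting $g$ to the other side of $x$.

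A technicality I would check at the last step is the replacement of $\log p$ by $\log(p/\Delta)$. One must also treat the trivial regime where $\Delta$ is not small, in which the bound exceeds $1$ and there is nothing to prove.
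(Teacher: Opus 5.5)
Your route is the paper's. Its proof reruns the argument for Theorem 2 of Chernozhukov, Chetverikov and Kato (2015), replacing their anti-concentration bound (Theorem 3 there) by Nazarov's inequality in the form of Theorem 1 of Chernozhukov, Chetverikov and Kato (2017). The variance lower bound therefore enters only through the anti-concentration step, and your Steps 1--3 carry this out correctly.

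Step 4, however, does not deliver the stated rate with the parameters you chose. With $\beta=\psi\log p$ the total error is of order
\[
\psi^{2}\Delta\log p+\psi^{-1}\sqrt{\log p}\,c_N^{-1/2}.
\]
Put $L:=1\vee\log(p/\Delta)$. For $\psi\asymp\Delta^{-1/3}L^{-1/3}$ the second term equals $\Delta^{1/3}L^{1/3}(\log p)^{1/2}c_N^{-1/2}$. For fixed $\Delta<1$ and $p\to\infty$ (so $L\asymp\log p$) this is of order $\Delta^{1/3}(\log p)^{5/6}c_N^{-1/2}$, larger than the claimed bound by a factor $(\log p)^{1/6}$. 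So your assertion that both parts are bounded by $\Delta^{1/3}L^{2/3}c_N^{-1/2}$ is false for this choice.

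The balancing choice is $\psi\asymp\Delta^{-1/3}(\log p)^{-1/6}$. Then both terms are of order $\Delta^{1/3}(\log p)^{2/3}$, with the anti-concentration term carrying the factor $c_N^{-1/2}$. Because Nazarov's inequality has no $\log(1/\epsilon)$ term, the technicality you flag about $\log(p/\Delta)$ disappears. Indeed $\log p\le\log(p/\Delta)$ whenever $\Delta\le 1$, and for $\Delta>1$ the right-hand side is $\gtrsim 1$ given $c_N\lesssim 1$. Keeping $c_N$ in the optimization even gives the sharper bound $(\Delta/c_N)^{1/3}(\log p)^{2/3}$.

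Finally, your appeal to $c_N\lesssim 1$ is genuinely needed, not just convenient. The left-hand side is invariant under $Z_k\mapsto sZ_k$. With the admissible lower bound $s^2c_N$, the right-hand side decays like $s^{-1/3}$ as $s\to\infty$. The lemma thus implicitly assumes $c_N\lesssim 1$, which is harmless in the paper's application, where the variances are bounded above.
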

\begin{proof}
    This is a slightly more quantitative version of Theorem 2 from \cite{chernozhukov:chetverikov:kato:2015}. The proof follows by exactly the same arguments, the only difference being the application of Theorem 1 from \cite{chernozhukov:chetverikov:kato:2017} instead of Theorem 3 from \cite{chernozhukov:chetverikov:kato:2015}.
\end{proof}

\subsection{Neccessary and sufficient conditions for SCBs}
In this section we consider the data model 
\[
    X_i=\mu+\varepsilon_i, \quad i=1,...,N
\]
where as in \eqref{eq:filter:formalism} we assume $\epsilon_i$ to be represented by a filter $G_1(t,\mathcal{F}_i)$.  We assume here that $X_i$ is fully observed and allow $\epsilon_i\in L^\infty[0,1]$, i.e. bounded but discontinuous errors are included. The proofs can easily be extended to the partially observed setting - we restrict ourselves in this way to make clear that our results are not an artifact of the additional difficulties imposed by partially observed data. We make the following assumptions:
 \begin{enumerate}
        \item[(S1)] (Moments) It holds that $\E[|\varepsilon_1(\cdot)|_\infty^4]< \infty$.
       \item[(S2)] (Continuity) For some constant $\vartheta \in (0,1]$ the function $\mu(\cdot)$ is $\vartheta $-Hölder continuous.
        \item[(S3)] (Weak dependence of functions) Define the dependence measure
		\begin{align} \label{2.7}
			\delta_\varepsilon(i):=\sup_{t\in [0,1]}\|\varepsilon_i(t)-\varepsilon_i'(t)\|_{3}, 
		\end{align}
		There exists a constant  $\chi\in(0,1)$ such that 
			$\delta_{\varepsilon}(i)=\mathcal{O}(\chi^i)~$ for $i\geq 0$ .
 		\item[(S4)]  (Long-run variances) We define the following long-run variance 
		\begin{align} \label{lrv}
			\sigma_\varepsilon(t,s):=\sum_{k=-\infty}^\infty \C[\varepsilon_0(t),\varepsilon_k(s)]
		\end{align}
     and assume that $\sigma_\varepsilon^2(t):=\sigma_\varepsilon(t,t)$ is bounded away from $0$, uniformly over $t$. 
     \item[(S5)] Let $W=\{W(t)\}_{t \in [0,1]}$ be the centered Gaussian process with covariance $\sigma_\varepsilon(\cdot,\cdot)$. Define the canonical metric $d(s,t)\;:=\;\bigl(\mathbb{E}(W_s-W_t)^2\bigr)^{1/2}$ and assume that 
\[
\gamma_2(T,d)
\;:=\;
\inf_{\{\mathcal{A}_n\}}
\ \sup_{t\in T}\ \sum_{n\ge 0} 2^{n/2}\,\operatorname{diam}\!\bigl(A_n(t)\bigr)<\infty.
\]
Here, the infimum is taken over all admissible sequences of partitions
$\{\mathcal{A}_n\}_{n\ge 0}$ of a set $T$, meaning that $|\mathcal{A}_0|=1$ and
\[
|\mathcal{A}_n|\ \le\ 2^{2^n}\qquad\text{for all }n\ge 1.
\]
For each $t\in T$, $A_n(t)$ denotes the unique element of $\mathcal{A}_n$
containing $t$, and
\[
\operatorname{diam}(A)\;:=\;\sup\{d(s,u): s,u\in A\}.
\]
	\end{enumerate}
These are essentially assumptions (A) to (E), where (D) has been replaced by a condition characterizing when the Gaussian process $W$ used to obtain the asymptotic quantiles (of $|W|_\infty$) is finite almost surely.  This condition is obtained from Talagrand's majorizing measure theorem (Theorem 2.1.1 in \cite{Talagrand:2005}). Assumption (S5) is thus our way of imposing regularity on the error process; a substantially weaker version than the the smoothness assumptions on the errors typically imposed in other works. Indeed, it is substantially weaker and only concerns boundedness of the Gaussian process associated to the covariance of the error process. The following theorem shows that, assuming (S1) to (S4), (S5) is sufficient to guarantee existence of SCBs.
  \begin{theo}
    \label{theo:confidence:bands:fully}
       Assume that (S1)-(S5) hold and that $p \sim  N^{C^*} $ for some $C^*>\frac{1}{2\vartheta}$. Let $W=\{W(t)\}_{t \in [0,1]}$ be the centered Gaussian process with covariance $\sigma_\varepsilon(\cdot,\cdot)$.
       Then $W$ is almost surely bounded. Moreover, denoting as before the $(1-\alpha)$-quantile of $|W(\cdot)|_\infty$ by $q_{1-\alpha}$, we have that 
        $\{\hat C^\pm(t)\}_{t \in [0,1]}$ (defined in \eqref{e:int:sing} and \eqref{e:int:SCB}) is an asymptotic $(1-\alpha)$ confidence band for $\mu(\cdot)$.        
    \end{theo}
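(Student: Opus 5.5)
The proof splits into two parts: almost sure boundedness of $W$, then coverage of the band. For boundedness I will invoke Talagrand's majorizing measure theorem. Since $\gamma_2([0,1],d)<\infty$ by (S5), the generic chaining bound gives $\E\sup_{t\in T}W(t)\lesssim \gamma_2(T,d)<\infty$ for every countable $T\subset[0,1]$, and likewise for $-W$. With a separable version of $W$ (separability with respect to $d$, which is totally bounded because $\gamma_2<\infty$), $|W|_\infty$ is therefore a.s. finite and integrable. In addition, Borell--TIS applied to $|W|_\infty$ gives a distribution function that is continuous at every positive point. This holds because $\sigma^2_\varepsilon$ is bounded below by (S4), so $|W|_\infty$ has no atoms away from $0$ (Tsirelson's theorem on the continuity of the law of the supremum). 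As a result, $q_{1-\alpha}$ is well defined.

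For coverage I will follow the proof of Theorem \ref{theo:confidence:bands} line by line, with $O_i\equiv 1$, so that $\hat S_N=S_N$ and Lemma \ref{lem:approx:Nx} is not needed. First, the Gaussian approximation of Lemma \ref{lem:gauss:approx:confidence} still holds. Its proof only used moments (S1), the variance bounds (S4) together with Lemma \ref{lem:physic:dep}(iv), and the dependence condition (S3); it never used continuity of $\varepsilon_i$ or assumption (D), and measurability of the paths is not required because only point evaluations enter. This gives $\sup_t|\PR(\widehat T_N\ge t)-\PR(\max_j|W_{N,j}|\ge t)|=o(1)$, where $W_N=(W(t_1),\dots,W(t_p))$ and $W$ is the process from (S5). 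Second, the interpolation argument from the proof of Theorem \ref{theo:confidence:bands} uses only the Hölder continuity of $\mu$ from (S2). It shows that the non-coverage event is contained in $A_N$, so that
\[
\PR(\mu\notin[\hat C^-,\hat C^+])\le \PR\big(\max_j|W(t_j)|>q_{1-\alpha}+2K_1N^{1/2-C^*\vartheta}\big)+o(1),
\]
with the shift tending to $0$ because $C^*\vartheta>1/2$. Removing the shift is done with Theorem 3 of \cite{chernozhukov:chetverikov:kato:2015}, which requires the variance lower bound from (S4) and $\log p\lesssim\log N$. Together with the trivial inclusion in the other direction, this bounds the coverage probability between $\PR(\max_j|W(t_j)|\le q_{1-\alpha})\pm o(1)$.

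The main obstacle is the final step: showing $\PR(\max_{j\le p}|W(t_j)|\le q_{1-\alpha})\to 1-\alpha$ without path continuity. In the earlier theorem this followed from the continuity of $W$. Here I will argue as follows. Fix a countable dense set $D$ consisting of all grid points over all $N$, together with the rationals. By monotone convergence along nested finite sets, $\max_{t\in F}|W(t)|\uparrow\sup_{t\in D}|W(t)|$, and for a separable version the latter equals $|W|_\infty$. The grids $\{t_j\}$ are not nested, however. To handle this, I use the finiteness of $\gamma_2$, which makes $(W(t))$ have a $d$-uniformly continuous version on the totally bounded space $([0,1],d)$ (Dudley/Talagrand sample continuity from $\gamma_2$ with vanishing tails of the chain, which I will invoke as the standard consequence of majorizing measures). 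I also need the grids to become $d$-dense, which I would check from the $|\cdot|$-mesh $C/p\to0$. If that fails (since $d$ need not be controlled by Euclidean distance), I will instead only claim $\liminf$ coverage $\ge 1-\alpha$ via $\max_j|W(t_j)|\le|W|_\infty$. Upper coverage is the side that is easy anyway. The remaining direction is where I expect to have to restrict carefully to what (S5) truly provides.
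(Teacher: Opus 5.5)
Your route is the paper's: boundedness of $W$ from Talagrand's majorizing-measure theorem (its generic-chaining half suffices), then the proof of Theorem \ref{theo:confidence:bands} rerun with $O_i\equiv 1$.

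Two side remarks. First, the shift in your bound for $\PR(A_N)$ should be $-2K_1N^{1/2-C^*\vartheta}$, because the corridor is shrunk; the paper's display has the same sign slip. This is why the anti-concentration step is genuinely needed. Second, Borell--TIS is a concentration inequality. Continuity of the law of $|W|_\infty$ at $q_{1-\alpha}$ comes from Tsirelson's theorem, and it is not needed for the bound you can actually prove.

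\textbf{The gap is your plan for the final step.} Finiteness of $\gamma_2([0,1],d)$ characterizes boundedness only. A $d$-uniformly continuous version requires the strictly stronger condition $\lim_{k\to\infty}\sup_t\sum_{n\ge k}2^{n/2}\operatorname{diam}(A_n(t))=0$ for some admissible sequence. For instance, take i.i.d.\ standard normals $G_0,g_1,g_2,\dots$ and set $W(1/n)=G_0+g_n/\sqrt{\log(n+1)}$, $W(t)=G_0$ otherwise. This process satisfies (S4) and (S5) but has no $d$-continuous version. Moreover, without (D) nothing ties $d$ to $|t-s|$, so the Euclidean mesh $C/p\to0$ gives no $d$-density.

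In fact, exact coverage is false under (S1)--(S5). Take i.i.d.\ $\varepsilon_i(t)=Z_i\mathbb{I}\{t\in\mathbb{Q}\}+Z_i'\mathbb{I}\{t\notin\mathbb{Q}\}$ with independent standard normals, $\mu\equiv0$ and $t_j=j/p$. Then $|W|_\infty=\max(|G_1|,|G_2|)$ with $G_1,G_2$ i.i.d.\ standard normal. The band, however, is $\bar Z_N\pm q_{1-\alpha}N^{-1/2}$ with $\bar Z_N=N^{-1}\sum_{i=1}^N Z_i$. Its coverage is exactly $\PR(|G_1|\le q_{1-\alpha})=\sqrt{1-\alpha}>1-\alpha$.

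So drop the continuity route and keep your fallback
\[
\limsup_{N\to\infty}\PR\big(\mu\notin[\hat C^-,\hat C^+]\big)\le\limsup_{N\to\infty}\PR\big(\max_j|W(t_j)|>q_{1-\alpha}\big)\le\PR\big(|W|_\infty>q_{1-\alpha}\big)\le\alpha ,
\]
with $W$ a separable version. This is the coverage-at-least-$(1-\alpha)$ direction, which is the easy one; your closing sentences have the two sides swapped.

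This is also exactly what the paper's one-line proof delivers. Replacing Lemma \ref{lem:gauss:tight} by mere boundedness leaves only $\max_j|W_{N,j}|\le|W|_\infty$. It does not give the convergence $\max_j|W(t_j)|\to|W|_\infty$ that path continuity supplied in Theorem \ref{theo:confidence:bands}. Exact level needs an extra hypothesis, e.g.\ a $d$-continuous version with $\sup_{|s-t|\le C/p}d(s,t)\to0$ (which (D) would provide), together with Tsirelson's theorem at $q_{1-\alpha}$.
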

\begin{proof}
    Follows by exactly the same arguments as Theorem \ref{theo:confidence:bands}, the only difference being that instead of Lemma \ref{lem:gauss:tight} we now need to show that $W$ is bounded almost surely. This follows immediately from Theorem 2.1.1 in \cite{Talagrand:2005}.
\end{proof}
Conversely, let us assume that $(\varepsilon_i)$ satisfies (S1) to (S4) but not (S5). Further assume that the associated Gaussian process $W$ attains its supremum over a countable set $(t_i)_{i \in \N}$.  Note that existence of this countable set is weaker than boundedness, i.e. it does not imply (S5).   We will now show that any polynomially growing enumeration of that grid does not allow for SCBs, i.e. it holds that for any supremum-attaining sequence $(t_i)_{i\in \N}$ and any $p=p(N)\to \infty, p(N)\lesssim N^c$ for some $c>0$, we have for any bounded function $q(\cdot)$ that
\[
    \PR( \mu(t_i) \leq \hat \mu(t_i)+q(t_i)N^{-1/2}, i=1,...,p)=o(1).
\]
I.e. there is no way to pick a boundary of width $N^{-1/2}$ based on $\hat \mu(\cdot)$ such that it covers $\mu$ with non-negligible probability.

To see this note that under (S1) to (S4) the Gaussian approximation 
\[
    \PR( \sqrt{N}(\hat \mu(t_i)-\mu(t_i))\leq q(t_i), i=1,...,p)=\PR( W_i\leq q(t_i), i=1,...,p)+o(1),
\]
is valid, where $\C(W_i,W_j)=\sigma_\varepsilon(t_i,t_j)$ (see the proof of Theorem \ref{theo:confidence:bands:fully}). By the choice of the sequence $t_1,...$ and Theorem 2.1.1 in \cite{Talagrand:2005} the right hand side probability is $o(1)$, yielding the desired result. \\

Finally we provide an example satisfying the assumptions of Theorem \ref{theo:confidence:bands} but for which 
\[
    \sqrt{N}|\hat \mu(\cdot)-\,\mu(\cdot)|_\infty
\]
is convergent to $\infty$. We will give an example in the partially observed setting, but similar ideas can be used in the scenario of complete observations, too. For this purpose, we consider an extremely simple case:  We define $\varepsilon_i(\cdot)$ i.i.d. as constant zero functions, i.e. we observe the mean function $\mu(\cdot) \equiv 1$ without noise.  Let $(U_i)_{i\ge1}$ be i.i.d.\ $\mathrm{Unif}[0,1]$ random variables.
Let $(B_i)_{i\ge1}$ be i.i.d.\ $\mathrm{Bernoulli}(1/2)$, independent of $(U_i)$. Let $\mathcal{K}$ denote the collection of all finite subsets of $[0,1]$.
Since $|\mathcal{K}|=|[0,1]|$, fix an indexing
\[
[0,1] \ni t \mapsto A_t \in \mathcal{K}.
\]
Define
\[
O_i(t) := B_i + (1-B_i)\mathbf 1\{U_i \in A_t\},
\qquad t\in[0,1],
\]
which are independent as well as bounded and thus satisfy (A), (C) and (E). For any fixed $t$,
\[
\mathbb{E}[O_i(t)]
= \mathbb{E}[B_i]
  + \mathbb{E}[1-B_i]\mathbb{P}(U_i \in A_t).
\]
Since $A_t$ is finite, $\mathbb{P}(U_i \in A_t)=0$ and thus $\mathbb{E}[O_i(t)] = \frac{1}{2}$. Therefore $t \mapsto \mathbb{E}[O_i(t)]$ is constant, Lipschitz continuous, and strictly positive, i.e. it satisfies (B). It is straightforward to check that, for any pair $s,t \in [0,1]$
\[
\Pr(O_i(t)\neq O_i(s))=0
\]
which immediately implies (D). Now define the random finite set
\[
F_n := \{X_1,\dots,X_n\}.
\]
By construction of the class $\{A_t\}$, there exists $t_n \in [0,1]$ 
such that $A_{t_n} = F_n$. For this choice of $t_n$ and for every $i\le n$, $\mathbf 1\{X_i \in A_{t_n}\} = 1$, so $O_i(t_n)= B_i + (1-B_i)\cdot 1= 1$. Hence
\[
\frac{1}{n}\sum_{i=1}^n O_i(t_n) = 1.
\]
But $\mathbb{E}[O_i(t_n)] = 1/2$, therefore
\[
\sup_{t\in[0,1]}
\left|
\frac{1}{n}\sum_{i=1}^n \bigl(O_i(t) - \mathbb{E}[O_i(t)]\bigr)
\right|
\;\ge\;
\left|1 - \frac{1}{2}\right|
= \frac{1}{2},
\]
i.e. uniform convergence always fails.

\subsection{Adapting ~(\ref{e:hatR})}
\label{sec:app:adaption}
As indicated in Remark \ref{rem:details} II), we now construct an adapted version of the test \eqref{e:hatR}. More precisely, we suggest a way to obtain tighter quantiles which yield increased power. The approach uses, roughly speaking, that distributions of supremum statistics for functional data only depend on a subset of indices. So the maximum of the limiting process is only attained over a subset of indices and this yields smaller values for the limit than a maximum over all possible limits. Systematic discussions of this approach can be found in \cite{Buecher:Dette:Heinrichs:2021}, \cite{dette:kokot:2022}, and here we merely give an adaptation to our current multiscale statistic.
To that end, we define the following set estimators
\[
    \hat{\mathcal{E}}:=\Bigg\{ 1 \leq l<k\leq N\Big| (k-l)^{-1}|U_l^k/\hat \pi_N|_{\infty,\mathcal{T}}>\Delta-a_1\bar \sigma_{\textbf{Y}}\log(N)/\sqrt{N}\Bigg\}
\]
and
\[
    \hat{\mathcal{E}}(l,k):=\Bigg\{ t \in \mathcal{T} \Big| \frac{U_l^k(t)}{\hat \pi_N(t)\sqrt{k-l}\log^{2}(eN/(k-l))}>\frac{|U_l^k/\hat \pi_N|_\infty-a_2\log(N)\bar \sigma N^{-1/2+a_2}}{\sqrt{k-l}\log^{2}(eN/(k-l))} \Bigg\}
\]
where $\bar \sigma_{\textbf{Y}}=1/p \sum_{j=1}^p\hat \sigma_{\textbf{Y}}(t_j,t_j)$ and $a_1$ and $a_2$ need to be chosen by the user. Following existing work (again most notably \cite{Buecher:Dette:Heinrichs:2021}, \cite{dette:kokot:2022}) we recommend using $a_1=0.2$ and $a_2=0.001$.\\
With these estimators for the extremal sets we now may construct the functional
\[
    \Psi_{\text{ext}}(H)=\max_{(l,k) \in \hat{\mathcal{E}}}\max_{t \in \hat{\mathcal{E}}(l,k)}\frac{H(k/N)(t)-H(l/N)(t)}{\sqrt{(k-l)/N}\log^2(eN/(k-l))}
\]
and denote by $\hat q_{1-\alpha}$ the $(1-\alpha)$ quantile of $\Psi_{\text{ext}}(B_{\widehat W})$ where $\widehat W$ is defined in \eqref{eq:defin:hatW}. We may then test the hypotheses \eqref{e:threshold}
via
\[
    \widehat R_{N}>\hat q_{1-\alpha}~.
\]

\subsection{Variance estimation for the multiscale statistic} \label{var:mult}
For the multiscale inference procedures we need a long-run covariance estimator that can handle smoothly and abruptly changing means in a high-dimensional setting. To that end we use the robust M estimator from Appendix A.1. in \cite{Li:Chen:Wang:Wu:2024} to estimate $\sigma_{\textbf{Y}}(s,t)$. We recall its precise definition for the convenience of the reader.

Group the $N$ observations into {$N_0=\lfloor N/m\rfloor-1$} blocks with each block having size $m\in\N$. We denote the index set of the observations in the $k$-th block by $M_k=\{km+1,...,(k+1)m\}$ ($k=0,...,N_0$) and define the average of the observations in the block $M_k$ as
$$\psi_k=\sum_{i\in M_k}Y_i/m.$$ Next, let $\psi_k=(\psi_{k,1},\psi_{k,2},\ldots,\psi_{k,p})^{\top}$, and for $k=1,\ldots,N_0$, we define 
\begin{equation}
    \label{eq:longrun:element}
    \hat\sigma_{i,j,k}=(m/2)(\psi_{k,i}-\psi_{k-1,i})(\psi_{k,j}-\psi_{k-1,j}).
\end{equation}
Now,  for some $\alpha_{i,j}>0$, we define the zero-objective function
\begin{equation}
    \label{eq_longrun_zero}
    h_{i,j}(u)=\sum_{k=1}^{N_0}\frac{1}{N_0}\phi_{\alpha_{i,j}}(\hat\sigma_{i,j,k}-u),
\end{equation}
where $\phi_{\alpha}(x)=\alpha^{-1}\phi(\alpha x)$ and 
\begin{equation}
    \label{eq_longrun_phi}
    \phi(x)= \begin{cases}
        \log(2), \qquad & x\ge 1,\\
        -\log(1-x+x^2/2), \qquad & 0\le x< 1,\\
    {\log(1+x+x^2/2)}, \qquad & -1\le x< 0,\\
        -\log(2), \qquad & x< -1.\\
    \end{cases}
\end{equation}
The function $|\phi(x)|$ is bounded by $\log(2)$, and it is also Lipschitz continuous with the Lipschitz constant bounded by 1. Additionally the influence function $\phi(x)$ has neat envelopes given by
\begin{equation}
    -\log(1-x+x^2/2) \le \phi(x) \le \log(1+x+x^2/2).
\end{equation}

To obtain a consistent estimator of the long-run covariance matrix $(\sigma_{\textbf{Y}}(t_i,t_j))_{1 \leq i,j \leq p}$, we solve the equation $h_{i,j}(u) = 0$ for each $1\le i,j\le p$, and use the root to be the estimator of the element with indices $(i,j)$ in the long-run covariance matrix. We denote this estimator by $\hat\sigma_{\textbf{Y}}(t_i,t_j)$. In the case t hat the equation $h_{i,j}(u) = 0$ has more than one solution, we may pick from them at random to define $\hat\sigma_{\textbf{Y}}(t_i,t_j)$. Finally, we collect the estimates of all the elements and combine them into the long-run covariance matrix,
\begin{equation}
    \label{eq_longrun_est}
    \hat \sigma_{\textbf{Y}}=(\hat\sigma_{\textbf{Y}}(t_i,t_j))_{1\le i,j\le p}
\end{equation}
In particular, we define $\bar\sigma_i=2\sum_{\lfloor N_0/4\rfloor \le k\le \lfloor 3N_0/4\rfloor}\sqrt{\hat\sigma_{i,i,k}}/N_0$ and set $\alpha_{i,j}$ in expression (\ref{eq_longrun_zero}) to be $\bar\sigma_i\bar\sigma_j\sqrt{m/N}$. Following \cite{Li:Chen:Wang:Wu:2024} we set $m=\sqrt{N/\log(Np)}$. The estimate for $\widehat \C$ is then again obtained by setting
\[
    \widehat \C(t_i,t_j)=\frac{N^2}{\hat N(t_i)\hat N(t_j)}\hat \sigma_{\textbf{Y}}(t_i,t_j)~.
\]

\subsection{Graphical details for the data analysis in Section \ref{sec:polution}} 
We provide some visual illustrations pertaining to our pollution data analysis in Section \ref{sec:polution}. 
  At each time of the day $t$ and day of the year $d$, we assign color if $r(t,k,l)>q_{1-\alpha}$, i.e. if the tuple $(t,d)$ is part of a significant scale. We put darker red colors for shorter scales $(k-l)$ and lighter yellow colors for larger $(k-l)$. Darker red colors indicate that we have high certainty that a threshold exceedance happened close to $(t,d)$ and yellow colors less certainty about the precise location. The heatmaps for $\Delta=50, 90, 180$ are gathered in Figure \ref{fig:exceedances}. 

\begin{figure}[H] 
  \centering
  \includegraphics[width=0.4\textwidth]{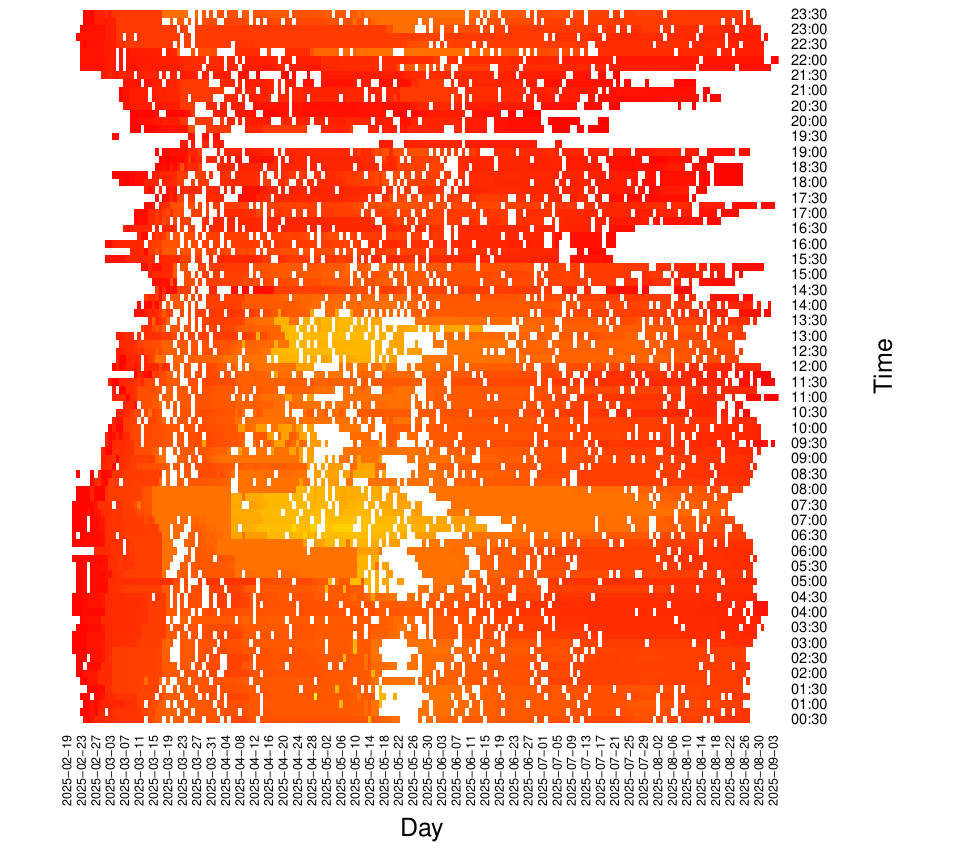}
  \includegraphics[width=0.4\textwidth]{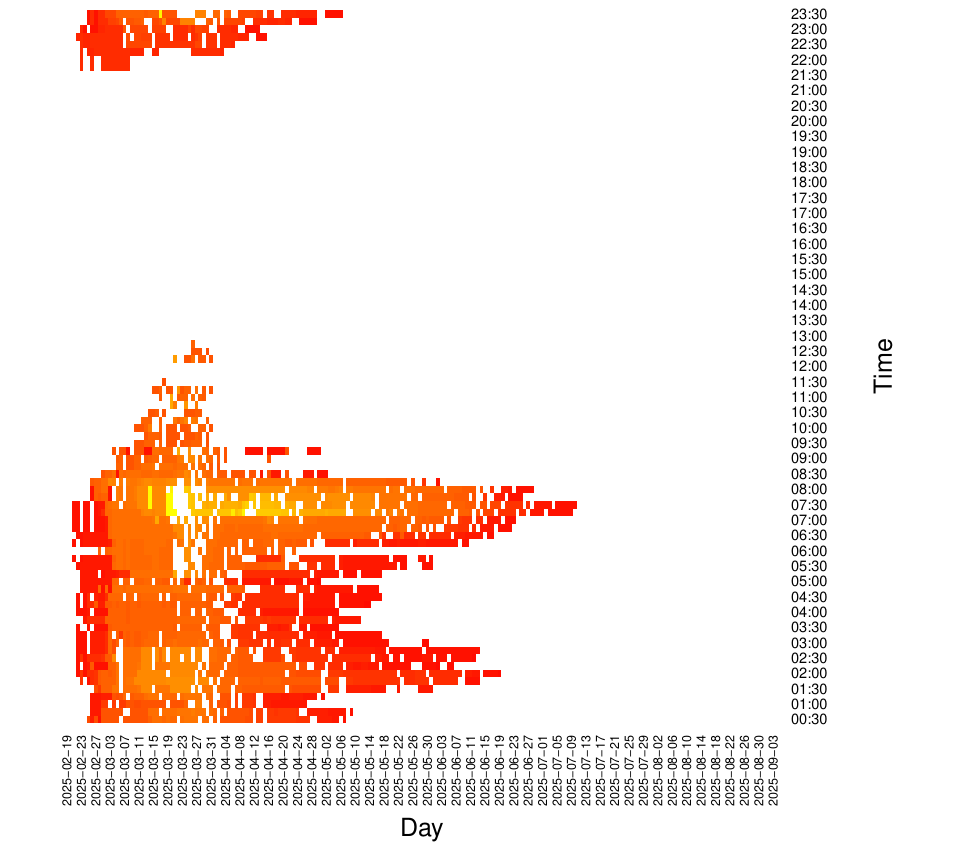}
  \includegraphics[width=0.4\textwidth]{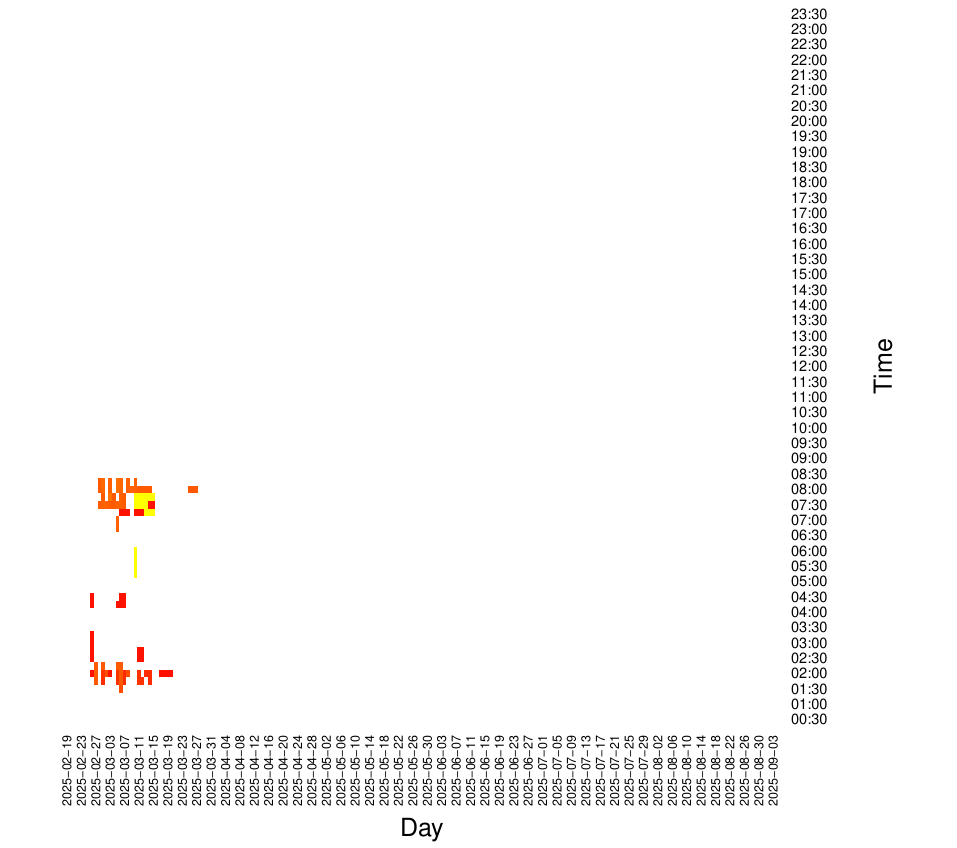}
  \includegraphics[width=0.4\textwidth]{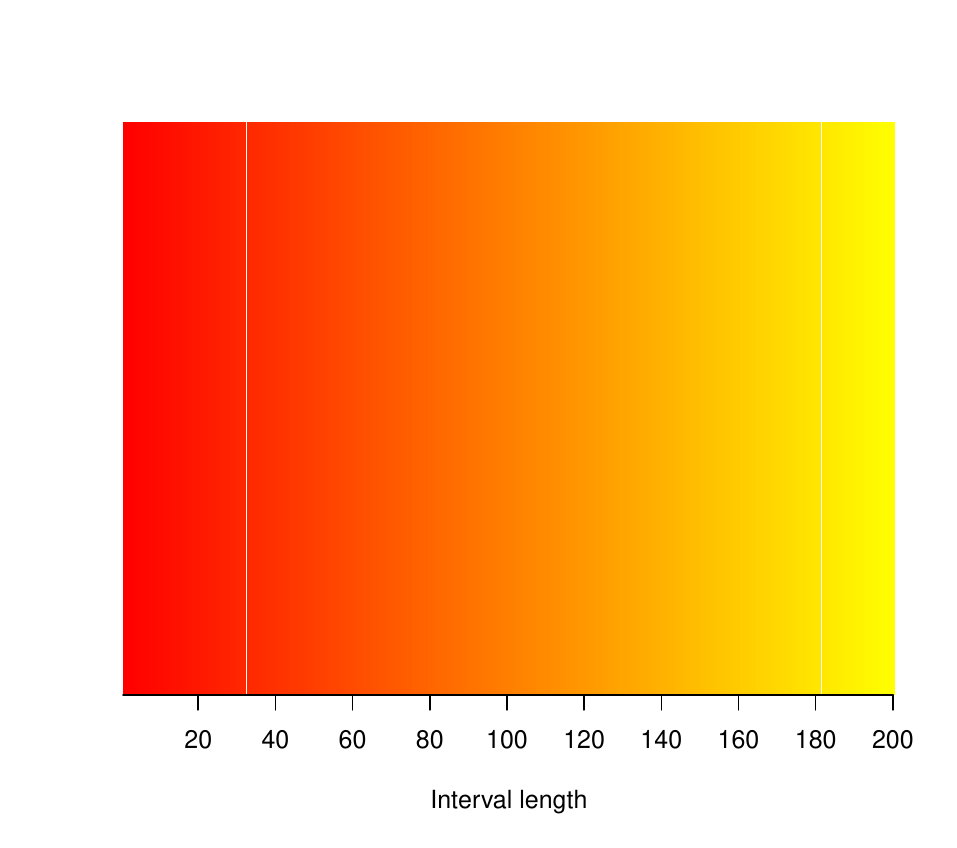}
  \caption{Heatmaps of exceedances for different thresholds $\Delta$ (upper left: 50, upper right: 90, lower left: 180). The color at a point encodes the total length of the smallest scale centered at that point for which the test statistic was significant.   \label{fig:exceedances}
  }

\end{figure}

\putbib
\end{bibunit}

\end{document}